\newcommand\nc\newcommand
\nc{\bb}[1]{\mathbb{#1}}
\renewcommand{\cal}[1]{\mathcal{#1}}
\renewcommand{\bf}[1]{\mathbf{#1}}
\DeclarePairedDelimiter{\set}{\lbrace}{\rbrace}
\DeclarePairedDelimiter{\br}{\lparen}{\rparen}
\DeclarePairedDelimiter{\brac}{\lbrack}{\rbrack}
\DeclarePairedDelimiter{\abs}{\lvert}{\rvert}
\nc\bfa{{\bf{a}}}\nc\bfA{{\boldsymbol A}}\nc\cA{{\cal A}} \nc\fA[1]{A\br*{#1}} \nc\fa[1]{a\br*{#1}}  \nc\rmA{\mathrm{A}} \nc\rma{\mathrm{a}}
\nc\bfb{{\bf{b}}}\nc\bfB{{\boldsymbol B}}\nc\cB{{\cal B}} \nc\fB[1]{B\br*{#1}} \nc\fb[1]{b\br*{#1}}  \nc\rmB{\mathrm{B}} \nc\rmb{\mathrm{b}}
\nc\bfc{{\bf{c}}}\nc\bfC{{\boldsymbol C}}\nc\cC{{\cal C}} \nc\fC[1]{C\br*{#1}} \nc\fc[1]{c\br*{#1}}  \nc\rmC{\mathrm{C}} \nc\rmc{\mathrm{c}}
\nc\bfd{{\bf{d}}}\nc\bfD{{\boldsymbol D}}\nc\cD{{\cal D}} \nc\fD[1]{D\br*{#1}} \nc\fd[1]{d\br*{#1}}  \nc\rmD{\mathrm{D}} \nc\rmd{\mathrm{d}}
\nc\bfe{{\bf{e}}}\nc\bfE{{\boldsymbol E}}\nc\cE{{\cal E}} \nc\fE[1]{E\br*{#1}} \nc\fe[1]{e\br*{#1}}  \nc\rmE{\mathrm{E}} \nc\rme{\mathrm{e}}
\nc\bff{{\bf{f}}}\nc\bfF{{\boldsymbol F}}\nc\cF{{\cal F}} \nc\fF[1]{F\br*{#1}} \nc\ff[1]{f\br*{#1}}  \nc\rmF{\mathrm{F}} \nc\rmf{\mathrm{f}}
\nc\bfg{{\bf{g}}}\nc\bfG{{\boldsymbol G}}\nc\cG{{\cal G}} \nc\fG[1]{G\br*{#1}} \nc\fg[1]{g\br*{#1}}  \nc\rmG{\mathrm{G}} \nc\rmg{\mathrm{g}}
\nc\bfh{{\bf{h}}}\nc\bfH{{\boldsymbol H}}\nc\cH{{\cal H}} \nc\fH[1]{H\br*{#1}} \nc\fh[1]{h\br*{#1}}  \nc\rmH{\mathrm{H}} \nc\rmh{\mathrm{h}}
\nc\bfi{{\bf{i}}}\nc\bfI{{\boldsymbol I}}\nc\cI{{\cal I}} \nc\fI[1]{I\br*{#1}} \nc\rmI{\mathrm{I}} \nc\rmi{\mathrm{i}}
\nc\bfj{{\bf{j}}}\nc\bfJ{{\boldsymbol J}}\nc\cJ{{\cal J}} \nc\fJ[1]{J\br*{#1}} \nc\fj[1]{j\br*{#1}} \nc\rmJ{\mathrm{J}} \nc\rmj{\mathrm{j}}
\nc\bfk{{\bf{k}}}\nc\bfK{{\boldsymbol K}}\nc\cK{{\cal K}} \nc\fK[1]{K\br*{#1}} \nc\fk[1]{k\br*{#1}} \nc\rmK{\mathrm{K}} \nc\rmk{\mathrm{k}}
\nc\bfl{{\bf{l}}}\nc\bfL{{\boldsymbol L}}\nc\cL{{\cal L}} \nc\fL[1]{L\br*{#1}} \nc\fl[1]{l\br*{#1}} \nc\rmL{\mathrm{L}} \nc\rml{\mathrm{l}}
\nc\bfm{{\bf{m}}}\nc\bfM{{\boldsymbol M}}\nc\cM{{\cal M}} \nc\fM[1]{M\br*{#1}} \nc\fm[1]{m\br*{#1}} \nc\rmM{\mathrm{M}} \nc\rmm{\mathrm{m}}
\nc\bfn{{\bf{n}}}\nc\bfN{{\boldsymbol N}}\nc\cN{{\cal N}} \nc\fN[1]{N\br*{#1}} \nc\fn[1]{n\br*{#1}} \nc\rmN{\mathrm{N}} \nc\rmn{\mathrm{n}}
\nc\bfo{{\bf{o}}}\nc\bfO{{\boldsymbol O}}\nc\cO{{\cal O}} \nc\fO[1]{O\br*{#1}} \nc\fo[1]{o\br*{#1}} \nc\rmO{\mathrm{O}} \nc\rmo{\mathrm{o}}
\nc\bfp{{\bf{p}}}\nc\bfP{{\boldsymbol P}}\nc\cP{{\cal P}} \nc\fP[1]{P\br*{#1}} \nc\fp[1]{p\br*{#1}} \nc\rmP{\mathrm{P}} \nc\rmp{\mathrm{p}}
\nc\bfq{{\bf{q}}}\nc\bfQ{{\boldsymbol Q}}\nc\cQ{{\cal Q}} \nc\fQ[1]{Q\br*{#1}} \nc\fq[1]{q\br*{#1}} \nc\rmQ{\mathrm{Q}} \nc\rmq{\mathrm{q}}
\nc\bfr{{\bf{r}}}\nc\bfR{{\boldsymbol R}}\nc\cR{{\cal R}} \nc\fR[1]{R\br*{#1}} \nc\fr[1]{r\br*{#1}} \nc\rmR{\mathrm{R}} \nc\rmr{\mathrm{r}}
\nc\bfs{{\bf{s}}}\nc\bfS{{\boldsymbol S}}\nc\cS{{\cal S}} \nc\fS[1]{S\br*{#1}} \nc\fs[1]{s\br*{#1}} \nc\rmS{\mathrm{S}} \nc\rms{\mathrm{s}}
\nc\bft{{\bf{t}}}\nc\bfT{{\boldsymbol T}}\nc\cT{{\cal T}} \nc\fT[1]{T\br*{#1}} \nc\ft[1]{t\br*{#1}} \nc\rmT{\mathrm{T}} \nc\rmt{\mathrm{t}}
\nc\bfu{{\bf{u}}}\nc\bfU{{\boldsymbol U}}\nc\cU{{\cal U}} \nc\fU[1]{U\br*{#1}} \nc\fu[1]{u\br*{#1}} \nc\rmU{\mathrm{U}} \nc\rmu{\mathrm{u}}
\nc\bfv{{\bf{v}}}\nc\bfV{{\boldsymbol V}}\nc\cV{{\cal V}} \nc\fV[1]{V\br*{#1}} \nc\fv[1]{v\br*{#1}} \nc\rmV{\mathrm{V}} \nc\rmv{\mathrm{v}}
\nc\bfw{{\bf{w}}}\nc\bfW{{\boldsymbol W}}\nc\cW{{\cal W}} \nc\fW[1]{W\br*{#1}} \nc\fw[1]{w\br*{#1}} \nc\rmW{\mathrm{W}} \nc\rmw{\mathrm{w}}
\nc\bfx{{\bf{x}}}\nc\bfX{{\boldsymbol X}}\nc\cX{{\cal X}} \nc\fX[1]{X\br*{#1}} \nc\fx[1]{x\br*{#1}} \nc\rmX{\mathrm{X}} \nc\rmx{\mathrm{x}}
\nc\bfy{{\bf{y}}}\nc\bfY{{\boldsymbol Y}}\nc\cY{{\cal Y}} \nc\fY[1]{Y\br*{#1}} \nc\fy[1]{y\br*{#1}} \nc\rmY{\mathrm{Y}} \nc\rmy{\mathrm{y}}
\nc\bfz{{\bf{z}}}\nc\bfZ{{\boldsymbol Z}}\nc\cZ{{\cal Z}} \nc\fZ[1]{Z\br*{#1}} \nc\fz[1]{z\br*{#1}} \nc\rmZ{\mathrm{Z}} \nc\rmz{\mathrm{z}}
\DeclareMathOperator{\Rank}{rank}
\DeclareMathOperator{\spn}{span}
\nc\defeq{\coloneqq}
\newcommand{\rank}[1]{\Rank\br*{#1}}
\DeclarePairedDelimiterX\Set[1]\{\}{#1}
\newcommand\F{{\mathbb F}}
\newcommand\integers{{\mathbb Z}}
\newtheorem{theorem}{Theorem}
\newtheorem*{theorem*}{Theorem}
\crefname{definition}{defn.}{defns}
\newtheorem{lemma}[theorem]{Lemma}
\newtheorem*{lemma*}{Lemma}
\newtheorem{corollary}[theorem]{Corollary}
\newtheorem{corollary*}[theorem]{Corollary}
\newtheorem*{remark*}{Remark}
\theoremstyle{definition}
\newtheorem{property}{Properties}
\crefname{Appendix}{Appendix}{Appendices}
\begin{document}
\sloppy

\title{Linear Programming Approximations for \\Index Coding}
\author{\IEEEauthorblockN{Abhishek Agarwal}\hspace*{1cm}
\IEEEauthorblockN{Larkin Flodin}\hspace*{1cm}
\IEEEauthorblockN{Arya Mazumdar}
}

\allowdisplaybreaks
\sloppy
\maketitle

{\renewcommand{\thefootnote}{}\footnotetext{

\vspace{-.2in}
 
\noindent\rule{1.5in}{.4pt}

College of Information and Computer Sciences, University of Massachusetts Amherst. 
\texttt{\{abhiag,lflodin,arya\}@cs.umass.edu}. This work is supported in part by an NSF CAREER award CCF 1642658 and NSF award CCF 1618512. A part of the paper was presented in the Network Coding and Applications (NetCod 2016), IEEE GLOBECOM Workshop, Dec 2016 \cite{agarwal2016local}.
}
\renewcommand{\thefootnote}{\arabic{footnote}}
\setcounter{footnote}{0}
\newcommand{\VC}{\textrm{VC}}
\newcommand{\FVC}{\textrm{FVC}}
\newcommand{\CP}{\textrm{CP}}
\newcommand{\FCP}{\textrm{FCP}}
\newcommand{\CC}{\textrm{CC}}
\newcommand{\FCC}{\textrm{FCC}}
\renewcommand{\Cap}{\textrm{Cap}}
\newcommand{\Ind}{\textrm{Ind}}
\newcommand{\MM}{\textrm{MM}}
\newcommand{\FMM}{\textrm{FMM}}
\newcommand{\G}{\mathcal{G}}
\newcommand{\MAIS}{\textrm{MAIS}}

\begin{abstract}
Index coding, a source coding problem over broadcast channels, has been a subject of both theoretical and practical interest since its introduction (by Birk and Kol, 1998). In short, the problem can be defined as follows: there is an input $P \triangleq (p_1, \dots, p_n)$, a set of $n$ clients who each desire a single entry $p_i$ of the input, and a broadcaster whose goal is to send as few messages as possible to all clients so that each one can recover its desired entry. Additionally, each client has some predetermined ``side information,'' corresponding to certain entries of the input $P$, which we represent as the ``side information graph'' $\G$. The graph $\G$ has a vertex $v_i$ for client $i$ and a directed edge $(v_i, v_j)$ indicating that client $i$ knows the $j$th entry of the input. Given a fixed side information graph $\G$, we are interested in determining or approximating the ``broadcast rate'' of index coding on the graph, i.e. the least number of messages the broadcaster can transmit so that every client recovers its desired information. The complexity of determining this broadcast rate in the most general case is open, and the best known approximations are barely better than the trivial $O(n)$-approximation corresponding to sending each client their information directly without performing any coding.

Using index coding schemes based on linear programs (LPs), we take a two-pronged approach to approximating the broadcast rate. First, extending earlier work on planar graphs, we focus on approximating the broadcast rate for special graph families such as graphs with small chromatic number and disk graphs. In certain cases, we are able to show that simple LP-based schemes give constant-factor approximations of the broadcast rate, which seem extremely difficult to obtain in the general case. Second, we provide several LP-based schemes for the general case which are not constant-factor approximations, but which strictly improve on the best-known schemes. These can be viewed as both a strengthening of the constant-factor approximations proven for special graph families (as these schemes strictly improve on those which we prove are good approximations), as well as another tool that can be used either in practice or in future theoretical analyses.
\end{abstract}

\begin{IEEEkeywords}
Information theory, linear programming, network coding, approximation algorithms, graph theory, source coding.
\end{IEEEkeywords}

\section{Introduction}
\label{sec:intro}

Index coding is a particular form of network coding that was first introduced by Birk and Kol \cite{birk1998informed}, and has since been shown to be in some sense as difficult as any other network coding problem \cite{DBLP:journals/tit/RouayhebSG10}. It is a multiuser communication problem in which a broadcaster aims to transmit data to many users. While the users are unable to communicate amongst themselves, some of them already possess data desired by other users, which we call the ``side information.'' The goal is then to design transmission schemes for the broadcaster and corresponding decoding schemes for the users that exploit this side information in order to get each user their desired data in a minimum number of broadcaster transmissions.

More formally, we have a set $C = \set{1, 2, \dotsc, n}$ of clients which we refer to simply by number, and each client desires the corresponding message from the set $P = \set{p_1, p_2, \dotsc, p_n}$, where each message $p_i$ belongs to an alphabet $\Sigma$ with $|\Sigma| =q$.  Additionally, each client has some side information $\Gamma_i \subseteq P$. We define the (directed) side information graph of the index coding instance to be the graph $\G$ with vertices $v_1, v_2, \dotsc, v_n$ corresponding to clients, and edges $(v_i, v_j)$ whenever $p_j \in \Gamma_i$.  Then the goal is for a broadcaster  to transmit $l$ messages, each belonging to $\Sigma$, simultaneously to all clients so that every client $i$ can reconstruct $p_i$ as a function of $\Gamma_i$ and the $l$ messages sent by the broadcaster.

Specifically, if there exists an encoding function $f: \Sigma^n \rightarrow \Sigma^l,$ and decoding functions $g_i: \Sigma^{l} \times \Sigma^{|\Gamma_i|} \rightarrow \Sigma, i = 1, 2,\dotsc, n,$ such that $g_i(f(p_1, p_2, \dotsc, p_n), \Gamma_i) = p_i$ for each $i$, 
then we say this is a solution to the index coding problem on $\G$ in $l$ rounds. The minimal number of rounds needed to obtain a solution also depends on $q$, the size of the alphabet. We define $\Ind_q(\G)$ to be the minimum number of rounds $l$ such that a solution exists on $G$ in $l$ rounds over an alphabet $\Sigma$ of size $q$. We then define the index coding rate or the broadcast rate of the graph $\G$ as
\begin{equation}
\Ind(\G) = \inf_{q\ge 2} \Ind_q(\G).
\end{equation}

Some special types of index coding scheme require attention before we continue further. Suppose $\Sigma = \F^m$ for some finite field $\F$ and the encoding function is linear over $\F$. If $m=1$ and the broadcaster sends only linear combinations of the messages $p_i \in \F$, the message-sending scheme is called \textbf{scalar linear}. For $m >1$, if the broadcaster is allowed to break up the messages in $\F^m$ into smaller packets in $\F$ and transmit linear combinations of the packets, the scheme is called \textbf{vector linear}. To be more precise, for scalar linear schemes, the encoding function consists of $l$ different functions $f_j:\F^n \rightarrow \F, j =1, 2, \dotsc, l$, where each function is an $\mathbb{F}$-linear combination of the arguments. For vector linear schemes,  the encoding function consists of $ml$ different functions $f_j:\F^{mn} \rightarrow \F, j =1, 2, \dotsc, ml$, where each function is an $\mathbb{F}$-linear combination of the arguments.
All scalar linear schemes are also vector linear schemes. If a scheme is not vector linear, it is called \textbf{nonlinear}.
 In this paper we will focus on the quality of solutions relative to the best possible nonlinear scheme, although all schemes we provide are vector linear.

\subsection{Related Work}

Without any restriction on the graph $\G$ or the encoding function, no bounded time algorithm is known for finding $\Ind(\G)$ exactly, as little is understood about the speed at which the rates converge (therefore even an exponential-time algorithm to estimate $\Ind(\G)$ is of interest). This is in contrast to the scalar linear case with fixed alphabet size, in which the broadcast rate is known to be equal to another graph parameter called ``minrank,'' and finding this quantity exactly is known to be in NP \cite{bar2011index}. The best known approximation factor in general is $O(n \frac{\log \log n}{\log n})$ (i.e. the scheme returned by the algorithm has rate at most a multiplicative factor of $O(n \frac{\log \log n}{\log n})$ larger than $\Ind(\G)$) \cite{DBLP:journals/corr/abs-1004-1379}, barely improving on the trivial factor $n$ approximation obtained by broadcasting each client's message individually. 
In \cite{chlamtavc2014linear}, for a graph with minrank $k$, a scalar linear index coding scheme with an approximation factor of $n^{1-\epsilon_k}, \epsilon_k \to 0$ as $k \to \infty$, was provided which is nontrivial for a constant $k$. 
In the negative direction, it has been shown that finding any constant-factor approximation of $\Ind(\G)$ in general is at least as hard as some well-known open problems in graph coloring \cite{DBLP:journals/tit/LangbergS11}. In this paper, we explore two different approaches to make progress despite this difficulty. The first approach is to restrict the side information structure to some specific type of graph, and attempt to exploit its properties to attain better approximations than what are possible in general. The second is to find ways of strictly improving the existing schemes for the general case, though we cannot quantify the improvement asymptotically.

For perfect graphs (a class including all bipartite graphs which will be defined in \cref{sec:defs}), it has been known for some time that the index coding rate can be computed exactly, as it is sandwiched between two graph parameters that are equal \cite{bar2011index}. For more general classes than this, exactly computing the broadcast rate seems too much to ask, and we seek instead to approximate it as best as possible. There has been some work already in the area of approximating $\Ind(\G)$ for restricted graph classes: in \cite{Arbabjolfaei016}, Arbabjolfaei and Kim show a simple 4-approximation of $\Ind(\G)$ (meaning the returned solution has rate at most $4 \cdot \Ind(\G)$) for undirected planar graphs; in \cite{DBLP:journals/corr/Mazumdar0V17} Mazumdar et al.  improve this to obtain a 2-approximation of $\Ind(\G)$ for undirected planar graphs. In the (even more restricted) outerplanar case, while the scalar linear index coding rate with a fixed-size alphabet is studied in \cite{DBLP:conf/isit/BerlinerL11} (it is in fact shown to be equal to the size of the minimum clique cover of $\G$), the nonlinear rate has not been studied beyond the known results for planar graphs. In general, it has been shown that the linear and nonlinear index coding rates can be extremely far apart, so the nonlinear case merits study even when the linear case is solved \cite{DBLP:journals/tit/LubetzkyS09,blasiak2011lexicographic}. The main technique used to approximate $\Ind(\G)$ for planar graphs is to exploit the ``dual'' relationship between $\Ind(\G)$ and another, easier to approximate quantity called the storage capacity, or $\Cap(\G)$, which was introduced in \cite{DBLP:journals/tit/Mazumdar15}. The relationship between these quantities is also used in \cite{DBLP:journals/corr/Mazumdar0V17} to show some lower bounds on $\Ind(\G)$ for very restricted graph classes such as odd cycles. We will make use of this general technique as well, and will define $\Cap(\G)$ and explore its relationship with $\Ind(\G)$ further in \cref{sec:defs}.

In the general  case (recall this includes directed graphs), there have been a series of works providing increasingly better schemes. Birk and Kol \cite{birk1998informed} provided the first such scheme when introducing the problem, the ``clique cover'' scheme, in which the side information graph is covered by as few vertex-disjoint cliques as possible. In this scheme the broadcaster transmits a single message for each clique, which is the sum (as vectors with entries in $\mathbb{F}_q$) of the vectors desired by each node in the clique. Such a clique covering is equivalent to a proper coloring of the complementary graph. This idea was further extended in \cite{shanmugam2014} to show that in fact a weaker notion of coloring called a ``local coloring'' of the complementary graph yields an index coding scheme as well. Another generalization of the clique cover scheme that was known as early as \cite{birk1998informed} is to instead cover by ``partial cliques,'' which are nearly-complete subgraphs.

More recently in \cite{arbabjolfaei2014local}, ideas from both the local coloring and partial clique cover schemes were merged into a linear program (LP)-based scheme which outperforms both schemes individually. We continue in this line of work, showing a novel LP-based index coding scheme which combines ideas from previous schemes in order to obtain strictly better performance. Our scheme can also be extended to generalize the scheme proposed in \cite{thapa2015generalized}, which proposed to cover the side information graph by a type of generalized cycle, rather than by cliques or partial cliques.

\subsection{Contributions}

All our contributions consist of (vector linear) index coding schemes, in various settings, as opposed to lower bounds on $\Ind(\G)$. Additionally, all our schemes correspond to solutions of particular linear programs, which will be described in more detail in \cref{sec:defs,sec:results}. For special graph families, we have chosen to focus specifically on undirected graphs, both for the sake of simplicity and for parity, as one family we consider (disk graphs) has no directed analogue. In the general case, we consider directed graphs as well.

\subsubsection{Approximations for Special Graph Families}
Continuing the line of work in \cite{DBLP:journals/corr/Mazumdar0V17}, we generalize beyond the case of undirected planar graphs to any undirected graph with small chromatic number. 
We prove new bounds on $\Cap(\G)$ and $\Ind(\G)$ that recover the results of \cite{DBLP:journals/corr/Mazumdar0V17} for planar graphs, give superior results for 3-colorable graphs, and also give constant-factor approximations for graphs with constant chromatic number $> 4$. The techniques used for these types of graph and the barriers to progress that seem to arise give insight about other cases as well; as evidence of this, we use some of the same bounds used to prove results about $k$-colorable graphs in order to improve the best known approximation of $\Ind(\G)$ for undirected sparse graphs with $o(n^2)$ edges.

The other main graph class we consider is more practically motivated. If our graph arises from thresholding the latencies between pairs of servers to 0 or 1, and these latencies roughly correspond to physical distances between servers in the real world, then we should expect two servers that are physically close to have an edge between them, and two servers that are far apart to not have an edge between them. This is very close to the notion of a ``unit disk graph,'' which is a graph formed by placing points in the plane that correspond to the vertices, and having an edge between two vertices whenever the corresponding points are less than some distance apart (we define this more formally in the next section). These graphs are thought to be good approximations of certain kinds of real-world networks, and in particular have seen widespread use in the area of scheduling problems for broadcast networks \cite{hale1980frequency, huson1995broadcast}. In this setting there are many broadcasters which each have some radius in which they broadcast, and we may wish to, for instance, assign frequencies to each broadcaster so that no two broadcasters in the same area are broadcasting on the same frequency. This can be viewed as a coloring problem on a disk graph, where colors correspond to frequencies, and broadcasters correspond to vertices.

There are also prior examples of hard problems which are very difficult to approximate for general graphs, but for which good approximations exist when restricting to unit disk graphs; for example, it is well-known that maximum independent set cannot be approximated within any constant factor (in polynomial time) in general, but when restricting to unit disk graphs there is a polynomial time approximation scheme \cite{DBLP:journals/jal/HuntMRRRS98}. We show improvements over the general approximation of $\Cap(\G)$ for a superclass of unit disk graphs, as well as constant-factor but potentially inefficient approximations of $\Ind(\G)$ for unit disk graphs, which can be made efficient in some special cases.

\subsubsection{Improved Schemes for the General Case}

One of the earliest index coding schemes for the general case is the simple ``clique cover'' scheme, and since its introduction various different generalizations have been provided, such as ``local graph coloring'' and the ``partial clique cover'' scheme. Our work, expanding on a previously introduced interference alignment approach, gives a method that combines many of these ``orthogonal'' generalizations together. We give an example of a side information graph which shows that our new method can provide strict improvement over previous approaches. Furthermore, using ideas and tools from the previous scheme we further generalize another scheme which exploits what are called ``Generalized Interlinked Cycles'' in the side information graph.

\subsubsection{Paper Overview}
The remainder of the paper is organized as follows:
\begin{itemize}
\item In \cref{sec:defs} we introduce some definitions and notation that is needed to state and prove our main results.
\item In \cref{sec:results} we summarize our main results, including several constant-factor approximations for special graph families, and improved schemes for the general case. Proofs are postponed until the next section.
\item In \cref{sec:proofs} we state and prove bounds from which the quality of our approximations follows for special graph families, and prove the correctness of the schemes for the general case. For the special graph families, many of the bounds proved here actually imply good approximations for more general classes of graph than those focused on in the previous section, but we have chosen to highlight the results for those specific types of graph for greater clarity of exposition.
\item In \cref{sec:IC_constructions} we provide detailed constructions of the improved schemes for general graphs presented in \cref{sec:results}.
\item In \cref{sec:further} we explore some difficulties in improving certain results further, including examples that demonstrate barriers to the success of some current proof techniques. We also discuss several interesting open questions and potential improvements to our results.
\end{itemize}

\section{Prerequisites}
\label{sec:defs}

Let us define our notation for sets, graphs, vectors, and matrices at the outset.
\begin{itemize}
	\item For any $n \in \integers^+$, $[n] \triangleq \set{1,2,\ldots,n}$.
	\item For any $n \in \integers^+$, $[m,n] \triangleq \set{m,m+1,\ldots,n}, m\leq n$.
	\item The complement of a set $A$ is denoted by $\overline{A}$.
	\item For a graph $\cG$, $\overline{\cG}$ denotes the directed complement of $\cG$.
	\item  For any set $A = \set{i_1, i_2, \ldots, i_r} \subseteq [n]$ and set of vectors $\set{\bf{v}_i}_{i\in [n]}$, $\bf{v}_{A}$ denotes the set $\set{\bf{v}_{j}}_{j\in A}$ and $\bf{v}_{[A]}$ denotes the matrix $[\bf{v}_{i_1}\; \bf{v}_{i_2}\; \ldots \bf{v}_{i_r}]$. For a matrix $G \in \F^{k\times n}$, $G_{[B]}$ denotes the sub-matrix of $G$ constructed from the columns of $G$ corresponding to $B \subseteq [n]$.
	\item   For a graph $\G$, $N(v,\cG) \subseteq V(\cG)\setminus \set{v}$ denotes the set of out-neighbors of $v\in V$. When the graph is clear from context, we shorten this to $N(v)$.
	\item An $[n,k]$-MDS matrix is a matrix in $\F^{k\times n}$, $\F$ any field, with the property that any $k$ column vectors of the matrix are linearly independent.
\end{itemize}

Given a graph $\G$ and a subset of vertices $V' \subseteq V(\G)$, we write $\G \rvert_{V'}$ to mean the subgraph of $\G$ induced on $V'$. We write $\alpha(\G)$ for the size of the maximum independent set of $\G$, i.e., the size of the largest set $V'$ such that $\G \rvert_{V'}$ is edgeless. Many of our results give approximations with quality depending on the \textbf{chromatic number} $\chi(\G)$, the minimum number of colors needed to color the vertices of $\G$ such that no two adjacent vertices have the same color (such a coloring is called a ``proper coloring''). Some results also make use of a related quantity, called the \textbf{local chromatic number} $\chi_l(\G)$, which is the maximum number of colors in any out-neighborhood $N(v,\cG)\cup v$ of a vertex $v \in V(\cG)$, minimized over all proper colorings of $\cG$. A few results depend also on the size of the largest clique (complete subgraph) in $\G$, the \textbf{clique number} written $\omega(\G)$.

A \textbf{planar graph} is a graph with an embedding into the plane such that no two edges cross. An \textbf{outerplanar graph} is a planar graph, with the additional restriction that it has an embedding into the plane such that all vertices lie on the exterior face of the graph (i.e. a drawing exists with no vertex enclosed by edges). A \textbf{perfect graph} is a graph with the property that for every induced subgraph $\G'$, $\omega(\G') = \chi(\G')$. This class includes all bipartite graphs, and it is also known that the complement of every perfect graph is perfect.

Another type of graph we consider here are ``disk graphs,'' often thought to be good models of real-world networks where connections between nodes are based on their proximity in some metric. Disk graphs are a special case of \textbf{geometric intersection graphs}; these are the graphs which can be formed by placing shapes (usually of some restricted form) in the plane (or sometimes a higher dimensional space), then associating each shape with a vertex, and defining two vertices to have an edge whenever their corresponding shapes overlap (or touch at a single point). Any layout of shapes in the plane which corresponds to a specific graph $\G$ in this way is called a \textbf{geometric representation} of $\G$. Whenever a graph has such a geometric representation, we say it is an \textbf{intersection graph}. In a \textbf{disk graph}, we require that the graph has a geometric representation where all shapes are circles, but of possibly varying sizes. In a \textbf{unit disk graph}, or UDG, we further require that all such circles have unit radius, i.e. radius 1. We will even consider a special case of unit disk graphs, introduced in \cite{DBLP:journals/tcs/DiazK07}, called \textbf{$\lambda$-precision unit disk graphs}, which are those unit disk graphs for which there exists a geometric representation where every pair of disk centers is distance at least $\lambda$ from one another.
 
We say a subset $V' \subseteq V(\G)$ of vertices is a vertex cover of $\G$ if every edge of the graph includes some vertex in $V'$. We denote by $\VC(\G)$ the minimum size of all such covers. We can relax the notion of a vertex cover to the following LP, of which we refer to the solution as the \textbf{minimum fractional vertex cover}, with value $\FVC(\G)$:
 \begin{align*}
\textrm{min.} \hspace{1cm} & \sum_{v \in V(\G)} x_v &\\
\textrm{s.t.} \hspace{1cm} & x_u + x_v \geq 1 & \textrm{for every edge $(u,v) \in E(\G)$}  \\
& 0 \leq x_v \leq 1 & \forall v \in V(\G).
 \end{align*}
 
 A matching $E' \subseteq E(\G)$ in a graph $\G$ is a subset of edges with the property that no vertex of $\G$ is adjacent to more than one edge of $E'$. We write $\MM(\G)$ for the size of the maximum matching of $\G$. Similar to vertex cover, we can relax this notion to the following LP for \textbf{fractional maximum matching}, the optimal value of which we denote by $\FMM(\G)$:
\begin{align*}
\textrm{max.} \hspace{1cm} & \sum_{e \in E(\G)} y_e &\\
\textrm{s.t.} \hspace{1cm} & \sum_{e \in E(\G) : v \in e} y_e \leq 1 & \forall v \in V(\G)  \\
& 0 \leq y_e \leq 1 & \forall e \in E(\G).
 \end{align*}
It is well-known that this is the dual LP to that for fractional vertex cover, and thus by duality we have for any graph $\FMM(\G) = \FVC(\G)$.
 
As mentioned briefly in \cref{sec:intro}, it will be useful for us to consider another graph parameter which turns out to be closely related to the index coding rate, called the storage capacity of the graph, or $\Cap(\G)$. Intuitively, the storage capacity corresponds to the maximum size of an error-correcting code in which each vertex of the graph stores a symbol from $\mathbb{F}_q$, and we require that if any single vertex fails (in a detectable way) and its data becomes inaccessible, the $q$-ary symbol stored at it can be recovered as a function of only that vertex's neighbors in the graph. Thus if the graph is complete, this reduces to the notion of a single-erasure correcting code, as then there are no restrictions on which locations can be accessed to recover.

Formally, we say a set of codewords $\mathcal{C} \subseteq \Sigma^n$ is a recoverable distributed storage system code for the graph $\G$ with $|V(\G)| = n$ over alphabet $\Sigma, |\Sigma| = q,$ if there exist decoding functions $g_i: \Sigma^{|N(v_i)|} \rightarrow \Sigma, i =1,2, \dotsc, n$ such that for any codeword $X = (X_1, X_2, \dotsc, X_n) \in \mathcal{C}$, $g_i(\set{X_j : j \in N(v_i)}) = X_i$ for all $i$. We are primarily interested in the question of how large any such code can be over some particular network; to this end we define the  storage capacity
\begin{equation}
\Cap_q(\G) = \max_\mathcal{C} \log_{q} |\mathcal{C}|
\end{equation}
where the maximum is taken over all recoverable distributed storage system codes over an alphabet of size $q$, and 
we then define the overall capacity to be
\begin{equation}
\Cap(\G) = \sup_{q \ge 2} \Cap_q(\G).
\end{equation}

One of the main results of \cite{DBLP:journals/tit/Mazumdar15} proves the following somewhat unexpected dual relationship between the storage capacity and the index coding rate for $\G$ with $|V(\G)| = n$:
\begin{equation}
\Cap(\G) = n - \Ind(\G).
\end{equation}
Thus finding either quantity exactly is equivalently hard, though there is no reason to expect the two to be equally hard to approximate, and indeed it seems generally to be the case that $\Ind(\G)$ is much harder to approximate than $\Cap(\G)$. We will see later on that we are sometimes able to exploit the relationship between these two quantities to give guarantees about the quality of certain approximations -- in particular leveraging bounds on $\Cap(\G)$ to get at the otherwise difficult to approximate $\Ind(\G)$.

It is also shown in \cite{DBLP:journals/tit/Mazumdar15} that $\Cap(\G)$ is sandwiched between the size of the maximum matching of $\G$ and the minimum vertex cover of $\G$, which is used in proving the results for planar graphs. The fact that taking one vertex from each edge in a maximum matching yields a feasible vertex cover implies these two quantities are at most factor 2 apart, so this yields a simple 2-approximation of $\Cap(\G)$ for any graph. Thus when we try to approximate $\Cap(\G)$ for restricted $\G$, we are primarily interested in improving on the 2-approximation, whereas for $\Ind(\G)$, almost any nontrivial approximation is of interest.

The primary quantity we will use to approximate the storage capacity of a graph is the \textbf{maximum fractional clique packing} of $\G$, an LP relaxation of clique packing in which we try to pack as many large cliques within $\G$ as possible. Specifically, we write $\FCP(\G)$ for the solution to the following LP, where $\mathcal{K}$ denotes the set of all cliques in $\G$:
\begin{align*}
\textrm{max.} \hspace{1cm} & \sum_{C \in \mathcal{K}} x_C(|C| - 1) &\\
\textrm{s.t.} \hspace{1cm} & \sum_{C \in \mathcal{K} : v \in C} x_C \leq 1 & \forall v \in V(\G) \\
& 0 \leq x_C \leq 1 & \forall C \in \mathcal{K}.
 \end{align*}
 Note that in general we may not be able to compute the solution to this LP efficiently without a bound on the size of the largest clique in $\G$. The main reason $\FCP(\G)$ proves useful as an approximation of $\Cap(\G)$ is due to the bound
 \begin{equation*}
 \FCP(\G) \leq \Cap(\G)
 \end{equation*}
 shown in \cite{DBLP:journals/corr/Mazumdar0V17}. For approximating the index coding rate of $\G$ rather than the capacity, we will use the complementary quantity $\FCC(\G)$, the size of the \textbf{minimum fractional clique cover} of $\G$, where we instead seek to use as few cliques as possible in order to cover every vertex of $\G$ by some clique. This quantity is equal to the solution of the following LP:
\begin{align*}
\textrm{min.} \hspace{1cm} & \sum_{C \in \mathcal{K}} y_C &\\
\textrm{s.t.} \hspace{1cm} & \sum_{C \in \mathcal{K} : v \in C} y_C \geq 1 & \forall v \in V(\G) \\
& 0 \leq y_C \leq 1 & \forall C \in \mathcal{K}.
 \end{align*}
 It is a simple exercise to see that $\FCC(\G) = n - \FCP(\G)$, so we will sometimes use these two notations interchangeably depending on what is most convenient. The above relationship between $\FCP(\G)$ and $\Cap(\G)$ also immediately yields the upper bound
 \begin{equation}
 \Ind(\G) \leq \FCC(\G),
 \end{equation}
which has been known for some time in the index coding literature \cite{DBLP:journals/corr/abs-1004-1379}. 
 
 Another bound on $\Ind(\G)$ which we will rely on heavily in our approximations, first shown in \cite{bar2011index}, is that $\Ind(\G)$ is lower bounded by the size of the \textbf{maximum acyclic induced subgraph} of $\G$, or $\MAIS(\G)$. For undirected $\G$, $\MAIS(\G) = \alpha(\G)$, but in general for directed $\G$ we have only $\alpha(\G) \leq \MAIS(\G)$, as every independent set clearly induces an acyclic subgraph. So it is always true that
 \begin{equation}
 \alpha(\G) \leq \MAIS(\G) \leq \Ind(\G) \leq \FCC(\G).
 \end{equation}
 From this we can see why it is easy to find $\Ind(\G)$ exactly if $\G$ is perfect, as then $\overline{\G}$ is perfect also, so if we write $\CC(\G)$ for the minimum integral clique cover of $\G$, we have
 \begin{equation}
 \omega(\overline{\G}) = \alpha(\G) \leq \Ind(\G) \leq \FCC(\G) \leq \CC(\G) = \chi(\overline{\G}),
 \end{equation}
 and the leftmost and rightmost terms are equal as $\overline{\G}$ is perfect. While both $\omega(G)$ and $\chi(G)$ are NP-hard to compute in general, we can instead compute any more nicely-behaved quantity sandwiched between them, such
 as the Lov\'asz theta function $\theta(\G)$.
 
Finally, we will in certain cases wish to cover the graph instead by a generalization of a clique, called a \textbf{$k$-partial clique}. A $k$-partial clique on $n$ vertices is a subgraph in which every vertex has at least $n - k - 1$ out-neighbors, and at least one vertex has exactly $n - k - 1$ out-neighbors. Thus, a complete subgraph on $n$ vertices is a $0$-partial clique.

\section{Main Results}
\label{sec:results}

In this paper, we present primarily two types of results for approximating the index coding rate of a graph: those which apply only to graphs in specific families, and those which apply to general graphs. When working with a special family, we can often provide good approximations of the index coding rate by using simple schemes but leveraging properties of the graph to prove these simple schemes are effective. In contrast, as it is known to be difficult to approximate the index coding rate in the general case, most of our results in the general (directed) setting do not provide provably good approximations; instead, they can be viewed as methods of strengthening the simple schemes to ones that perform strictly better, although we are not always able to rigorously quantify how much better they perform.

\subsection{Approximation Results for Special Graph Families}
\label{sec:family_results}

Most of the results in this paper relating to specific graph families do not depend fundamentally on the graph family itself, but rather on certain nice properties of the graph family such as small chromatic number. In this section we do not state our results in full generality or prove them, but instead give instantiations of the general results with respect to the graph families we are most interested in. The most general versions of these results are stated and proven in \cref{sec:proofs}.

At a high level, the common technique used in these results 
is to employ the (relatively) easy-to-compute quantity $\FCP(\G)$ as an approximation of $\Cap(\G)$, and similarly to use $n - \FCP(\G) = \FCC(\G)$ as an approximation of $\Ind(\G)$. The main challenge comes in proving the quality of these approximations. The table below summarizes the state-of-the-art bounds for the main graph families considered in this paper. We reiterate that in this subsection, all results assume the graph is undirected.

\renewcommand{\arraystretch}{1.5}
\begin{center}
\begin{threeparttable}[htbp]
\caption{Best-Known Approximations of $\Ind(\G)$ and $\Cap(\G)$}
\label{results_table}
\begin{centering}
\begin{tabular}{|m{30mm}|c|c|}
\hline
 \centering \textbf{Graph Type} & \textbf{UB for $\Cap(\G)/\FCP(\G)$} &  \textbf{UB for $\FCC(\G)/\Ind(\G)$}\\
 \hhline{|=|=|=|}
\centering Unrestricted & $2$ & $O(n \frac{\log \log n}{\log n})$  \\
 \hline
 \centering Sparse Graph ($|E(\G)| = O(n^{1+\epsilon}))$ & $2$ & $O(n^\epsilon)^{**}$  \\
 \hline
\centering Small Chromatic Number ($\chi(\G) = k \geq 2$) & \ $2 - \frac{2}{k}^*$ & \ $\frac{k}{2}^*$  \\
\hline
\centering General Disk Graph & \ $\frac{3}{2}^{*}$ & $O(n \frac{\log \log n}{\log n})$  \\
\hline
\centering Unit Disk Graph & \ $\frac{3}{2}^{*}$ & $3^*$  \\
\hline
\centering $\lambda$-precision UDG, $\lambda \leq 1/\sqrt{2}$ & \ $\frac{3}{2}^*$ & $\frac{64}{\lambda^2} + 1^*$  \\
\hline
\end{tabular}
\begin{tablenotes}
\footnotesize
\item $^*$Bound proved in this work.
\item $^{**}$Bound in this work improves previous best bound by a constant factor.
\end{tablenotes}
\end{centering}
\end{threeparttable}
\end{center}

\subsubsection{Results for Graphs with Small Chromatic Number}
Many of the results in \cite{DBLP:journals/corr/Mazumdar0V17} are aimed at approximating $\Cap(\G)$ and $\Ind(\G)$ in the case that $\G$ is planar, often by exploiting the 4-colorability of planar graphs. Here we generalize these ideas further to the case that $\G$ is $k$-colorable for some $k$. Our first result generalizes the $\frac{3}{2}$-approximation of $\Cap(\G)$ for planar $\G$ in \cite{DBLP:journals/corr/Mazumdar0V17} to a $(2 - \frac{2}{k})$-approximation when $\G$ is $k$-colorable.

\begin{theorem}
\label{chromatic_cap}
If $\G$ has $\chi(\G) = k \geq 2$, then
\begin{equation}
\frac{\Cap(\G)}{\FCP(\G)} \leq 2 - \frac{2}{k}.
\end{equation}
\end{theorem}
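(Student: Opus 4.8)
The plan is to sandwich both quantities between vertex-cover-type parameters and then use $k$-colorability to control the integrality gap. First I would bound $\Cap(\G)$ from above by the integral minimum vertex cover $\VC(\G)$, using the sandwich $\MM(\G) \le \Cap(\G) \le \VC(\G)$ of \cite{DBLP:journals/tit/Mazumdar15}. Second I would bound $\FCP(\G)$ from below by the fractional minimum vertex cover: restricting the clique-packing LP to cliques of size $2$ (each contributing $|C|-1=1$ to the objective) and size $1$ (contributing $0$) shows that every fractional matching is a feasible fractional clique packing of the same value, so $\FCP(\G) \ge \FMM(\G)$, and $\FMM(\G) = \FVC(\G)$ by the LP duality already recalled in the excerpt. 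Hence the theorem reduces to the combinatorial statement
\[
\VC(\G) \le \left(2 - \tfrac{2}{k}\right)\FVC(\G) \qquad \text{whenever } \chi(\G) \le k.
\]

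To prove this reduced statement I would invoke the half-integrality of the vertex-cover LP (classical; Nemhauser--Trotter): there is an optimal solution $x^*$ taking only values in $\{0,\tfrac12,1\}$, inducing a partition $V(\G) = V_0 \cup V_{1/2} \cup V_1$ with $\FVC(\G) = |V_1| + \tfrac12|V_{1/2}|$. Feasibility of $x^*$ forces every neighbor of a vertex of $V_0$ to lie in $V_1$; hence $V_0$ together with any independent set of the induced subgraph $\G \rvert_{V_{1/2}}$ is independent in $\G$, so $\alpha(\G) \ge |V_0| + \alpha(\G \rvert_{V_{1/2}})$. Since $\G \rvert_{V_{1/2}}$ is again $k$-colorable, one of its color classes gives $\alpha(\G \rvert_{V_{1/2}}) \ge |V_{1/2}|/k$, and therefore
\[
\VC(\G) = n - \alpha(\G) \le |V_1| + \left(1 - \tfrac{1}{k}\right)|V_{1/2}| \le \left(2 - \tfrac{2}{k}\right)\left(|V_1| + \tfrac12|V_{1/2}|\right) = \left(2 - \tfrac{2}{k}\right)\FVC(\G),
\]
where the middle step uses $2 - 2/k \ge 1$ for $k \ge 2$.

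Chaining the pieces then gives
\[
\Cap(\G) \le \VC(\G) \le \left(2 - \tfrac{2}{k}\right)\FVC(\G) = \left(2 - \tfrac{2}{k}\right)\FMM(\G) \le \left(2 - \tfrac{2}{k}\right)\FCP(\G),
\]
which is exactly the claimed bound.

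The only genuinely nontrivial ingredient is the reduced vertex-cover statement -- equivalently, that on $k$-colorable graphs the LP relaxation of vertex cover has integrality gap at most $2-2/k$ for that graph -- and this is where I expect the main (though standard) effort to go; the half-integral structure of the LP together with a single color class of $\G \rvert_{V_{1/2}}$ is the lever that makes it work. A secondary point to get right is the comparison $\FCP(\G) \ge \FMM(\G)$: it is precisely the objective weight $|C|-1$ that makes $2$-cliques reproduce the fractional matching LP while $1$-cliques contribute nothing, so no strict improvement over $\FMM$ is needed here.
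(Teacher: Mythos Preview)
Your proof is correct, and it is in fact a cleaner route to the stated theorem than the one the paper takes. The paper first removes a maximal set of $t$ vertex-disjoint triangles, proving the refined bound $\frac{\Cap(\G)}{\FCP(\G)} \leq \frac{3t+k'}{2t + k'l/(2l-2)}$ where $k'=\VC(\G')$ and $l=\chi(\G')$ for the triangle-free remainder $\G'$; the vertex-cover integrality-gap argument (half-integrality plus rounding down the largest color class of the $\tfrac12$-vertices) is applied to $\G'$. That bound is then optimized to $\max(\tfrac32,\,2-\tfrac2k)$, and the small cases $k=2,3$ are handled separately --- the $k=3$ case by exactly the chain $\FVC=\FMM\le\FCP\le\Cap\le\VC$ that you use. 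You instead apply the same integrality-gap rounding directly to $\G$, which lets you bypass the triangle decomposition and the case split entirely and obtain $2-\tfrac2k$ uniformly for all $k\ge2$.

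What the paper's detour buys is the intermediate triangle-parameterized inequality, which it reuses later (e.g.\ for disk graphs and for several index-coding bounds that depend on $t$ and on $\chi(\G')$). What your route buys is a one-shot proof of the theorem as stated, with no case analysis. Your phrasing of the rounding via $\alpha(\G)\ge |V_0|+|V_{1/2}|/k$ and $\VC=n-\alpha$ is equivalent to the paper's direct rounding of the half-integral cover; both yield $\VC(\G)\le(2-\tfrac2k)\FVC(\G)$.
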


Similarly, \cite{DBLP:journals/corr/Mazumdar0V17} presents a 2-approximation of index coding rate for planar graphs. By generalizing their bound to exploit $k$-colorability instead of 4-colorability we immediately obtain an approximation for $k$-colorable graphs, but the quality of this bound scales poorly with $k$. However, we can use a different technique to show $\FCC(\G)$ is a $\frac{k}{2}$-approximation for $k$-colorable $\G$.

\begin{theorem}
\label{chromatic_ind}
If $\G$ has $\chi(\G) = k \geq 2$, then
\begin{equation}
\frac{\FCC(\G)}{\Ind(\G)} \leq \frac{k}{2}.
\end{equation}
\end{theorem}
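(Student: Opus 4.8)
The plan is to build an explicit feasible point of the fractional clique cover LP for $\G$ of objective value at most $\tfrac{k}{2}\alpha(\G)$, and then finish via the chain $\alpha(\G)\le\MAIS(\G)\le\Ind(\G)$ from the prerequisites (for undirected $\G$ the leftmost inequality is an equality, but all we need is $\alpha(\G)\le\Ind(\G)$). The atoms of the construction are bipartite. Fix a proper $k$-coloring of $\G$ with color classes $V_1,\dots,V_k$, each an independent set, and for each pair $i<j$ set $H_{ij}\defeq\G\rvert_{V_i\cup V_j}$, which is bipartite. Since bipartite graphs are perfect and complements of perfect graphs are perfect, $\overline{H_{ij}}$ is perfect, so $\FCC(H_{ij})\le\CC(H_{ij})=\chi(\overline{H_{ij}})=\omega(\overline{H_{ij}})=\alpha(H_{ij})\le\alpha(\G)$, the last step because $H_{ij}$ is an induced subgraph of $\G$.

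Next I would glue the $\binom{k}{2}$ covers together. Every clique of an induced subgraph $H_{ij}$ is also a clique of $\G$, so an optimal fractional clique cover of $H_{ij}$ is a nonnegative weighting of cliques of $\G$ that covers each vertex of $V_i\cup V_j$ to extent at least $1$. Scale each such weighting by $\tfrac{1}{k-1}$ (legitimate since $k\ge 2$) and sum over all pairs $i<j$. The key incidence count is that a vertex $v\in V_\ell$ lies in $H_{ij}$ for exactly the $k-1$ pairs containing the index $\ell$, and in each of those it is covered to extent at least $1$; hence in the summed weighting it is covered to extent at least $\tfrac{1}{k-1}(k-1)=1$, so the result is a feasible fractional clique cover of $\G$. (A clique on $\ge 3$ vertices uses $\ge 3$ colors and so is contained in no $V_i\cup V_j$, hence never used, so the per-clique upper bounds in the LP are automatically respected.) Its objective value is $\tfrac{1}{k-1}\sum_{i<j}\FCC(H_{ij})\le\tfrac{1}{k-1}\binom{k}{2}\alpha(\G)=\tfrac{k}{2}\alpha(\G)$, and therefore $\FCC(\G)\le\tfrac{k}{2}\alpha(\G)\le\tfrac{k}{2}\Ind(\G)$, which is the claim.

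The one step needing genuine care is the bipartite ingredient, namely that $\FCC$ of a bipartite graph is at most its independence number. The cleanest route is the perfection argument above; alternatively and more elementarily, the cliques of a bipartite graph $H$ are only its vertices and edges, so $\FCP(H)$ coincides with the fractional matching number, which by LP duality and the integrality of the fractional vertex cover polytope on bipartite graphs equals $\VC(H)$, whence $\FCC(H)=\lvert V(H)\rvert-\FCP(H)=\lvert V(H)\rvert-\VC(H)=\alpha(H)$ by Gallai's identity. Everything else — that cliques of an induced subgraph lift to the ambient graph, the $(k-1)$-fold incidence count, and the arithmetic $\tfrac{1}{k-1}\binom{k}{2}=\tfrac{k}{2}$ — is routine. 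As a sanity check the bound is tight at $k=2$: a bipartite $\G$ is perfect, so $\FCC(\G)=\Ind(\G)=\alpha(\G)$ and the ratio is exactly $1=\tfrac{k}{2}$.
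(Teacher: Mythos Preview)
Your proof is correct, but it takes a genuinely different route from the paper. The paper works on the dual side: it considers the fractional independent set LP with only edge constraints, call its value $\alpha_{F_2}(\G)$, notes that $\FCC(\G)\le\alpha_{F_2}(\G)$ since adding clique constraints only tightens the LP, and then shows $\alpha_{F_2}(\G)\le\tfrac{k}{2}\alpha(\G)$ by a half-integrality rounding argument (round up the $\tfrac12$-valued vertices in the largest color class, round the rest down). You instead work on the primal side, building an explicit feasible fractional clique cover by averaging K\"onig-optimal covers of the $\binom{k}{2}$ bipartite induced subgraphs $H_{ij}$. Since those covers use only vertices and edges, your construction is in fact a feasible point of the \emph{same} edge/vertex covering LP whose dual is $\alpha_{F_2}$, so you are proving the identical inequality $\alpha_{F_2}(\G)\le\tfrac{k}{2}\alpha(\G)$ from the other side of LP duality. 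Your argument is arguably more elementary (it needs only K\"onig's theorem or perfection of bipartite graphs, not half-integrality of the independent-set polytope), and it hands you an explicit primal witness; the paper's rounding argument, on the other hand, makes transparent the connection to the LP hierarchy $\alpha_{F_2}\ge\alpha_{F_3}\ge\cdots\ge\FCC$ that it goes on to discuss, and emphasizes that already the efficiently computable $\alpha_{F_2}(\G)$ is a $\tfrac{k}{2}$-approximation even when $\FCC(\G)$ itself is hard to compute.

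One minor remark: your parenthetical about the per-clique upper bounds $y_C\le 1$ only explicitly treats cliques of size $\ge 3$. For completeness, an edge lies in exactly one $H_{ij}$ so receives weight at most $\tfrac{1}{k-1}\le 1$, and a single vertex lies in $k-1$ of the $H_{ij}$ so receives weight at most $(k-1)\cdot\tfrac{1}{k-1}=1$. Alternatively, the upper bounds are inessential for a minimization LP of this form, since any $y_C>1$ can be capped at $1$ without violating the covering constraints and without increasing the objective.
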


\subsubsection{Results for Sparse Graphs}
Many of our results, especially for approximating $\Ind(\G)$, rely on the fact that graph families with small chromatic number always contain a relatively large independent set. This fact combined with the chain of inequalities $\alpha(\G) \leq \Ind(\G) \leq \FCC(\G)$ and bounds on $\FCC(\G)$ is often enough to give good results in the special cases we consider. The following theorem attempts to generalize this idea as much as possible, by using Tur\'an's theorem to guarantee the existence of a large independent set in any sufficiently sparse graph. If we restrict back to the planar or outerplanar case, this result is weaker than the other more specialized results.
\begin{theorem}
\label{sparse_ind}
Let $\G$ be a graph with $n$ vertices and $e$ edges. Then
\begin{equation}
\frac{\FCC(\G)}{\Ind(\G)} \leq \max\left(\frac{e(n-2)}{n(n-1)} + 1, \frac{2e}{3n} + \frac{4}{3}\right).
\end{equation}
\end{theorem}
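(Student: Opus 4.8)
The plan is to sandwich both $\Ind(\G)$ and $\FCC(\G)$ between simpler graph parameters and then reduce the statement to a small numerical optimization. Since $\G$ is undirected, $\MAIS(\G)=\alpha(\G)$, so the chain $\alpha(\G)\le\Ind(\G)\le\FCC(\G)$ from \cref{sec:defs} applies, and I would bound $\FCC(\G)$ from above and $\alpha(\G)$ from below. For the lower bound I would invoke Tur\'an's theorem in its averaging form: a graph with $n$ vertices and $e$ edges satisfies $\alpha(\G)\ge n^2/(2e+n)$. I would also use the elementary observations that the vertices left unsaturated by a maximum matching $M$ of $\G$ are pairwise non-adjacent, so $\alpha(\G)\ge n-2\MM(\G)$, and that, dually, every edge of $\G$ meets one of the $2\MM(\G)$ saturated vertices, whence $e\le\MM(\G)\bigl(2n-2\MM(\G)-1\bigr)$.

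For the upper bound I would use $\FCC(\G)=n-\FCP(\G)$ together with the fact that a maximum matching is itself a feasible clique packing, giving $\FCP(\G)\ge\MM(\G)$ and hence $\FCC(\G)\le n-\MM(\G)$. This estimate is wasteful on triangle-rich graphs, so to capture the second bound in the theorem I would instead first pack a maximal collection of vertex-disjoint triangles and then a matching on what remains: since each triangle contributes $2$ (rather than $1$) to the clique-packing objective, this yields a stronger lower bound on $\FCP(\G)$, accompanied by an edge-counting inequality (every edge lies within or between the packed triangles, the matched pairs, and the independent remainder) that plays the role of the inequality above.

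Putting the pieces together, $\dfrac{\FCC(\G)}{\Ind(\G)}\le\dfrac{n-\MM(\G)}{\max\!\bigl(n-2\MM(\G),\,n^2/(2e+n)\bigr)}$, and similarly with the triangle-refined numerator; on each side this is a ratio of affine or quadratic functions of $\MM(\G)$ (and of the triangle-packing number), monotone on either side of the value where the two terms in the denominator coincide. Maximizing over the range of $\MM(\G)$ permitted by the edge constraint, the matching-only estimate simplifies to $\tfrac{e(n-2)}{n(n-1)}+1$ and the triangle-refined one to $\tfrac{2e}{3n}+\tfrac43$; whichever is relevant in the given density regime gives the stated maximum.

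The main obstacle is precisely this optimization. Since the bound is a monotone ratio on each piece, its extremum sits either at the crossover point or at an endpoint forced by $e\le\MM(\G)(2n-2\MM(\G)-1)$, but one must verify that no admissible configuration does worse and, in particular, treat separately the degenerate regimes where $\G$ has an (almost) perfect matching or is (almost) complete — there the matching-based estimate for $\FCC(\G)$ is far from tight and must be replaced by one that packs a single large clique. Squeezing out the precise constants $\tfrac{n-2}{n(n-1)}$ and $\tfrac43$, rather than the weaker $\tfrac1n$ and $1$ that a coarse combination of the same ingredients would yield, is where the bookkeeping has to be done carefully.
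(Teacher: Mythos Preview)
Your ingredients do give a valid bound, but not the one stated. Carrying out the optimization you describe, the crossover of the two lower bounds on $\alpha(\G)$ occurs at $m^\star=\tfrac{en}{2e+n}$, and there
\[
\frac{n-m}{\max\bigl(n-2m,\;n^2/(2e+n)\bigr)}\Big|_{m=m^\star}=\frac{e}{n}+1,
\]
which is the global maximum of your expression over real $m$ (it is increasing on the left piece and decreasing on the right). The edge constraint $e\le m(2n-2m-1)$ does not, in general, exclude $m^\star$, and integrality of $\MM(\G)$ cannot manufacture the extra factor $\tfrac{n-2}{n-1}$ uniformly in $n$. So the ``careful bookkeeping'' you allude to cannot squeeze $\tfrac{e(n-2)}{n(n-1)}+1$ out of the matching--Tur\'an combination alone: that combination is exactly the $l\to\infty$ limit of the paper's argument and yields $e/n+1$, which strictly exceeds the theorem's bound whenever $e>n$.

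What the paper does differently is to upper-bound $\FCC(\G)$ not via $\MM(\G)$ but via the \emph{fractional} vertex cover of the triangle-free remainder $\G'$, together with a chromatic-number-dependent integrality-gap bound proved earlier (namely $\FVC(\G')\ge \tfrac{l}{2l-2}\,\VC(\G')$ when $\chi(\G')=l$). Writing $\VC(\G')=|V(\G')|-\alpha(\G')$ makes an $\alpha(\G')$ term appear in the numerator, which is then absorbed against $\alpha(\G)\ge\alpha(\G')$ in the denominator. The resulting bound is $\tfrac{l-2}{2l-2}\cdot\tfrac{2e}{n}+1$ for $l>3$; plugging in the crude $l\le n$ gives the first term of the $\max$, while the small-$l$ cases $l\le 3$ are handled separately and produce the $\tfrac{2e}{3n}+\tfrac{4}{3}$ term. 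Your approach effectively fixes the integrality gap at its worst value~$2$, which is why you lose the $\tfrac{n-2}{n-1}$ factor.
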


\subsubsection{Results for Disk Graphs}

As mentioned previously, the other main graph family we will consider are the disk graphs, and in particular unit disk graphs. The primary difficulty with this graph family which does not occur in the case of planar or outerplanar graphs is that these graphs may be very dense and contain cliques of arbitrarily large size, which means that in general they do not have linear-sized independent sets. If $\alpha(\G)$ is very small, then the lower bound $\alpha(\G) \leq \Ind(\G)$ becomes very weak, and approximating $\Ind(\G)$ becomes difficult. The situation is better for approximating the storage capacity, since the corresponding inequality is $\Cap(\G) \leq n - \alpha(\G)$, meaning when $\alpha(\G)$ is very small $\Cap(\G)$ is easy to approximate. We use this idea along with some facts about disk graphs to get the following approximation guarantee.

\begin{theorem}
\label{disk_cap}
If $\G$ is a disk graph, then
\begin{equation}
\frac{\Cap(\G)}{\FCP(\G)} \leq \frac{3}{2}.
\end{equation}
\end{theorem}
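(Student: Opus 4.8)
The plan is to establish the chain of inequalities
\[
\Cap(\G) \leq n - \alpha(\G) \quad\text{and}\quad \FCP(\G) \geq \tfrac{2}{3}\bigl(n - \alpha(\G)\bigr),
\]
since together these give $\Cap(\G)/\FCP(\G) \leq \tfrac{3}{2}$. The first inequality is already available to us: any recoverable distributed storage code $\mathcal{C}$ must be able to recover the symbol at each vertex $v$ from its neighbors, so the coordinates indexed by an independent set are determined by the remaining coordinates (an independent set has no internal edges, so each of its vertices recovers from outside the set), giving $\log_q|\mathcal{C}| \leq n - \alpha(\G)$ for every $q$, hence $\Cap(\G) \leq n - \alpha(\G)$. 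This is exactly the bound alluded to in the discussion preceding the theorem.

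The substance is the second inequality, $\FCP(\G) \geq \tfrac{2}{3}(n - \alpha(\G))$. Here I would exploit a structural fact about disk graphs: a disk graph has no induced $K_{1,1,1,1,1}$-free… more precisely, the relevant property is that disk graphs are \emph{3-clique-coverable on the complement of any independent set}, or better, that removing a maximum independent set leaves a graph with bounded structure. Concretely, the cleanest route is: take a maximum independent set $I$ with $|I| = \alpha(\G)$, and on $V(\G)\setminus I$ exhibit a fractional clique packing whose objective $\sum_C x_C(|C|-1)$ is at least $\tfrac{2}{3}|V(\G)\setminus I| = \tfrac{2}{3}(n-\alpha(\G))$. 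A fractional clique packing of value at least $\tfrac{2}{3}m$ on an $m$-vertex graph exists whenever that graph can be fractionally partitioned into cliques each of size $\geq 3$ (a clique of size $s$ contributes $s-1 \geq \tfrac{2}{3}s$ when $s \geq 3$); so it suffices to show $\G\rvert_{V\setminus I}$ has a fractional clique cover using only cliques of size $\geq 3$. The key geometric input is that in a disk graph, a vertex outside a maximum independent set $I$ together with its structure forces triangles: by maximality of $I$, every $v \notin I$ has a neighbor in $I$; combined with the disk-graph property that neighborhoods are ``locally clustered,'' one shows $\G\rvert_{V\setminus I}$ has no isolated vertices or edges that cannot be absorbed into triangles — i.e. every connected component of $\G\rvert_{V\setminus I}$ has a fractional clique cover by triangles. (If some components are single vertices or single edges, the bound would degrade; the geometric argument must rule this out using the maximality of $I$ and planarity-type arguments for disks.)

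The main obstacle I anticipate is precisely this last geometric step: proving that $\G$ restricted to the complement of a maximum independent set is fractionally coverable by cliques of size at least $3$. It is plausible this follows from a more general lemma (stated in \cref{sec:proofs}) that for disk graphs, $\alpha(\G) + \tfrac{3}{2}\FCP(\G) \geq n$, or equivalently that disk graphs admit a fractional partition into one independent set plus triangles. The cleanest way to prove that is probably via a direct geometric / discharging argument on the disk representation — e.g. a leftmost or lowest disk among those not yet covered always has two mutually-adjacent uncovered neighbors below/around it forming a triangle — which would let us greedily peel off triangles until only an independent set remains. I would carry it out in that order: (1) invoke $\Cap(\G) \leq n - \alpha(\G)$; (2) reduce the theorem to the claim $\FCP(\G) \geq \tfrac{2}{3}(n-\alpha(\G))$; (3) reduce that claim to a fractional triangle-plus-independent-set partition; (4) prove the partition for disk graphs geometrically. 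Step (4) is the crux.
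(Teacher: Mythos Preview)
Your reduction in step~(3) is where the plan breaks. You propose to remove a maximum independent set $I$ and then fractionally cover $\G\rvert_{V\setminus I}$ by cliques of size $\geq 3$. But this is already false for the $5$-cycle $C_5$, which is a disk graph: removing any maximum independent set (two non-adjacent vertices) leaves three vertices spanning a single edge and an isolated vertex, and $C_5$ itself is triangle-free, so no clique of size $\geq 3$ is available anywhere. The ``greedy peel off a triangle at the lowest disk'' picture in step~(4) fails for the same reason: disk graphs can be triangle-free (any planar triangle-free graph, for instance), so no geometric argument will manufacture triangles that are not there. Your intermediate inequality $\FCP(\G)\geq \tfrac{2}{3}(n-\alpha(\G))$ happens to be true for disk graphs, but it does not follow from a triangle partition of $V\setminus I$.

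The paper's route avoids this by reversing the order of operations: first greedily remove a \emph{maximal} family of $t$ vertex-disjoint triangles (which contributes $2t$ to $\FCP$), and only then deal with the triangle-free remainder $\G'$. The key structural input is a cited result that \emph{triangle-free disk graphs are $3$-colorable}; this bounds the vertex-cover integrality gap on $\G'$ by $4/3$, so $\FCP(\G')=\FMM(\G')=\FVC(\G')\geq \tfrac{3}{4}\VC(\G')$. Combining with $\Cap(\G)\leq 3t+\VC(\G')$ gives the ratio $(3t+k)/(2t+\tfrac{3}{4}k)\leq \tfrac{3}{2}$. In short, the missing idea in your plan is to peel triangles \emph{before} appealing to any colorability, and then to use $3$-colorability of the triangle-free residual rather than a nonexistent triangle decomposition of it.
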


When $\G$ is a disk graph or even a unit disk graph, it becomes increasingly difficult to approximate $\Ind(\G)$ using preexisting methods as $\G$ contains larger and larger cliques. If we are willing to tolerate superpolynomial running time (which may be reasonable, as finding $\Ind(\G)$ exactly is not even known to be in NP), we can use a result of \cite{DBLP:journals/endm/ChalermsookV16} along with some results from the disk graph literature to obtain the following approximation.

\begin{theorem}
\label{udg_ind_bound}
If $\G$ is a unit disk graph, then
\begin{equation}
\frac{\FCC(\G)}{\Ind(\G)} \leq 3.
\end{equation}
\end{theorem}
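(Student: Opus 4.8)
The plan is to pass from $\Ind(\G)$ down to $\alpha(\G)$ and then invoke a geometric clique‑covering bound for unit disk graphs. From the chain $\alpha(\G) \le \MAIS(\G) \le \Ind(\G) \le \FCC(\G)$ recorded in \cref{sec:defs}, together with $\MAIS(\G) = \alpha(\G)$ for undirected $\G$, we have $\Ind(\G) \ge \alpha(\G)$; hence it suffices to prove the purely combinatorial statement
\begin{equation*}
\FCC(\G) \le 3\,\alpha(\G) \qquad \text{for every unit disk graph } \G.
\end{equation*}

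To do this I would reinterpret both sides. By LP duality, $\FCC(\G)$ is (essentially) the optimum of the clique‑constrained LP relaxation of maximum independent set, the LP whose integral optimum is $\alpha(\G)$; in the language of the complement, $\FCC(\G) = \chi_f(\overline\G)$ and $\alpha(\G) = \omega(\overline\G)$, so the displayed inequality says that the integrality gap of this LP is at most $3$ on the class of unit disk graphs, equivalently that complements of unit disk graphs are fractionally $\chi$‑bounded with ratio $3$. This is where the external inputs enter: the result of \cite{DBLP:journals/endm/ChalermsookV16} controls the integrality gap of the clique‑LP for maximum weight independent set over a hereditary graph class in terms of how economically bounded‑diameter pieces of the class can be (fractionally) clique‑covered, and the quantitative input for unit disk graphs is supplied by standard facts from the disk‑graph literature — in particular the grid argument that the vertices of $\G$ lying in any square of side $\sqrt 2$ form a clique (their disk centres being pairwise within the edge threshold), together with the known linear relationship between the (fractional) chromatic and clique numbers of a unit disk graph. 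Feeding the unit‑disk‑graph parameters into the bound of \cite{DBLP:journals/endm/ChalermsookV16} yields the constant $3$.

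The main obstacle is obtaining the constant \emph{exactly} $3$: a crude packing of the plane into $O(1)$ clique cells around each vertex of a maximum independent set already gives $\FCC(\G) \le \CC(\G) = O(1)\cdot\alpha(\G)$, but the naive count is far larger than $3$, and extracting the sharp value is precisely what the careful packing estimates in the cited works accomplish — this is the technical heart. A secondary, purely computational point, already noted in the statement, is that the fractional clique cover realizing $\FCC(\G)$ is the optimum of an LP with one variable per clique of $\G$, and unit disk graphs can contain cliques of unbounded size, so the corresponding index coding scheme runs in superpolynomial time; this affects only efficiency, not the approximation ratio $\FCC(\G) \le 3\,\Ind(\G)$, which follows from the displayed inequality and is independent of which geometric realization of $\G$ is used.
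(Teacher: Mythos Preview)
Your approach is essentially the paper's: reduce to $\FCC(\G) \le 3\,\alpha(\G)$ via $\alpha(\G) \le \Ind(\G)$ and then invoke \cite{DBLP:journals/endm/ChalermsookV16} together with unit-disk-graph structure. The paper makes the two inputs explicit --- the relevant form of Chalermsook--Vaz is that if a family $\mathcal{F}$ is closed under \emph{clique-replacement} and satisfies $\chi(\G) \le c\,\omega(\G)$ for all $\G \in \mathcal{F}$ then $\FCC(\G) \le c\,\alpha(\G)$, the unit-disk input giving $c=3$ is Peeters' bound $\chi(\G) \le 3\omega(\G)-2$ (the $\sqrt{2}$-grid argument you sketch gives a larger constant), and one must also verify that UDGs are closed under clique-replacement, which is done by stacking copies of a disk at the same location.
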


If instead we insist on polynomial running time, we cannot prove a constant-factor approximation for all UDGs (the LP which has $\FCC(\G)$ as its solution may have a superpolynomial number of constraints), but we can recover good approximations in some special cases.

\begin{theorem}
\label{clique_udg_ind}
If $\G$ is a unit disk graph with clique number $\omega(\G)$, then
\begin{equation}
\frac{\FCC(\G)}{\Ind(\G)} \leq \omega(\G) + 1,
\end{equation}
and furthermore we can obtain an approximation of $\Ind(\G)$ with this approximation factor in polynomial time.
\end{theorem}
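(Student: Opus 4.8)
The plan is to exploit the chain $\alpha(\G)\le\MAIS(\G)\le\Ind(\G)\le\FCC(\G)$ recorded earlier. As $\G$ is undirected we have $\Ind(\G)\ge\alpha(\G)$, so it suffices to establish (i) the combinatorial bound $\FCC(\G)\le(\omega(\G)+1)\,\alpha(\G)$ and (ii) that a scheme of rate $\FCC(\G)$ can be produced in polynomial time. Since $\FCC(\G)\le\CC(\G)$, part (i) reduces to exhibiting a cover of $V(\G)$ by at most $(\omega(\G)+1)\alpha(\G)$ cliques.

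For (i) I would use the geometry. Fix a geometric representation and order the vertices by the $x$-coordinate of their disk centers (ties broken by $y$-coordinate). Sweep left to right, greedily forming an independent set $I$ by taking each vertex that is non-adjacent to all vertices already chosen; then $I$ is maximal, so $|I|\le\alpha(\G)$. Every $u\notin I$ was skipped because it is adjacent to some earlier-examined $v\in I$, which has $x(v)\le x(u)$; thus $u$ lies in the part of the unit disk around $v$ with $x\ge x(v)$, a half-disk spanning $180^{\circ}$ about $v$. Split this half-disk into three $60^{\circ}$ sectors: any two points lying in a common $60^{\circ}$ sector of a radius-one disk are within distance one (a one-line computation), so each sector induces a clique, and adjoining $v$ --- adjacent to all of $N(v)$ --- to a sector keeps it a clique. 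Hence $\{v\}$ together with all of its neighbors that are not to its left is covered by three cliques; ranging over $v\in I$ covers $V(\G)$, giving $\CC(\G)\le 3|I|\le 3\,\alpha(\G)$. If $\omega(\G)\ge2$ this is already at most $(\omega(\G)+1)\alpha(\G)$, while if $\omega(\G)=1$ then $\G$ is edgeless and $\FCC(\G)=n=\alpha(\G)$ trivially; in all cases $\FCC(\G)\le(\omega(\G)+1)\Ind(\G)$.

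For (ii), the point is that the algorithm needs no geometric representation: a graph with clique number $\omega$ has only $O(n^{\omega})$ cliques, so for bounded $\omega(\G)$ one can enumerate them and solve the linear program defining $\FCC(\G)$ in polynomial time; its optimum is a fractional clique cover, which gives a vector-linear index code of rate exactly $\FCC(\G)$ (the scheme behind $\Ind(\G)\le\FCC(\G)$), and by (i) this rate is $\le(\omega(\G)+1)\Ind(\G)$. This is also why a bound on $\omega(\G)$ enters: without it the $\FCC$ LP can have superpolynomially many clique variables, and one is not handed a geometric representation to run the sweep (recognising unit disk graphs is NP-hard) --- the regime that \cref{udg_ind_bound} can only reach in superpolynomial time.

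The main obstacle is making the clique-cover bound watertight, i.e. nailing the constant in $\CC(\G)\le c\,\alpha(\G)$. The picture is clean, but two points need care. First, whether a $60^{\circ}$ sector of a unit disk is literally a clique depends on the convention at distance exactly one: with closed disks it holds verbatim, and with open disks it is restored by the standard perturbation (finitely many pairwise distances, so the threshold can be nudged without altering $\G$). Second, one must check the sweep genuinely places every vertex in the closed neighborhood of some $v\in I$ weakly to its left, which follows from maximality together with the processing order. Everything else --- passing from $\FCC(\G)$ to $\CC(\G)$, solving the LP, and reading off the vector-linear scheme --- is routine with the machinery already set up in the paper.
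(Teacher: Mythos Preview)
Your argument for the inequality is correct but takes a quite different route from the paper's, and actually establishes more. Your left-to-right sweep gives a self-contained proof that $\CC(\G)\le 3\alpha(\G)$ for any unit disk graph --- this is essentially a direct proof of \cref{udg_ind_bound}, bypassing both Peeters' theorem and the Chalermsook--Vaz clique-replacement lemma that the paper cites for that result --- and you then just note $3\le\omega(\G)+1$ once $\omega(\G)\ge 2$. The paper instead proves \cref{clique_udg_ind} by plugging Peeters' bound $\chi(\G)\le 3\omega(\G)-2$ and the $3$-colorability of triangle-free disk graphs into the general bound of \cref{chromatic_ind_bound}, with a case analysis on $\chi(\G')\in\{0,1,2,3\}$. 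Your route is more elementary and gives the sharper constant; the paper's route reuses machinery already developed and, crucially, ties the bound to a scheme that does not require knowing all cliques.

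That last point is where your proposal has a gap. The whole purpose of \cref{clique_udg_ind}, as contrasted with \cref{udg_ind_bound}, is the polynomial-time claim, and your algorithm (enumerate the $O(n^{\omega(\G)})$ cliques and solve the $\FCC$ LP) is polynomial only when $\omega(\G)$ is treated as a constant --- precisely the obstruction the paper flags in the sentence preceding the theorem. The paper's proof avoids this because the achievability scheme hidden inside \cref{chromatic_ind_bound} is ``greedily pack vertex-disjoint triangles, then take a maximum fractional matching on the remainder''; this produces a fractional clique cover of value $n-2t-\FMM(\G')$, computable in time polynomial in $n$ regardless of $\omega(\G)$, and it is \emph{this} quantity (not $\FCC(\G)$ itself) that the case analysis shows is at most $(\omega(\G)+1)\Ind(\G)$. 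Your sweep would also yield an explicit cover in polynomial time, but only given a geometric representation, which --- as you correctly note --- one cannot assume.
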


In \cite{DBLP:journals/jal/HuntMRRRS98}, Hunt et al. introduced the notion of ``$\lambda$-precision unit disk graphs.'' These are unit disk graphs with the additional constraint that the centers of every pair of disks are at distance at least $\lambda$ from each other, which may be a reasonable constraint in some real-world scenarios. This allows us to prove a bound on the clique number in terms of $\lambda$, which we can translate into a bound on $\Ind(\G)$ using \cref{clique_udg_ind}.

\begin{theorem}
\label{lambda_udg_ind}
If $\G$ is a $\lambda$-precision unit disk graph, then
\begin{equation}
\frac{\FCC(\G)}{\Ind(\G)} \leq \frac{64}{\lambda^2} + 1 = O(\lambda^{-2}) + 1,
\end{equation}
and furthermore we can obtain an approximation of $\Ind(\G)$ with this approximation factor in polynomial time.
\end{theorem}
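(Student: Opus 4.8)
The plan is to derive \cref{lambda_udg_ind} as an immediate consequence of \cref{clique_udg_ind} together with a packing bound on the clique number $\omega(\G)$ of a $\lambda$-precision unit disk graph. Fix any geometric representation witnessing the $\lambda$-precision property, so the vertices of $\G$ correspond to points (disk centers) in the plane that are pairwise at distance at least $\lambda$, and two vertices are adjacent exactly when their radius-$1$ disks intersect, i.e.\ when the corresponding centers are at distance at most $2$.

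First I would bound $\omega(\G)$. Let $C$ be any clique, and pick a vertex $c \in C$; then every other center of $C$ lies within distance $2$ of $c$, so all $|C|$ centers lie in the disk of radius $2$ about $c$. Around each of these centers place an open disk of radius $\lambda/2$: by the precision condition these $|C|$ disks are pairwise disjoint, and each is contained in the disk of radius $2 + \lambda/2$ about $c$. Comparing areas gives $|C|\cdot\pi(\lambda/2)^2 \le \pi(2+\lambda/2)^2$, hence $\omega(\G) \le (4/\lambda + 1)^2$, which for $\lambda \le 1$ (the regime of interest, consistent with Table~\ref{results_table}) is at most $64/\lambda^2$. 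Note this bound requires no knowledge of the representation itself, only that one exists.

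Plugging $\omega(\G) \le 64/\lambda^2$ into \cref{clique_udg_ind} yields $\FCC(\G)/\Ind(\G) \le \omega(\G) + 1 \le 64/\lambda^2 + 1 = O(\lambda^{-2}) + 1$. For the efficiency claim, observe that for fixed $\lambda$ the quantity $\omega(\G) = O(\lambda^{-2})$ is a constant, so the total number of cliques in $\G$ is $O(n^{\omega(\G)}) = \mathrm{poly}(n)$; thus the LP defining $\FCC(\G)$ has polynomially many variables and constraints and can be solved in polynomial time, and the polynomial-time guarantee carries over verbatim from \cref{clique_udg_ind}. I do not expect any genuinely hard step here: the only point needing care is the geometric bookkeeping (the convention that radius-$1$ disks meet iff their centers are within distance $2$, and restricting to $\lambda \le 1$ so the clean constant $64/\lambda^2$ holds, keeping the sharper $(4/\lambda+1)^2$ otherwise), after which the statement is essentially a corollary of \cref{clique_udg_ind}.
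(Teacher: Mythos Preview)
Your proposal is correct and follows the same overall template as the paper: bound $\omega(\G)$ by $64/\lambda^2$ via a packing argument, then invoke \cref{clique_udg_ind}. The geometric step differs, however, and yours is cleaner. The paper argues that any collection of pairwise intersecting unit disks can be inscribed in a single circle of radius $4$ (invoking Descartes' circle theorem for the three-disk case and then an ad hoc containment argument for additional disks), and compares the area $16\pi$ to $\pi(\lambda/2)^2$ to obtain exactly $64/\lambda^2$. You instead fix one clique vertex $c$, note that every other center lies within distance $2$ of it, and hence the pairwise disjoint $\lambda/2$-balls about the centers all lie in the disk of radius $2+\lambda/2$ about $c$; this yields the sharper bound $\omega(\G)\le(4/\lambda+1)^2$, which you then coarsen to $64/\lambda^2$ for $\lambda\le 1$ to match the stated theorem. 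Your argument is more elementary (no Descartes' theorem, no case analysis on the number of disks) and strictly tighter --- indeed the paper remarks after its proof that the constant $64$ ``seems likely'' to be improvable by inscribing in a circle of radius smaller than $4$, which is precisely what your approach accomplishes. The polynomial-time claim is handled identically in both, via the constant bound on clique size and \cref{clique_udg_ind}.
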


\subsection{Algorithms for General Graphs}
\label{sec:algo}

As seen above, almost all our results approximating the index coding rate of graphs from special families use the fractional clique cover as the achievability scheme. In this section we instead describe more complex vector linear achievability schemes which strictly improve upon the fractional clique cover, and thus can be viewed as a further strengthening of the approximations described previously for special graph families. Although we know of specific examples where these new schemes are superior, we leave as an open question whether they can yield better constant-factor approximations for certain graph families than those attained by $\FCC(\cG)$. In this subsection we consider directed as well as undirected graphs. The detailed proofs of the results in this subsection are postponed to \cref{sec:IC_constructions}. 

Let us first look at the index coding problem from an interference alignment perspective. Suppose that the data requested by user $i$ (vertex $v_i$) is $\bfx_i \in \F^\ell$. We assign a vector $\bf{v}_i$ to each vertex $v_i \in V(\cG)$ such that the vectors satisfy the following condition,
\begin{equation}\label{interference_align_condition}
	\bf{v}_i \not\in \spn(\bf{v}_{N(v_i,\overline\cG)}).
\end{equation}
From the interference alignment perspective, $N(v,\overline\cG)$ are the interfering set of indices for user $v$. Recall we define $\bfv_{[V(\cG)]} \triangleq [\bfv_1\; \bfv_2\;\ldots\;\bfv_n]$. The index code (broadcaster transmission) is given by $ \bfv_{[V(\cG)]}\cdot [\bf{x}_1\; \bf{x}_2\; \ldots \bf{x}_n]_{\ell \times n}^T \in \F_q^{\ell \times \rank{\bfv_{[V(\cG)]}}}$. It can be seen that each node $v_i$ can recover $\bf{x}_i$ from the index code because of \cref{interference_align_condition}.

In this section, we utilize the interference alignment perspective to find algorithms that improve beyond $\FCC(\cG)$. We begin by combining two orthogonal generalizations of $\FCC(\cG)$.

\subsubsection{Local Chromatic Number and Partial Clique Cover}
\label{sec:LPCC}

It is certainly possible to satisfy the requirements in \cref{interference_align_condition} if $\dim(\spn(\bf{v}_{[n]})) = n$, however, our goal is to minimize the dimension of $\spn(\bf{v}_{[n]})$. One solution to this problem is to find a  {{proper coloring}} of the graph $\overline{\cG}$ and assign orthonormal vectors to each color class (the same vector is assigned to all vertices with the same color). Thus, an achievable broadcast rate is given by the chromatic number of ${\overline{\cG}}$. Note that the size of a minimum (integral) clique cover of a graph $\G$ is the same as the chromatic number of the complementary graph $\overline{\G}$, and similarly $\FCC(\G) = \chi_F(\overline{\cG})$, the fractional chromatic number of $\overline\cG$.

One way to improve beyond the fractional clique cover scheme is the local chromatic number. The local chromatic number of $\overline{\cG}$ is always less than (or equal to) $\chi(\overline \G)$. Using the interference alignment perspective it is easy to see that we can assign the column vectors from an $[n,\chi_\ell(\overline \cG)]$-MDS matrix to attain an index coding rate equal to the local chromatic number as shown in \cite{shanmugam2013local}. A linear relaxation of the integer program corresponding to the local chromatic number gives a vector linear index coding scheme better than $\chi_F(\overline\cG) = \FCC(\G)$.

Another approach to improving the clique cover is to instead find a {\em partial clique cover} of $\cG$ \cite{birk1998informed}. Whereas a clique cover is a cover of the vertices of the graph by complete subgraphs, a $k$-partial clique cover is instead a cover of the vertices of the graph by $k$-partial cliques, which were defined in \cref{sec:defs}. Let $k_\cS$ be the smallest $k$ such that $\cS\subset V$ is a $k$-partial clique. In each of the $k_\cS$-partial cliques $\cS$, one can use a $[\abs{\cS},k_{\cS}]$-MDS matrix to assign vectors to the nodes to satisfy \cref{interference_align_condition}.

We can in fact go further, and combine the partial clique cover and the local chromatic number schemes to obtain an index code which generalizes both these schemes, as shown in \cref{thm:main}. In some cases \cref{local_partial_IP} provides strictly better solutions than either the partial clique cover or the local chromatic number of $\cG$.

\begin{theorem}\label{thm:main}
The minimum broadcast rate of an index coding problem on the side information graph $\G$ is upper bounded by the optimum value of the following linear program, where $\cK \triangleq 2^{V(\G)}.$
\begin{subequations}\label{local_partial_IP}
\begin{flalign}
\min\;\;\;  & \;\;\;t \nonumber\\
\rm{s.t.}\;\;\;\;&  \sum_{\cS \in \cK} \min\set{\abs{\cS\cap N(v,\overline \cG)},k_\cS+1}\rho_\cS \le t, \;\;v \in V(\cG) \label{local_partial_IP_minmax_constr}\\ 
	&  \displaystyle\sum_{\cS \in \cK : v \in \cS}  \rho_\cS \geq 1, \;\; v \in V(\cG)  \label{local_partial_IP_cover_constr}\\
	& \rho_\cS \in [0,1], \;\;\cS \in \cK. \label{local_partial_IP_frac_constr}
\end{flalign}
\end{subequations}
\end{theorem}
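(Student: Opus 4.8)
The plan is to convert any feasible point $(t,\set*{\rho_\cS}_{\cS\in\cK})$ of the linear program in the statement into an explicit vector linear index code attaining broadcast rate $t$, via the interference alignment viewpoint of \cref{interference_align_condition}. It suffices to produce, over a large enough field $\F$ and a block length $m$ (messages $\bfx_v\in\F^m$), matrices $\bfv_v\in\F^{mt\times m}$ — with $mt\in\N$, which holds once $m$ clears all denominators of the rational optimum — whose concatenation $\bfv_{[V(\cG)]}$ has rank at most $mt$ and for which $\Col(\bfv_v)\cap\Col(\bfv_{[N(v,\overline\cG)]})=\set*{0}$ for every $v$; then each user cancels its side information and solves for $\bfx_v$, and the total rank bound gives $\Ind(\cG)\le t$. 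Passing from fractional to integral costs nothing: fixing such an $m$, put $n_\cS\defeq m\rho_\cS$, so that \cref{local_partial_IP_cover_constr} says each $v$ lies in a multiset of at least $m$ \emph{slots} $\sigma=(\cS,j)$ with $v\in\cS$ and $1\le j\le n_\cS$.

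The construction is where the local-coloring and partial-clique-cover generalizations get merged. To each slot $\sigma=(\cS,j)$ attach an $[\abs*{\cS},k_\cS+1]$-MDS matrix $G^\sigma\in\F^{(k_\cS+1)\times\abs*{\cS}}$ with columns $\set*{g^\sigma_u}_{u\in\cS}$, and pick \emph{generic} linear maps $\phi_\sigma\colon\F^{k_\cS+1}\to\F^{mt}$ embedding all slot-spaces into one shared target of dimension $mt$ (equivalently, a single generic surjection $\Phi\colon\bigoplus_\sigma\F^{k_\cS+1}\twoheadrightarrow\F^{mt}$). For each $v$ pick $m$ of its slots, say $\sigma_1,\dots,\sigma_m$, and declare the $r$-th column of $\bfv_v$ to be $\phi_{\sigma_r}(g^{\sigma_r}_v)$. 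Using a single shared target — rather than a private block of dimension $k_\cS+1$ per slot, which would only reproduce the ordinary partial clique cover bound $\sum_\sigma(k_\cS+1)$ — is exactly what exploits ``locality'': a slot $\sigma=(\cS,j)$ stays invisible to $v$ unless $v\in\cS$, or some non-out-neighbour of $v$ in $\overline\cG$ lies in $\cS$ and has itself chosen $\sigma$.

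Correctness is a dimension count. Fix $v$ and split $\Col(\bfv_{[N(v,\overline\cG)]})$ slot by slot: the part from $\sigma=(\cS,j)$ lies in $\phi_\sigma\!\left(\spn\set*{g^\sigma_u : u\in\cS\cap N(v,\overline\cG)}\right)$, of dimension at most $\Min{\abs*{\cS\cap N(v,\overline\cG)},k_\cS+1}$ by the MDS property; and whenever $v\in\cS$, the column $g^\sigma_v$ is independent of the columns indexed by $\cS\cap N(v,\overline\cG)$, because $\cS$ being a $k_\cS$-partial clique forces $v$ to have at most $k_\cS$ non-out-neighbours inside $\cS$ so at most $k_\cS+1$ columns of $G^\sigma$ are in play. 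Lifting to $\bigoplus_\sigma\F^{k_\cS+1}$, the columns of $\bfv_v$ together with those of $\bfv_{[N(v,\overline\cG)]}$ span a subspace of dimension at most $\sum_\sigma\Min{\abs*{\cS\cap(N(v,\overline\cG)\cup\set*{v})},k_\cS+1}=m\sum_{\cS\in\cK}\rho_\cS\Min{\abs*{\cS\cap(N(v,\overline\cG)\cup\set*{v})},k_\cS+1}\le mt$, the last inequality being \cref{local_partial_IP_minmax_constr} read with $v$ adjoined to its own neighbourhood when $v\in\cS$ (this is what pays for the $m$ coordinates of $\bfv_v$), while in the lifted space the two spans meet only in $\set*{0}$ slot by slot, by the independence just noted. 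Finally, a Schwartz--Zippel / Zariski-openness argument shows that for $\abs*{\F}$ large enough the generic $\phi_\sigma$ can be chosen so that, simultaneously for all $n$ vertices, passing to the quotient by $\ker\Phi$ keeps these spans disjoint; since $\rank{\bfv_{[V(\cG)]}}\le mt$ this is the required rate-$t$ scheme. (Restricting $\cK$ to cliques recovers the fractional local chromatic scheme and dropping the locality recovers the partial clique cover scheme.)

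The step I expect to be the main obstacle is this last simultaneous genericity together with the tightness of the count: because the slot-spaces are compressed into a target of dimension only $mt$, one must verify that the per-$v$ bound never leaves $\bfv_v$ inside $\Col(\bfv_{[N(v,\overline\cG)]})$ for even a single $v$ — which is why $v$ itself must be included in the neighbourhood term and why \cref{local_partial_IP_cover_constr} must force at least $m$ slots through each vertex — and then produce one field large enough that a single polynomial certifying all of these non-degeneracies at once is nonzero. A more routine point, already used above, is that replacing the $\rho_\cS$ by the integers $n_\cS=m\rho_\cS$ is lossless, which follows since $\Ind(\cG)$ is an infimum over alphabet size and $m$ (hence $\F^m$) may be taken arbitrarily large.
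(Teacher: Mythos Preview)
Your proposal follows essentially the same approach as the paper's: both clear denominators to pass from the fractional to the integer problem (the paper writes $\rho_{\cS_j}=N_j/N$ and works over block length $N$, exactly your $m$), attach an $[\abs{\cS},k_\cS+1]$-MDS matrix to each copy of each selected partial clique, and then compress all the slot-spaces through an outer linear map into a single target of dimension $mt$; the paper takes this outer map to be an explicit MDS matrix $\Phi$ and proves via a Vandermonde-style determinant calculation (its \cref{lem:full_rank}) that the composite $[\Phi_1 G_1\;\Phi_2 G_2\;\cdots]$ can be made to have the required full-rank submatrices over a field of size $O(n)$, whereas you take the $\phi_\sigma$ generic and invoke Schwartz--Zippel, which is the same argument at a higher level of abstraction.

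The one place to tighten is precisely the step you flag as the obstacle: your joint dimension count ``\cref{local_partial_IP_minmax_constr} read with $v$ adjoined to its own neighbourhood'' is not literally what the LP constraint bounds, since the constraint uses the open neighbourhood $N(v,\overline\cG)$, so adjoining $v$ adds one unit per slot through $v$ and the count lands at $m(t+1)$ rather than $mt$. This is a statement-level wrinkle rather than a gap in your strategy---the paper's own proof defines its interfering sets $P_j$ via the \emph{closed} neighbourhood $v_i\cup N(v_i,\overline\cG)$ while the LP as stated uses the open one, so the same bookkeeping discrepancy is present there; just be explicit about which version of the constraint you are actually using when you write up the bound.
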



Let us explain the term $$\sum_{S \in \cK} \min\set{\abs{S\cap N(v,\overline{\cG})}, k_\cS+1}\rho_s$$ in \cref{local_partial_IP_minmax_constr}, for the integer version of the above linear program. Let $\cS_1, \cS_2, \ldots, \cS_\tau$ be the set of selected partial cliques. Then, for each vertex  $v \in V(\cG)$ compute the sum $\displaystyle \sum_{i=1}^\tau \min\set{\abs{\cS_i\cap N(v,\overline{\cG})}, k_{\cS_i}+1}$. Thus each selected partial clique only contributes $\min\set{\abs{\cS_i\cap N(v,\overline{\cG})}, k_{\cS_i}+1}$. Now, the number of broadcast bits corresponds to the maximum sum for any vertex $v$, i.e. 
$$t = \max_{v\in V(\cG)}\sum_{\cS \in \cK} \min\set{\abs{\cS\cap N(v,\overline \cG)},k_\cS+1}\rho_\cS = \max_{v\in V(\cG)} \sum_{i=1}^\tau \min\set{\abs{\cS_i\cap N(v,\overline{\cG})}, k_{\cS_i}+1}.$$

A solution to the integral version of the above linear program corresponds to a scalar linear index code. From the linear program in \cref{local_partial_IP}, we instead obtain a vector linear index code, the details of which are covered in \cref{sec:IC_constructions}.

There is one more way we can generalize the solution of the linear program in \cref{local_partial_IP}, which is to recursively apply the linear program to subgraphs. The recursive linear program is given in the following theorem.

\begin{theorem}[Recursive LP]\label{thm:recur}
 Let $IC_{FLP}(\cG)$ denote the value of an optimal solution to the  linear program below for graph $\cG$:
\begin{equation}\label{local_partial_rec_lin_prog}
\begin{array}{ll@{}}
\min  & t \\
\rm{s.t.}& \sum_{\cS \in \cK} \min\set{\abs{\cS\cap \overline{N(v,\cG)}},IC_{FLP}(\cG\rvert_\cS)}\rho_\cS \le t, \;\; v \in V(\cG)  \\ 
	& \displaystyle\sum_{\cS \in \cK : v \in \cS}  \rho_\cS \geq 1, \;\; v \in V(\cG)  \\
	& \rho_\cS \in [0,1], \;\;\cS \in \cK.
\end{array}
\end{equation}
where $IC_{FLP}(\mathcal{H})$ is defined to be $1$ for single vertex graphs $\mathcal{H}$. Then the minimum broadcast rate of an index coding problem on the side information graph $\cG$ is bounded from above by $IC_{FLP}(\cG)$.
\end{theorem}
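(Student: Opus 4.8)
The plan is to prove, by induction on $\abs{V(\cG)}$, the following slightly stronger statement: for every feasible point $(\set{\rho_\cS},t)$ of the program \eqref{local_partial_rec_lin_prog}, $\cG$ admits a vector linear index code of rate $t$ in the interference-alignment form of \eqref{interference_align_condition}, i.e. a packet length $\ell$ and matrices $V_v\in\F^{(t\ell)\times\ell}$ ($v\in V(\cG)$) with every $V_v$ injective and $\colspan(V_v)\cap\spn(\set{\colspan(V_u):u\in N(v,\overline\cG)})=\set{0}$, transmitting $\sum_v V_v\bfx_v$. Applied to $(\rho,t)$ optimal this gives $\Ind(\cG)\le IC_{FLP}(\cG)$; applied to induced subgraphs with $t=IC_{FLP}(\cG\rvert_\cS)$ it supplies the subcodes used below. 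The base case $\abs{V(\cG)}=1$ is the trivial one-symbol code ($IC_{FLP}=1$). For the inductive step fix a feasible $(\rho,t)$; we may take $\rho$ rational and supported on proper subsets $\cS\subsetneq V(\cG)$ (the entry $\cS=V(\cG)$ of $\cK=2^{V(\cG)}$ makes the recursion a priori self-referential, but it is harmless and may be excluded, so that each $IC_{FLP}(\cG\rvert_\cS)$ occurring is a previously established quantity). By the inductive hypothesis each $\cG\rvert_\cS$ with $\rho_\cS>0$ has such a code; replacing it by a parallel repetition and working over a common large finite field $\F$, we obtain a common packet length $\ell$ (also chosen to clear the denominators of $t$, of the $\rho_\cS$, and of the $IC_{FLP}(\cG\rvert_\cS)$) and matrices $V^\cS_u\in\F^{D_\cS\times\ell}$, $u\in\cS$, with $D_\cS=\ell\cdot IC_{FLP}(\cG\rvert_\cS)$, each $V^\cS_u$ injective, and $\colspan(V^\cS_v)\cap\spn(\set{\colspan(V^\cS_w):w\in\cS\cap N(v,\overline\cG)})=\set{0}$ (this last being exactly the IA condition for $\cG\rvert_\cS$ since $N(v,\overline{\cG\rvert_\cS})=\cS\cap N(v,\overline\cG)$).

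Next I assemble the global code. Let $M$ clear the denominators of the $\rho_\cS$ and let $\cF$ be the multiset containing $M\rho_\cS$ copies of each $\cS$; by \eqref{local_partial_rec_lin_prog} every vertex lies in at least $M$ members of $\cF$. A message of $v$ is $\bfx_v\in\F^{M\ell}$, viewed as $M$ blocks $\bfx_v^1,\dots,\bfx_v^M\in\F^\ell$; fix, for each $v$, an injection $\phi_v$ from $[M]$ into the copies of $\cF$ containing $v$, and for a copy $C'$ of $\cS$ set $\bfy^{C'}_u\defeq\bfx_u^{\phi_u^{-1}(C')}$ if $C'$ lies in the image of $\phi_u$ and $\bfy^{C'}_u\defeq 0$ otherwise. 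Each copy $C'$ of $\cS$ produces the subcode transmission $T_{C'}\defeq\sum_{u\in\cS}V^\cS_u\bfy^{C'}_u\in\F^{D_\cS}$, and we form the concatenation $T\defeq(T_{C'})_{C'\in\cF}\in\F^R$, $R\defeq\sum_{C'}D_{\cS(C')}$. Finally the broadcaster sends $HT$ for a matrix $H\in\F^{(M\ell t)\times R}$ to be chosen. Since each $T_{C'}$, and hence $HT$, is an $\F$-linear function $\sum_v V_v\bfx_v$ of the messages with transmission dimension $M\ell t$, this is a vector linear code of rate $t$ relative to packet length $M\ell$, provided it is decodable (which will also exhibit the required interference-alignment form).

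To decode, fix $v$. For every copy $C'$ of $\cS$, vertex $v$ knows $\bfy^{C'}_u$ for all $u\in\cS\cap N(v,\cG)$ (each a block of a message held by $v$ as side information, or $0$), so it subtracts $\sum_{u\in\cS\cap N(v,\cG)}V^\cS_u\bfy^{C'}_u$ from the $C'$-block of $T$ to obtain $r_{C'}\in W^{C'}_v\defeq\colspan\bigl([V^\cS_u]_{u\in\cS\cap\overline{N(v,\cG)}}\bigr)$; as $W^{C'}_v$ sits inside the $D_\cS$-dimensional subcode space and each $V^\cS_u$ has rank $\ell$, $\dim W^{C'}_v\le\ell\cdot\Min{\abs{\cS\cap\overline{N(v,\cG)}},IC_{FLP}(\cG\rvert_\cS)}$. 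Hence $v$ can compute $H$ applied to $(r_{C'})_{C'}$, which lies in $\cW_v\defeq\bigoplus_{C'}W^{C'}_v$, and by the block structure together with constraint \eqref{local_partial_rec_lin_prog} for $v$, $\dim\cW_v=\sum_{C'}\dim W^{C'}_v\le\ell M\sum_\cS\rho_\cS\Min{\abs{\cS\cap\overline{N(v,\cG)}},IC_{FLP}(\cG\rvert_\cS)}\le\ell M t$. If $H$ is chosen so that its restriction to each of the $n=\abs{V(\cG)}$ subspaces $\cW_v$ is injective — possible over a large enough $\F$, since for each $v$ non-injectivity on $\cW_v$ is the vanishing of a nonzero polynomial in the entries of $H$ (using $\dim\cW_v\le$ the number of rows of $H$), and there are only $n$ such conditions — then $v$ recovers every $r_{C'}$, in particular $r_{\phi_v(j)}$ for each $j\in[M]$. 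Writing $\cS=\cS(\phi_v(j))$, one has $r_{\phi_v(j)}=V^\cS_v\bfx_v^j+\sum_{w\in\cS\cap N(v,\overline\cG)}V^\cS_w\bfy^{\phi_v(j)}_w$, so the IA property of the subcode and injectivity of $V^\cS_v$ yield $\bfx_v^j$; ranging over $j$ recovers $\bfx_v$. This completes the induction, and the decodability argument shows the global $\set{V_v}$ satisfy \eqref{interference_align_condition}, so the strengthened hypothesis is preserved; taking $(\rho,t)$ optimal gives $\Ind(\cG)\le IC_{FLP}(\cG)$.

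The step I expect to be the crux is the ``outer compression'' by $H$: a naive direct sum of the subcodes would transmit $\sum_\cS\rho_\cS\,IC_{FLP}(\cG\rvert_\cS)$ symbols per message symbol, generally much larger than $t$. The gain comes from observing that, once $v$ cancels its side information, its residual uncertainty is confined to the subspace $\cW_v$ whose dimension is controlled precisely by the left-hand side of \eqref{local_partial_rec_lin_prog} — both the truncation to $IC_{FLP}(\cG\rvert_\cS)$ (the subcode's own size) and to $\abs{\cS\cap\overline{N(v,\cG)}}$ (the interferers, plus $v$ itself, visible through that subcode) enter here — so that a single generic matrix of $M\ell t$ rows can be injective on all $n$ of these subspaces simultaneously. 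The remaining ingredients (splitting messages into $M$ blocks via the injections $\phi_v$, the rational blow-ups to a common field and packet length, and disposing of the $\cS=V(\cG)$ term) are routine bookkeeping.
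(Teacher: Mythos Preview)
Your proof is correct and follows the same architecture as the paper: recursively obtain interference-alignment codes $G_j$ for the selected subgraphs $\cG\rvert_{\cS_j}$, lay them out block-diagonally (with multiplicity $M\rho_{\cS_j}$), and compress with an outer linear map to the target dimension $M\ell t$. The paper's construction is exactly $[\Phi_1 G_1'\;\cdots\;\Phi_t G_t']$ with $\Phi$ a $[\sum_j N_j K_j,\,m^\star N]$-MDS matrix, relying on the algebra of \cref{lem:full_rank} to certify decodability.

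The one substantive difference is in how the outer compression is justified. The paper takes $\Phi$ to be MDS and invokes the full-rank analysis of \cref{lem:full_rank}; you instead take $H$ generic and argue by dimension count: after cancelling side information, the residual for vertex $v$ lies in a subspace $\cW_v$ whose dimension is bounded by $\ell M\sum_\cS\rho_\cS\min\{\abs{\cS\cap\overline{N(v,\cG)}},IC_{FLP}(\cG\rvert_\cS)\}\le\ell M t$, so a Schwartz--Zippel choice of $H$ with $M\ell t$ rows is simultaneously injective on all $n$ such subspaces. This is a cleaner, more conceptual account of the same step---it makes transparent that the LP constraint is precisely a bound on $\dim\cW_v$---at the price of being nonconstructive, whereas the paper's MDS choice is explicit. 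Your handling of the self-referential term $\cS=V(\cG)$ is also appropriate; the paper leaves this implicit.
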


The index code corresponding to the linear program in \cref{thm:recur} can be easily obtained from the index coding solution for \cref{thm:main} as shown in \cref{sec:proofs}.
Let us now give an explicit example of a graph where our index coding scheme is a strict improvement over the existing schemes. Of course, since our scheme is more general, it is clear that its performance must be at least as good for every graph $\cG$. 

Consider the index coding problem described by the graph in \cref{fig:local_example}. For this graph, the index code based on the fractional local chromatic number has broadcast rate $4$,  the index code based on just the fractional partial clique clique cover has broadcast rate $11/3$ and the proposed scheme combining the local chromatic number and partial clique cover in \cref{local_partial_IP} has  broadcast rate $7/2$. Similarly, \cref{fig:comp_yhk} shows an example for which the recursive version of the proposed scheme in \cref{thm:recur} is a strict improvement over the corresponding recursive scheme proposed in \cite[theorem 4]{arbabjolfaei2014local}, with broadcast rates $3$ and $7/2$, respectively.

\begin{figure}
  \centering
 \includegraphics[width=6cm]{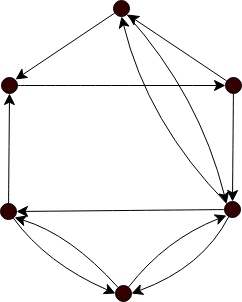}
\caption{Side information graph for which the broadcast rate of the proposed scheme in \cref{thm:main} is a strict improvement over the existing schemes (fractional local chromatic number and fractional partial clique cover).}	
\label{fig:local_example}
\end{figure}

\begin{figure}
  \centering
 \includegraphics[width=8cm]{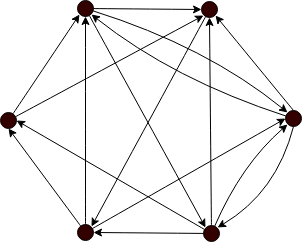}
\caption{Side information graph for which the broadcast rate of the proposed recursive scheme in \cref{thm:recur} is a strict improvement over the existing recursive schemes in \cite[theorem 4]{arbabjolfaei2014local}.} 
\label{fig:comp_yhk}
\end{figure}

\subsubsection{Generalized Interlinked Cycle Cover}
\label{sec:GIC}

We now generalize the fractional clique cover scheme in another direction. Since cycle and clique covers yield natural solutions to the index coding problem it makes sense to combine these structures to obtain a more general solution. The $n$-GIC (Generalized Interlinked Cycle) graph structure presented in \cite{thapa2015generalized} provides such a solution. Our contribution is to show that this scheme can be further generalized by combining it with the partial clique cover technique presented above. We will call the relevant graph structure used to cover the side-information graph a $(k,n_1)$-GIC; here we simply define this structure, and the details of the scheme will be postponed to \cref{sec:IC_constructions}.

We say a graph $\cG$ with $n$ vertices is a $(k,n_1)$-GIC if it has the following properties:
\begin{enumerate}
\item $\cG$ contains a set of $n_1$ vertices, denoted by $ V_{\mathrm{I}} $, such that for any vertex $v_i \in V_{\mathrm{I}}$ there are at least $n_1-k-1$ vertices $v_j\in V_{\mathrm{I}}$ with the property that there is a path from $ v_i $ to $ v_j $ which does not include any other vertex of $ V_{\mathrm{I}} $. We call $ V_{\mathrm{I}} $ the \emph{inner vertex set}, and let $ V_{\mathrm{I}}=\{v_1,v_2,\dotsc,v_{n_1}\} $. The vertices of $ V_{\mathrm{I}}$ are referred to as \emph{inner vertices}. 

\item Due to the above property, we can always find a \emph{directed rooted tree} (denoted by $T_i$) with maximum number of leaves in $V_\rmI$ and root vertex $ v_i $, having at least $n_1-k-1$  other vertices in $V_\rmI \setminus \{v_i\} $ as leaves. The trees may not be unique. Denote the union of all $n_1$ such trees by $D \triangleq  \bigcup_{i : v_i\in V_{\rmI}} T_i$. Then the digraph $D$ must satisfy the following two conditions:
\begin{property}\mbox{}\\[-\baselineskip]\label{Digraph_properties}
\begin{enumerate}
\item Every cycle in the digraph $D$ contains at least two vertices in the vertex set $V_\rmI$.
\item For all ordered pairs of inner vertices ($ v_i,v_j $), $ i\neq j $, there is only one path in $D$ from $v_i$ to $v_j$ that does not include any other vertices in $V_\rmI$.
\end{enumerate}
\end{property}
\end{enumerate}

\subsubsection{Example}

\begin{figure}
  \centering
 \includegraphics[width=6cm]{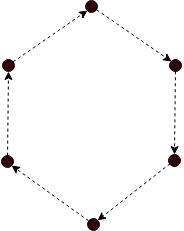}
\caption{Side information graph for which our proposed $(k, n_1)$-GIC scheme outperforms the $n$-GIC scheme of \cite{thapa2015generalized}. All edges are present except those indicated by dashed lines.}	
\label{fig:GICexample}
\end{figure}

We provide an example where the proposed GIC scheme performs strictly better than the GIC scheme in \cite{thapa2015generalized} in \cref{fig:GICexample}. The graph in \cref{fig:GICexample} has an index coding rate of $2$ using a partial clique cover scheme. Since the proposed GIC scheme is a generalization of partial clique covers it performs at least as well. 

A vector linear scheme using a fractional cover with the GIC scheme proposed in \cite{thapa2015generalized} gives an index coding rate of $5/2$. Note that for the graph proposed in \cref{fig:GICexample}, there is no GIC (as proposed in \cite{thapa2015new}) with inner vertex set of size $4$, since this violates condition $a)$ in \cref{Digraph_properties}.

\section{Proofs for Index Coding Rate Approximations}
\label{sec:proofs}

In this section we prove the results of \cref{sec:family_results}. Typically we will do so by establishing a more general result, from which we just need to plug in certain parameters of the graph family in question to obtain the more specific statement. To begin we consider bounds which exploit the graph having small chromatic number.

\subsection{Bounds Using Chromatic Number}
In \cite{DBLP:journals/corr/Mazumdar0V17}, several results showing constant-factor approximations for both storage capacity and index coding rate in planar graphs are given. For the most part, these results depend not specifically on the planarity, but on the small chromatic number of the graph in question, as well as the chromatic number of the subgraph induced by removing a maximal set of triangles. In particular, the techniques 
used to show a constant-factor approximation of $\Ind(\G)$ for planar graphs depend not only on the 4-colorability of planar graphs, but also on the 3-colorability of triangle-free  planar graphs. Here we generalize and extend these techniques to give approximations in terms of the chromatic number of the graph.

To begin, the same argument used in \cite{DBLP:journals/corr/Mazumdar0V17} to show clique packing is a $\frac{3}{2}$-approximation of $\Cap(\G)$ for planar graphs easily extends to show \cref{chromatic_cap_bound}; we reproduce essentially the same proof as that of \cite{DBLP:journals/corr/Mazumdar0V17} for completeness, as some of the intermediate steps will be useful in subsequent results. We will also make use of the fact, noted in \cite{DBLP:journals/tit/Mazumdar15}, that $\Cap(\G) \leq \VC(\G)$, the size of the minimum vertex cover.

\begin{theorem}
\label{chromatic_cap_bound}
Let $\G$ be a graph, $T$ be the vertices of a maximal set of $t = \frac{|T|}{3}$ vertex-disjoint triangles in $\G$, and $\G' = \G \rvert_{V(\G) \setminus T}$. Suppose the minimum vertex cover of $\G'$ has size $k$, and $\chi(\G') \leq l$. Then
\begin{equation}
\frac{\Cap(\G)}{\FCP(\G)} \leq \frac{3t + k}{2t + kl/(2l-2)}. 
\end{equation}
\end{theorem}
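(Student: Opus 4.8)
The plan is to bound the numerator and denominator of $\Cap(\G)/\FCP(\G)$ separately: an upper bound on $\Cap(\G)$ coming from a small vertex cover of $\G$, and a lower bound on $\FCP(\G)$ coming from an explicit feasible fractional clique packing. For the numerator I would invoke the fact $\Cap(\G) \le \VC(\G)$ together with the observation that if $W$ is a minimum vertex cover of $\G'$, then $T \cup W$ is a vertex cover of $\G$: every edge of $\G$ either meets $T$ or lies entirely inside $\G'$. Hence $\Cap(\G) \le \VC(\G) \le 3t + k$.

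For the denominator I would build a feasible solution of the fractional clique packing LP of $\G$. The $t$ vertex-disjoint triangles supported on $T$ are vertex-disjoint cliques of size $3$, so giving each weight $1$ is feasible and contributes $t(3-1)=2t$ to the objective. Since $T$ and $V(\G')$ are disjoint, I can additionally place a maximum fractional matching of $\G'$ on the edges of $\G'$ (each edge is a clique of size $2$ contributing $1$ per unit weight), with no vertex constraint violated. This yields $\FCP(\G) \ge 2t + \FMM(\G') = 2t + \FVC(\G')$, the last equality being LP duality for fractional matching/vertex cover.

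It then remains to show $\FVC(\G') \ge \tfrac{l}{2(l-1)}\,k$, equivalently $\VC(\G') \le 2(1-1/l)\,\FVC(\G')$, for any graph with $\chi(\G') \le l$; this is the one genuinely non-routine step, and I would prove it by half-integral LP rounding. The fractional vertex cover LP admits a half-integral optimum $x$; let $V_0, V_{1/2}, V_1$ be the vertices with $x_v$ equal to $0,\tfrac12,1$. The constraints $x_u + x_v \ge 1$ force every edge of $\G'$ either to meet $V_1$ or to lie inside $\G'\rvert_{V_{1/2}}$, so $V_1$ together with any vertex cover of $\G'\rvert_{V_{1/2}}$ covers $\G'$. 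As $\G'\rvert_{V_{1/2}}$ is $l$-colorable, deleting its largest color class (of size $\ge |V_{1/2}|/l$) leaves a vertex cover of it of size $\le (1-1/l)|V_{1/2}|$. Comparing $\VC(\G') \le |V_1| + (1-1/l)|V_{1/2}|$ with $\FVC(\G') = |V_1| + \tfrac12|V_{1/2}|$ and using $1-1/l \ge \tfrac12$ for $l \ge 2$ gives the factor $2(1-1/l) = \tfrac{2l-2}{l}$, hence $\FVC(\G') \ge \tfrac{l}{2l-2}\,k = \tfrac{kl}{2l-2}$.

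Combining the three bounds gives $\Cap(\G)/\FCP(\G) \le (3t+k)/\bigl(2t + kl/(2l-2)\bigr)$. The main obstacle is really just putting the rounding inequality in the convenient form above; everything else is bookkeeping. As a consistency check I would verify that setting $t=0$, $l=k$ recovers \cref{chromatic_cap}, and that the triangle-free planar instantiation ($l=3$, by Grötzsch's theorem) recovers the $\tfrac32$-bound of \cite{DBLP:journals/corr/Mazumdar0V17}.
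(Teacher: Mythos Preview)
Your proposal is correct and follows essentially the same route as the paper: upper-bound $\Cap(\G)$ via a vertex cover $T\cup W$, lower-bound $\FCP(\G)$ by the triangles plus a fractional matching on $\G'$, pass to $\FVC(\G')$ by duality, and then bound the integrality gap of vertex cover on an $l$-colorable graph by half-integral rounding against the largest color class. The only cosmetic difference is that the paper explicitly notes $\G'$ is triangle-free (maximality of $T$) to identify $\FCP(\G')=\FMM(\G')$, whereas you just use $\FMM(\G')\le\FCP(\G')$, which is enough.
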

\begin{proof}
To start, we have the upper bound $\Cap(\G) \leq 3t+k$, assuming perfectly efficient storage on all triangles, and using the bound $\Cap(\G') \leq k$ on the remainder of the graph. We have also a lower bound $\FCP(\G) \geq 2t + \FCP(\G')$, by including each triangle in $T$ in the fractional clique packing, then using the optimal packing on $\G'$.

Then as $\G'$ is triangle-free, the maximum fractional clique packing is just a maximum fractional matching, which is equal to the minimum fractional vertex cover by duality. So to conclude, we need only bound the integrality gap of vertex cover on $\G'$. Suppose we have a fractional vertex cover with variables $x_{v_1}, \dotsc, x_{v_n}$. Vertex cover is $\frac{1}{2}$-integral, so assume all $x_{v_i} \in \set{0, \frac{1}{2}, 1}$, and as it is a fractional vertex cover, if $(v_i, v_j)$ is an edge, then $x_{v_i} + x_{v_j} \geq 1$. $\G'$ is $l$-colorable by assumption, so let $I_1, \dotsc, I_l$ be a partition of $\set{v_i : x_{v_i} = \frac{1}{2}}$ corresponding to an $l$-coloring of $\G'$, such that
\begin{equation*}
\sum_{v \in I_1} x_{v} \geq \sum_{v \in I_2} x_v \geq \cdots \geq \sum_{v \in I_l} x_v.
\end{equation*}
First note that if $l = 1$, there are no edges, so the integrality gap of vertex cover is 1. Otherwise, we construct an integral vertex cover $y_{v_1}, \dotsc, y_{v_n}$ as follows: if $x_{v_i}$ is integral, then $y_{v_i} = x_{v_i}$. Otherwise, if $x_{v_i} = \frac{1}{2}$ and $v_i \in I_1$, we set $y_{v_i} = 0$, and if $x_{v_i} = \frac{1}{2}$ but $v_i \not \in I_1$, we set $y_{v_i} = 1$. This is a vertex cover, because the only rounded-down variables were those $x_{v_i}$ with $v_i \in I_1$, and the other endpoint of any edge with $v_i$ must be in $I_2 \cup \cdots \cup I_l$, as the partition corresponds to a coloring. $I_1$ comprises at least a $\frac{1}{l}$-fraction of the rounded variables, so we rounded at most an $\frac{l-1}{l}$-fraction of variables up from $\frac{1}{2}$ to 1, thus 
\begin{equation*}
\sum_{v \in V(\G')} y_v \leq \frac{2(l-1)}{l} \sum_{v \in V(\G')} x_v.
\end{equation*}
This shows the integrality gap of vertex cover is at most $\frac{2l-2}{l}$, so
\begin{equation*}
\FCP(\G) \geq 2t + \FCP(\G') \geq 2t +  \frac{l}{2l-2} \cdot k.
\end{equation*}

Combining these two bounds, we have
\begin{equation*}
\frac{\Cap(\G)}{\FCP(\G)} \leq \frac{3t+k}{2t + (l/(2l-2)) \cdot k}.
\end{equation*}
\end{proof}

This bound itself will be useful for proving further bounds, but also immediately provides a guarantee on the approximation quality of $\FCP(\G)$ for graphs with small chromatic number, as if $\G'$ is a subgraph of $\G$, then $\chi(\G') \leq \chi(\G)$.

\begin{corollary}
Let $\G$ be a graph with $\chi(\G) = l$. Then
\begin{equation*}
\frac{\Cap(\G)}{\FCP(\G)} \leq \max\left(\frac{3}{2}, 2 - \frac{2}{l} \right).
\end{equation*}
\end{corollary}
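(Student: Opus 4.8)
The plan is to obtain the corollary as an immediate consequence of \cref{chromatic_cap_bound} by making the natural choice of the triangle set, followed by an elementary estimate. Given $\G$ with $\chi(\G) = l$, I would first dispose of the trivial case $l = 1$, in which $\G$ is edgeless and both $\Cap(\G)$ and $\FCP(\G)$ vanish, so assume $l \geq 2$. Next, fix a \emph{maximal} collection of vertex-disjoint triangles in $\G$, let $T$ be the union of their vertex sets, put $t = |T|/3$, and set $\G' = \G\rvert_{V(\G)\setminus T}$. Maximality forces $\G'$ to be triangle-free, and since $\G'$ is an induced subgraph we have $\chi(\G') \leq \chi(\G) = l$; write $k$ for the size of a minimum vertex cover of $\G'$. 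Since $l$ is a valid upper bound on $\chi(\G')$, \cref{chromatic_cap_bound} applies verbatim and gives
\begin{equation*}
\frac{\Cap(\G)}{\FCP(\G)} \leq \frac{3t + k}{2t + kl/(2l-2)}.
\end{equation*}

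It then remains to show the right-hand side is at most $\max(3/2,\, 2 - 2/l)$ for every admissible pair $(t,k)$. Writing $c = l/(2l-2) > 0$, the point is that $\frac{3t+k}{2t+ck}$ is a weighted average of the two ``extreme-ray'' ratios $3/2$ (the $k = 0$ case) and $1/c$ (the $t = 0$ case): setting $a = 2t \geq 0$ and $b = ck \geq 0$, one has $3t + k = \tfrac32 a + \tfrac1c b$ and $2t + ck = a + b$, so the quantity equals $\frac{\frac32 a + \frac1c b}{a + b}$, a convex combination of $\frac32$ and $\frac1c$ whenever $a + b > 0$. For $l \geq 2$ the graph $\G$ has at least one edge, hence $t \geq 1$ or $k \geq 1$, so indeed $a + b > 0$. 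Therefore the ratio is at most $\max(3/2,\, 1/c)$, and since $1/c = (2l-2)/l = 2 - 2/l$, we conclude $\frac{\Cap(\G)}{\FCP(\G)} \leq \max(3/2,\, 2 - 2/l)$.

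There is essentially no hard step here: the only content beyond quoting \cref{chromatic_cap_bound} is the one-line convex-combination observation above. The single point warranting care is that overestimating $\chi(\G')$ by $\chi(\G)$ weakens the inequality in the right direction — but this is immediate since $x \mapsto x/(2x-2)$ is decreasing on $(1,\infty)$, so replacing $\chi(\G')$ by the larger value $l$ only shrinks the denominator of the bound; and in fact \cref{chromatic_cap_bound} is already phrased with the inequality $\chi(\G') \leq l$, so even this remark is unnecessary.
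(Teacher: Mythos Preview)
Your proof is correct and follows the same overall approach as the paper: invoke \cref{chromatic_cap_bound} with $l = \chi(\G) \geq \chi(\G')$, then bound the resulting ratio $\frac{3t+k}{2t+kl/(2l-2)}$. The only difference is in how that final elementary bound is carried out. The paper splits into the cases $l \in \{2,3,4\}$ and $l \geq 5$ and does a separate algebraic manipulation in each; your convex-combination observation (writing the ratio as $\frac{\frac{3}{2}a + \frac{1}{c}b}{a+b}$ with $a = 2t$, $b = ck$) handles both cases at once and is a bit cleaner, but the substance is the same.
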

\begin{proof}
If $l = 2, 3,$ or $4$, then $\frac{l}{2l-2} \geq \frac{2}{3}$, so
\begin{equation*}
\frac{3t+k}{2t + (l/(2l-2)) \cdot k} \leq \frac{3t+k}{2t + (2/3) \cdot k} = \frac{3}{2} \cdot \frac{t+k/3}{t+k/3} = \frac{3}{2}.
\end{equation*}
Otherwise $l \geq 5$, so $3t \leq \frac{4l-4}{l} \cdot t$. Then we have
\begin{equation*}
\frac{3t+k}{2t + (l/(2l-2)) \cdot k} \leq \frac{((4l-4)/l) \cdot t + k}{2t + (l/(2l-2)) \cdot k} = \frac{2l-2}{l} \cdot \frac{((4l-4)/l) \cdot t + k}{((4l-4)/l) \cdot t + k} = \frac{2l-2}{l} = 2 - \frac{2}{l},
\end{equation*}
so
\begin{equation*}
\frac{\Cap(\G)}{\FCP(\G)} \leq \max\left(\frac{3}{2}, 2 - \frac{2}{l} \right),
\end{equation*}
as desired.
\end{proof}

In the specific case that $\G$ is 3-colorable (such as when $\G$ is outerplanar), we can use this additional information along with an idea from the above proof to improve further.

\begin{theorem}
Let $\G$ be a graph with $\chi(\G) = 3$. Then 
\begin{equation*}
\frac{\Cap(\G)}{\FCP(\G)} \leq \frac{4}{3}.
\end{equation*}
\end{theorem}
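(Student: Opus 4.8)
The plan is to \emph{avoid} the triangle-extraction device of \cref{chromatic_cap_bound}. Extracting a maximal triangle packing and then charging each extracted triangle three vertices toward the $\Cap$ upper bound but only two toward the $\FCP$ lower bound already costs a factor $3/2$ on those triangles, so no amount of care on the remainder can push the overall ratio below $3/2$ that way. Instead I would bound $\Cap(\G)$ directly by the minimum vertex cover and bound $\FCP(\G)$ from below by the fractional matching number, reserving the use of $3$-colorability for a single place: controlling the integrality gap of vertex cover.

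The two inequalities that come for free are $\Cap(\G)\le\VC(\G)$ (noted in \cite{DBLP:journals/tit/Mazumdar15}, already used in \cref{chromatic_cap_bound}) and $\FCP(\G)\ge\FMM(\G)=\FVC(\G)$, where the equality is LP duality and the inequality holds because any fractional matching is a feasible point of the fractional-clique-packing LP supported on edges, with the same objective value. So the theorem reduces to the purely combinatorial claim that $\VC(\G)\le\tfrac43\,\FVC(\G)$ whenever $\chi(\G)=3$.

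For that claim I would reuse the rounding argument embedded in the proof of \cref{chromatic_cap_bound}, now applied to $\G$ itself rather than to a triangle-free subgraph (inspecting that argument, it only ever invokes $l$-colorability, never triangle-freeness). In outline: take an optimal half-integral fractional vertex cover $x$ of $\G$; properly $3$-color the set $S=\{v:x_v=\tfrac12\}$ into classes $I_1,I_2,I_3$ with $|I_1|\ge|I_2|\ge|I_3|$; define $y_v=x_v$ when $x_v$ is integral, $y_v=0$ on $I_1$, and $y_v=1$ on $I_2\cup I_3$. This $y$ is a genuine vertex cover: an edge with both endpoints in $S$ has its endpoints in distinct color classes, hence not both in $I_1$; an edge joining a $0$-vertex to a $\tfrac12$-vertex cannot occur in a feasible fractional cover; and every other edge has an integrally-$1$ endpoint. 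Since $|I_1|\ge\tfrac13|S|$, at most $\tfrac23|S|$ of the half-valued vertices were rounded up, so $\sum_v y_v\le\tfrac43\sum_v x_v$, i.e. $\VC(\G)\le\tfrac43\FVC(\G)$. Chaining the pieces gives $\Cap(\G)\le\VC(\G)\le\tfrac43\FVC(\G)=\tfrac43\FMM(\G)\le\tfrac43\FCP(\G)$.

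I do not anticipate a genuine obstacle; the only thing to get right is the realization that triangle extraction is actively harmful in the $3$-colorable regime, so that the $l=3$ instance of the vertex-cover integrality-gap bound should be applied to $\G$ directly rather than to a triangle-free residual graph.
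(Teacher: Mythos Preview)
Your proposal is correct and follows essentially the same route as the paper: bound $\Cap(\G)\le\VC(\G)$, bound $\FCP(\G)\ge\FMM(\G)=\FVC(\G)$, and control $\VC(\G)/\FVC(\G)$ via the $l=3$ case of the vertex-cover rounding argument from \cref{chromatic_cap_bound} applied directly to $\G$. The paper's proof is exactly this chain, and your observation that the rounding step uses only $l$-colorability (not triangle-freeness) is precisely the point the paper exploits.
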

\begin{proof}
Recall that fractional minimum vertex cover and fractional maximum matching are dual, so $\FMM(\G) = \FVC(\G)$ for all $\G$. We showed in the above proof that when $\chi(\G) = 3$, the integrality gap of vertex cover is at most $\frac{2 \cdot 3 - 2}{2} = \frac{4}{3}$, so we have $\frac{3}{4} \VC(\G) \leq \FVC(\G)$. As the maximum fractional matching is a feasible fractional clique packing with cliques of size at most $2$, we have $\FMM(\G) \leq \FCP(\G)$. In \cite{DBLP:journals/tit/Mazumdar15} it is observed that $\Cap(\G) \leq \VC(\G)$. Combining this, we have
\begin{equation*}
\frac{3}{4} \VC(\G) \leq \FVC(\G) = \FMM(\G) \leq \FCP(\G) \leq \Cap(\G) \leq \VC(\G),
\end{equation*}
thus $\FCP(\G)$ is within a $\frac{4}{3}$ factor of $\Cap(\G)$.
\end{proof}

\begin{corollary}
Let $\G$ be a graph with $\chi(\G) = k \geq 2$. Then
\begin{equation*}
\frac{\Cap(\G)}{\FCP(\G)} \leq 2 - \frac{2}{k}.
\end{equation*}
\end{corollary}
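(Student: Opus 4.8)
The plan is to split on the value of $k$ and reuse results already established, so that the corollary is purely a repackaging. When $k \geq 4$ we have $2 - \frac{2}{k} \geq \frac{3}{2}$, hence $\max\!\left(\frac{3}{2},\, 2 - \frac{2}{k}\right) = 2 - \frac{2}{k}$, and the claim follows immediately from the preceding corollary. When $k = 3$ there is again nothing to do: the bound $\frac{\Cap(\G)}{\FCP(\G)} \leq \frac{4}{3} = 2 - \frac{2}{3}$ is exactly the theorem just proved for $3$-colorable graphs. Thus the only remaining case is $k = 2$, where $2 - \frac{2}{k} = 1$ and the earlier corollary is too weak (it only gives $\frac{3}{2}$).

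For $k = 2$ the graph $\G$ is bipartite, and I would show that the chain $\FMM(\G) \leq \FCP(\G) \leq \Cap(\G) \leq \VC(\G)$ collapses to a single value. The first inequality holds because a maximum fractional matching is a feasible fractional clique packing using only $2$-cliques; the last is the bound $\Cap(\G) \leq \VC(\G)$ of \cite{DBLP:journals/tit/Mazumdar15}. Since $\G$ is bipartite, K\H{o}nig's theorem gives $\MM(\G) = \VC(\G)$, and combined with the always-true $\MM(\G) \leq \FMM(\G) = \FVC(\G) \leq \VC(\G)$ this forces $\FMM(\G) = \VC(\G)$; hence every term in the chain equals $\VC(\G)$, so $\Cap(\G) = \FCP(\G)$ and the ratio is $1 = 2 - \frac{2}{2}$. (Alternatively, one may observe that bipartite graphs are perfect, so the chain $\alpha(\G) \leq \Ind(\G) \leq \FCC(\G) \leq \CC(\G) = \chi(\overline{\G}) = \omega(\overline{\G}) = \alpha(\G)$ recorded in \cref{sec:defs} collapses, giving $\FCC(\G) = \Ind(\G)$ and hence $\FCP(\G) = \Cap(\G)$ after complementing.)

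Assembling the three cases yields $\frac{\Cap(\G)}{\FCP(\G)} \leq 2 - \frac{2}{k}$ for every $k \geq 2$. There is no genuine obstacle here, as the statement just bundles together the two theorems and one corollary above; the sole point needing an ingredient not already displayed is the exact bipartite bound in the $k = 2$ case, which rests on K\H{o}nig's theorem (equivalently, on bipartite graphs being perfect and the resulting collapse of the index-coding sandwich).
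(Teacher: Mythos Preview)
Your proposal is correct and matches the paper's intent: the corollary is stated without proof precisely because it is meant to be read off from the two results immediately preceding it, exactly as you do for $k \geq 4$ and $k = 3$. For $k = 2$, your K\H{o}nig argument is equivalent to observing that the integrality-gap bound $\frac{2l-2}{l}$ from the proof of the $\chi(\G)=3$ theorem equals $1$ when $l = 2$, so the same chain $\FMM(\G) \leq \FCP(\G) \leq \Cap(\G) \leq \VC(\G)$ collapses; this is presumably why the paper felt no separate case was needed.
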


Now we move our attention to index coding. In the next two theorems, we 
provide two more general bounds on $\Ind(\G)$, each of which is a good approximation for certain special cases.

\begin{theorem}
\label{chromatic_ind_bound}
Let $\G$ be a graph with $\chi(\G) \leq j$, $T$ be the vertices of a maximal set of $t = \frac{|T|}{3}$ vertex-disjoint triangles in $\G$, $\G' = \G \rvert_{V(\G) \setminus T}$, and $k$ be the size of a minimum vertex cover of $\G'$. Suppose further that $\chi(\G') = l \geq 2$. Then
\begin{equation*}
\frac{\FCC(\G)}{\Ind(\G)} = \frac{n - \FCP(\G)}{\Ind(\G)} \leq j \cdot \frac{l-2}{2l-2} - j \cdot \frac{l-4}{2l-2} \cdot \frac{t}{n} + \frac{l}{2l-2}.
\end{equation*}
\end{theorem}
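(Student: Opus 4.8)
The plan is to combine an upper bound on $\FCC(\G)$ coming from a fractional clique packing with two different lower bounds on $\Ind(\G)$, the two lower bounds being tied together by a carefully weighted average. The packing bound is exactly the one extracted in the proof of \cref{chromatic_cap_bound}: since $T$ is a \emph{maximal} family of vertex-disjoint triangles, $\G'$ is triangle-free, and that proof shows $\FCP(\G) \ge 2t + \FCP(\G') \ge 2t + \tfrac{l}{2l-2}\,k$. Using $\FCC(\G) = n - \FCP(\G)$, this gives $\FCC(\G) \le n - 2t - \tfrac{l}{2l-2}\,k$. For the denominator I would use two facts: (i) the complement of a minimum vertex cover of $\G'$ is an independent set, so $\alpha(\G) \ge \alpha(\G') = (n-3t) - k$, and hence $\Ind(\G) \ge n - 3t - k$ via $\alpha(\G) \le \MAIS(\G) \le \Ind(\G)$; and (ii) since $\chi(\G) \le j$, the largest color class of a proper $j$-coloring gives $\Ind(\G) \ge \alpha(\G) \ge n/j$. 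Interpolating, $\Ind(\G) \ge \lambda \tfrac{n}{j} + (1-\lambda)(n-3t-k)$ for every $\lambda \in [0,1]$.

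The crux is choosing $\lambda$. Writing $\beta = \tfrac{l}{2l-2}$ and letting $C$ denote the claimed right-hand side of the theorem, I would set $\lambda = 1 - \beta/C$; one checks $C - \beta = j\tfrac{l-2}{2l-2} - j\tfrac{l-4}{2l-2}\tfrac{t}{n} \ge 0$ (using $3t \le n$ since the triangles are vertex-disjoint, and $l \ge 2$, so that $(l-2) - (l-4)\tfrac{t}{n} \ge 0$ in every case $l = 2,3,4$ and $l \ge 5$ separately), hence this $\lambda$ lies in $[0,1]$ and $C > 0$. With this choice $C(1-\lambda) = \beta$ and $C\lambda = C - \beta$, so
\[
C\cdot\Ind(\G)\ \ge\ (C-\beta)\tfrac{n}{j} + \beta(n-3t-k)\ =\ n - 2t - \beta k\ \ge\ \FCC(\G),
\]
where the middle equality is precisely the rearrangement of the definition of $C$ (this is where the coefficients $\tfrac{l-2}{2l-2}$ and $\tfrac{l-4}{2l-2}$ come from). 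Dividing by $\Ind(\G)>0$ yields the theorem.

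The only genuine obstacle is recognizing that neither lower bound on $\Ind(\G)$ suffices alone: as $k$ ranges over its feasible values, $n-3t-k$ is strong exactly when $k$ is small and useless when $k$ is large, while $\FCC(\G) \le n-2t-\beta k$ behaves in the opposite way, so one is forced into a weighted average, with the weight pinned down by matching the coefficient of $k$ in the numerator and denominator. Once that is seen, what remains is the routine identity verification and the elementary inequality $(l-2) - (l-4)\tfrac{t}{n} \ge 0$.
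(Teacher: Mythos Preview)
Your proof is correct and uses the same three ingredients as the paper: the bound $\FCP(\G)\ge 2t+\tfrac{l}{2l-2}k$ from \cref{chromatic_cap_bound}, the lower bound $\Ind(\G)\ge\alpha(\G)\ge n/j$, and the lower bound $\Ind(\G)\ge\alpha(\G')=n-3t-k$. The only difference is packaging: the paper first substitutes $k=n-3t-\alpha(\G')$ into the numerator and then divides by $\alpha(\G)$, bounding the term $\tfrac{l}{2l-2}\cdot\tfrac{\alpha(\G')}{\alpha(\G)}$ by $\tfrac{l}{2l-2}$ and the remaining term using $\alpha(\G)\ge n/j$; your convex-combination-of-lower-bounds formulation with the tuned weight $\lambda$ is an equivalent rearrangement of exactly the same inequalities.
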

\begin{proof}
As seen in the proof of \cref{chromatic_cap_bound}, $\FCP(\G) \geq 2t +  \frac{l}{2l-2} \cdot k$ when $l \geq 2$. The size of the minimum vertex cover of $\G'$ is equal to the number of vertices of $\G'$ minus the size of the maximum independent set, so $k = n - 3t - \alpha(\G')$, thus
\begin{equation*}
n - \FCP(\G) \leq n - 2t - \left(\frac{l}{2l-2}\right) \cdot (n-3t -\alpha(\G')) = \frac{l-2}{2l-2} \cdot n - \frac{l-4}{2l-2} \cdot t + \frac{l}{2l-2} \cdot \alpha(\G').
\end{equation*}
For bounding $\Ind(\G)$, we have $\Ind(\G) \geq \alpha(\G) \geq \frac{n}{j}$. Then we simply combine the two bounds, using the fact that $\frac{\alpha(\G')}{\alpha(\G)} \leq 1$ (as any independent set in an induced subgraph is also an independent set in the full graph):

\begin{align*}
\frac{n - \FCP(\G)}{\Ind(\G)} &\leq \frac{\frac{l-2}{2l-2} \cdot n - \frac{l-4}{2l-2} \cdot t + \frac{l}{2l-2} \cdot \alpha(\G')}{\alpha(\G)} \\
&=  \frac{\frac{l-2}{2l-2} \cdot n - \frac{l-4}{2l-2} \cdot t}{\alpha(\G)} + \frac{\frac{l}{2l-2} \cdot \alpha(\G')}{\alpha(\G)} \\
&\leq \frac{\frac{l-2}{2l-2} \cdot n - \frac{l-4}{2l-2} \cdot t}{\alpha(\G)} + \frac{l}{2l-2} \\
&\leq \frac{\frac{l-2}{2l-2} \cdot n - \frac{l-4}{2l-2} \cdot t}{n/j} + \frac{l}{2l-2} \\
&= j \cdot \frac{l-2}{2l-2} - j \cdot \frac{l-4}{2l-2} \cdot \frac{t}{n} + \frac{l}{2l-2}.
\end{align*}
\end{proof}

If instead $\chi(\G') = 0$ or $1$, we have $\FCP(\G) \geq 2t + k$, so $n - \FCP(\G) \leq n - 2t - (n-3t-\alpha(\G')) = t + \alpha(\G')$, and thus $\frac{n - \FCP(\G)}{\Ind(\G)} \leq \frac{tj}{n} + 1 \leq \frac{j}{3} + 1$ using the notation above. One interesting feature of this bound is that the second term is negative for $l < 4$, but positive for $l > 4$, meaning that if $\chi(\G') = 2$ or $3$, then the bound is better when $\G$ has less triangles, but for $\chi(\G') > 4$ the bound becomes better as $\G$ has more triangles.

As an example of when this bound might be useful, consider the case where $\G$ is triangle-free outerplanar, so $\chi(\G) = \chi(\G') \leq 3$, and $t = 0$. Then we have
\begin{equation*}
\frac{n - \FCP(\G)}{\Ind(\G)} \leq 3 \cdot \frac{1}{4} - 3 \cdot \frac{-1}{4} \cdot \frac{0}{n} + \frac{3}{4} = \frac{3}{2},
\end{equation*}
so for this graph family the bound gives a $\frac{3}{2}$-approximation of $\Ind(\G)$. We will see later a result which attains approximation factor $\frac{3}{2}$ for general outerplanar $\G$ (not necessarily triangle-free), but there may be other graph families where this bound is the best available, in particular if $\chi(\G)$ and $\chi(\G')$ are both larger than 4 and $\G$ is known to contain a large set of triangles. We will use this bound later to prove a result about unit disk graphs as well.

Next, we show how to bound slightly differently in order to get a bound that does not depend on the chromatic number of $\G$, only on the number of triangles in $\G$ and the chromatic number of $\G' = \G \rvert_{V(\G) \setminus T}$.

\begin{theorem}
\label{triangle_ind_bound}
Let $\G$ be a graph, $T$ be the vertices of a maximal set of $t = \frac{|T|}{3}$ vertex-disjoint triangles in $\G$, $\G' = \G \rvert_{V(\G) \setminus T}$, and $\chi(\G') = l \geq 2$. Then
\begin{equation*}
\frac{\FCC(\G)}{\Ind(\G)} = \frac{n - \FCP(\G)}{\Ind(\G)} \leq \frac{l^2 n - 2ln - l^2 t + 4lt}{(2l-2)(n-3t)} + \frac{l}{2l-2}.
\end{equation*}
\end{theorem}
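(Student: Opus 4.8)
The plan is to recycle almost all of the computation from the proof of \cref{chromatic_ind_bound} and change only how $\Ind(\G)$ is lower bounded in the denominator. As there, the inequality $\FCP(\G) \geq 2t + \frac{l}{2l-2}\,k$ (which holds for $l \geq 2$, from the integrality-gap argument in \cref{chromatic_cap_bound}) together with $k = |V(\G')| - \alpha(\G') = n - 3t - \alpha(\G')$ yields the numerator estimate
\begin{equation*}
n - \FCP(\G) \leq \frac{l-2}{2l-2}\,n - \frac{l-4}{2l-2}\,t + \frac{l}{2l-2}\,\alpha(\G').
\end{equation*}
A short check shows the ``$n$-and-$t$'' part $\frac{l-2}{2l-2}n - \frac{l-4}{2l-2}t$ is nonnegative for every $l \geq 2$: since the triangles are vertex-disjoint, $3t \leq n$, so it suffices to verify $(l-2)n \geq (l-4)t$, which (using $t \leq n/3$ when $l \geq 5$) reduces to $2l \geq 2$. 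This nonnegativity is exactly what lets me substitute a lower bound for $\Ind(\G)$ in the denominator without flipping the inequality.

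The only genuinely new ingredient is a lower bound on $\Ind(\G)$ that avoids $\chi(\G)$ entirely. Since $\G'$ has $n - 3t$ vertices and is $l$-colorable, it contains an independent set of size at least $(n-3t)/l$; this set is also independent in $\G$, so by $\alpha(\G) \leq \MAIS(\G) \leq \Ind(\G)$ we get both $\Ind(\G) \geq (n-3t)/l$ and $\Ind(\G) \geq \alpha(\G) \geq \alpha(\G')$. (Note also $n - 3t \geq 1 > 0$, since $\chi(\G') = l \geq 2$ forces $\G'$ to be nonempty, so the denominator $n-3t$ is legitimate.)

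Then I would split the target ratio as
\begin{equation*}
\frac{n - \FCP(\G)}{\Ind(\G)} \leq \frac{\frac{l-2}{2l-2}n - \frac{l-4}{2l-2}t}{\Ind(\G)} + \frac{\frac{l}{2l-2}\,\alpha(\G')}{\Ind(\G)},
\end{equation*}
bound the second summand by $\frac{l}{2l-2}$ using $\alpha(\G')/\Ind(\G) \leq 1$, bound the first summand using $\Ind(\G) \geq (n-3t)/l$, and simplify
\begin{equation*}
\frac{l\left(\frac{l-2}{2l-2}n - \frac{l-4}{2l-2}t\right)}{n-3t} = \frac{l^2 n - 2ln - l^2 t + 4lt}{(2l-2)(n-3t)},
\end{equation*}
which lands exactly on the claimed bound. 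There is essentially no hard step: the ``obstacle'' is purely bookkeeping — applying the two distinct lower bounds on $\Ind(\G)$ to the right pieces, and confirming the nonnegativity noted above so the inequality points the correct way.
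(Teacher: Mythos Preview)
Your proposal is correct and follows essentially the same route as the paper: reuse the numerator bound $n-\FCP(\G)\le\frac{l-2}{2l-2}n-\frac{l-4}{2l-2}t+\frac{l}{2l-2}\alpha(\G')$ from \cref{chromatic_ind_bound}, then lower-bound $\Ind(\G)$ via $\Ind(\G)\ge\alpha(\G')\ge(n-3t)/l$ instead of via $\chi(\G)$. If anything you are more careful than the paper, which silently replaces $\Ind(\G)$ by $\alpha(\G')$ termwise without noting that the ``$n$-and-$t$'' part must be nonnegative for the inequality to go the right way (your verification of this, and of $n-3t>0$, fills a small gap the paper leaves implicit).
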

\begin{proof}
We once again use the bound
\begin{equation*}
n - \FCP(\G) \leq \frac{l-2}{2l-2} \cdot n - \frac{l-4}{2l-2} \cdot t + \frac{l}{2l-2}
\end{equation*}
from the proof of \cref{chromatic_ind_bound}, but instead of bounding $\Ind(\G) \geq \alpha(\G) \geq \frac{n}{\chi(\G)}$ as before, we bound using
\begin{equation*}
\Ind(\G) \geq \alpha(\G) \geq \alpha(\G') \geq \frac{n - 3t}{l},
\end{equation*}
which may be better when the chromatic number of $\G$ is large but not that of $\G'$, depending on the number of triangles in $\G$. This yields

\begin{align*}
\frac{n - \FCP(\G)}{\Ind(\G)} &\leq \frac{l-2}{2l-2} \cdot \frac{n}{\alpha(\G')} - \frac{l-4}{2l-2} \cdot \frac{t}{\alpha(\G')} + \frac{l}{2l-2} \\
&\leq \frac{(l-2)nl}{(2l-2)(n-3t)} - \frac{(l-4)tl}{(2l-2)(n-3t)} + \frac{l}{2l-2} \\
&= \frac{l^2 n - 2ln - l^2 t + 4lt}{(2l-2)(n-3t)} + \frac{l}{2l-2}.
\end{align*}
\end{proof}

When $\G$ is triangle-free, $\chi(\G) = \chi(\G')$ and the bounds in \cref{chromatic_ind_bound} and \cref{triangle_ind_bound} coincide. Similar also to \cref{chromatic_ind_bound}, if $\chi(\G') = 1$, one can show $\frac{n - \FCP(\G)}{\Ind(\G)} \leq \frac{t}{n - 3t} + 1$. Again, this bound will be used later to prove a result for unit disk graphs, as it is known that triangle-free unit disk graphs have small chromatic number even though unit disk graphs with triangles can have very large chromatic number.

Now we are ready to show our main result on index coding rate, which depends on the chromatic number of $\G$ and makes use of integer programming formulations of maximum independent set. To begin, we have always the lower bound
\begin{equation*}
\alpha(\G) \leq \Ind(\G),
\end{equation*}
and if $\G$ is $k$-colorable, as the largest color class is an independent set, we have
\begin{equation*}
\frac{n}{k} \leq \alpha(\G).
\end{equation*}
For an upper bound, it is shown in \cite{DBLP:journals/corr/abs-1004-1379} that
\begin{equation*}
\Ind(\G) \leq \FCC(\G).
\end{equation*}

The dual of the linear program for $\FCC(\G)$, written below, is a fractional version of maximum independent set with additional constraints for cliques of size greater than 2:
\begin{align}
\textrm{max.} \hspace{1cm} & \sum_{v \in V(\G)} x_v &\\
\textrm{s.t.} \hspace{1cm} & \sum_{v : v \in C} x_v \leq 1 & \textrm{for every clique $C$ in $\G$}  \\
& 0 \leq x_v \leq 1 & \forall v \in V(\G).
\end{align}
We denote the solution of this dual LP with all clique constraints by $\alpha_{F_n}(\G)$, and the solution of the corresponding LP with only clique constraints for cliques of size $\leq k$ by $\alpha_{F_k}(\G)$. If $k$ is a constant, then we can compute $\alpha_{F_k}(\G)$ efficiently, but we cannot compute $\alpha_{F_n}(\G)$ efficiently in general as it may have exponentially many constraints. Then as $\FCC(\G) = \alpha_{F_n}(\G)$ by duality, we have
\begin{equation*}
\alpha(\G) \leq \alpha_{F_n}(\G) = \FCC(\G) \leq \alpha_{F_{n-1}}(\G) \leq \cdots \leq \alpha_{F_2}(\G).
\end{equation*}
Since we can achieve index coding rate $\FCC(\G)$, and $\FCC(\G)$ is sandwiched between $\alpha(\G)$ and $\alpha_{F_2}(\G)$, we proceed by showing the integrality gap between these latter two quantities is fairly small for graphs with small chromatic number, from which it follows that $\FCC(\G)$ is a good approximation of $\Ind(\G)$ on these graphs. The following is a generalization of an observation made in \cite{DBLP:conf/approx/MagenM09} regarding planar graphs.

\begin{theorem}
\label{chromatic_lp_ind_bound}
If $\G$ is $k$-colorable ($k \geq 2$), then
\begin{equation*}
\frac{2}{k} \cdot \alpha_{F_2}(\G) \leq \alpha(\G) \leq \alpha_{F_2}(\G).
\end{equation*}
\end{theorem}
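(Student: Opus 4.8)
The plan is to treat the two inequalities separately. The right-hand inequality $\alpha(\G) \leq \alpha_{F_2}(\G)$ is immediate: the $0/1$ indicator vector of a maximum independent set is feasible for the LP defining $\alpha_{F_2}(\G)$, since an independent set contains at most one endpoint of every edge, so that its objective value $\alpha(\G)$ cannot exceed the LP optimum.

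For the left-hand inequality I would argue by rounding an extreme optimal solution of the $\alpha_{F_2}$-LP. The key fact to invoke is that this LP --- which has only edge constraints $x_u + x_v \leq 1$ together with box constraints --- is half-integral: under the substitution $x_v \mapsto 1 - x_v$ it becomes the standard fractional vertex cover LP, whose basic feasible solutions are known to lie in $\{0, \tfrac12, 1\}^n$. So I fix an optimal solution with every coordinate in $\{0, \tfrac12, 1\}$ and set $V_1 = \{v : x_v = 1\}$ and $V_{1/2} = \{v : x_v = \tfrac12\}$. The edge constraints force $V_1$ to be an independent set and to have no edge to $V_{1/2}$, since the endpoint sums $2$ and $\tfrac32$ both exceed $1$. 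Hence any independent set of the induced subgraph $\G\rvert_{V_{1/2}}$ can be adjoined to $V_1$ to form an independent set of $\G$; taking the largest color class of a proper $k$-coloring of $\G\rvert_{V_{1/2}}$, which inherits $k$-colorability, yields $\alpha(\G) \geq |V_1| + |V_{1/2}|/k$.

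It then remains to compare with $\alpha_{F_2}(\G) = |V_1| + |V_{1/2}|/2$: one checks that $\tfrac{k}{2}\,\alpha(\G) \geq \tfrac{k}{2}|V_1| + \tfrac12|V_{1/2}| \geq |V_1| + \tfrac12|V_{1/2}| = \alpha_{F_2}(\G)$, where the second step uses exactly the hypothesis $k \geq 2$, and then rearranges to $\tfrac{2}{k}\alpha_{F_2}(\G) \leq \alpha(\G)$. I do not expect a real obstacle here; the only places needing care are the citation or re-derivation of half-integrality for the fractional independent-set polytope, and the verification that the $\{0,\tfrac12,1\}$ structure isolates the ``fractional part'' $V_{1/2}$ so that the $k$-coloring bound is applied only to it --- if one instead tried to $k$-color the whole graph directly, the resulting factor would be $k$ rather than $k/2$.
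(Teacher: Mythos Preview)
Your proposal is correct and follows essentially the same approach as the paper: both arguments invoke half-integrality of the $\alpha_{F_2}$ LP, isolate the sets $V_1$ and $V_{1/2}$, and round by keeping $V_1$ together with the largest color class inside $V_{1/2}$ to obtain $\alpha(\G) \geq |V_1| + |V_{1/2}|/k \geq \tfrac{2}{k}\,\alpha_{F_2}(\G)$. The only cosmetic difference is that the paper phrases the rounding as modifying LP variables, whereas you phrase it as adjoining an independent set of $\G\rvert_{V_{1/2}}$ to $V_1$.
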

\begin{proof}
The upper bound is discussed above. For the lower bound, assume we have an LP solution with value $\alpha_{F_2}(\G)$. It is well-known that the linear program for independent set is $\frac{1}{2}$-integral \cite{DBLP:journals/mp/NemhauserT75}, so we can assume all $x_v$ take values in $\set{0, \frac{1}{2}, 1}$. Let $V_1$ be the set of vertices $v$ with $x_v = 1$, and $V_{1/2}$ the set with $x_v = \frac{1}{2}$, so that $\alpha_{F_2}(\G)= |V_1| + \frac{1}{2} \cdot |V_{1/2}|$. Now suppose we $k$-color the vertices of $\G$, and let $I_1, I_2, \dotsc, I_k \subseteq V_{1/2}$ be the subsets of $V_{1/2}$ corresponding to the color classes such that
\begin{equation*}
|I_1| \geq |I_2| \geq \cdots \geq |I_k|.
\end{equation*}
Now we round the fractional solution to an integral one in the following way: for every vertex $v \in I_1$, set $x_v = 1$, and for every vertex $v$ in $I_2, I_3, \dotsc, I_k$, set $x_v = 0$. This does not violate any constraints, as if in the fractional solution $x_v = 1/2$, then every neighbor $u$ of $v$ has either $x_u = 0$, or $x_u = \frac{1}{2}$, and if $v$ was rounded up it must have been in $I_1$, in which case all neighbors are in a different color class, so are rounded down.

The cost of the rounded solution is $|V_1| + |I_1|$, and as $I_1$ is the largest of the $k$ sets, we have $|I_1| \geq \frac{|V_{1/2}|}{k}$, so
\begin{equation*}
\alpha(\G) \geq |V_1| + |I_1| \geq |V_1| + \frac{|V_{1/2}|}{k} \geq \frac{2}{k} \cdot (|V_1| + \frac{1}{2} \cdot |V_{1/2}|) = \frac{2}{k} \cdot \alpha_{F_2}(\G).
\end{equation*}
\end{proof}

We note that the above bound is tight, as the all-$\frac{1}{2}$ solution is feasible for any graph, and thus an $l$-clique has $\alpha(\G) = 1, \alpha_{F_2}(\G) \geq \frac{l}{2}$. For our purposes though, improvement might be possible by instead bounding the gap between $\alpha(\G)$ and $\alpha_{F_i}(\G)$ for some $i > 2$. Some efforts in this direction and limitations to this approach are discussed in \cref{sec:further}.

\begin{corollary}
\label{chromatic_ind_cor}
Let $\G$ be a $k$-colorable graph ($k \geq 2$). Then
\begin{equation*}
\frac{\FCC(\G)}{\Ind(\G)} \leq \frac{k}{2}.
\end{equation*}
\end{corollary}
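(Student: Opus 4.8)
The plan is to assemble this as an immediate consequence of \cref{chromatic_lp_ind_bound} together with the chain of inequalities relating $\alpha(\G)$, $\Ind(\G)$, $\FCC(\G)$, and the LP relaxations $\alpha_{F_i}(\G)$ developed in the text above. First I would recall that $\FCC(\G)$ is an achievable index coding rate, i.e.\ $\Ind(\G) \leq \FCC(\G)$, and that by LP duality $\FCC(\G) = \alpha_{F_n}(\G)$, so that tightening the clique constraints gives $\FCC(\G) = \alpha_{F_n}(\G) \leq \alpha_{F_2}(\G)$. On the other side, I would invoke the lower bound $\alpha(\G) \leq \MAIS(\G) \leq \Ind(\G)$.

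Next, I would apply \cref{chromatic_lp_ind_bound} to the $k$-colorable graph $\G$, which yields $\alpha_{F_2}(\G) \leq \frac{k}{2}\,\alpha(\G)$. Chaining these together gives
\begin{equation*}
\FCC(\G) \leq \alpha_{F_2}(\G) \leq \frac{k}{2}\,\alpha(\G) \leq \frac{k}{2}\,\Ind(\G),
\end{equation*}
which is exactly the claim. (One should note $\Ind(\G) > 0$ for any nonempty graph, so dividing is harmless.)

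There is essentially no obstacle here: the corollary is a one-line deduction from \cref{chromatic_lp_ind_bound} and facts already established. All the real work lies in \cref{chromatic_lp_ind_bound} itself — namely the $\frac12$-integrality of the independent set LP and the rounding argument that discards all but the largest color class among the half-integral vertices — which is proved separately. If anything needs care in the write-up, it is simply making explicit that the quantity we achieve (the index coding rate via the fractional clique cover scheme) is bounded above by $\alpha_{F_2}(\G)$, i.e.\ that strengthening the constraint set of the maximization LP only decreases its value.
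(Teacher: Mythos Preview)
Your proposal is correct and follows exactly the same approach as the paper: the paper's proof is the single sentence ``This follows immediately from the fact that $\alpha(\G) \leq \Ind(\G) \leq \FCC(\G) \leq \alpha_{F_2}(\G)$ and the previous theorem.'' You have simply unpacked that sentence in more detail.
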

\begin{proof}
This follows immediately from the fact that $\alpha(\G) \leq \Ind(\G) \leq \FCC(\G) \leq \alpha_{F_2}(\G)$ and the previous theorem.
\end{proof}

By Brooks' theorem a graph with maximum degree $\Delta$ has chromatic number at most $\Delta + 1$, so we obtain also a result for graphs with small maximum degree.
\begin{corollary}
Let $\G$ be a graph with maximum degree $\Delta$. Then
\begin{equation*}
\frac{\FCC(\G)}{\Ind(\G)} \leq \frac{\Delta + 1}{2}.
\end{equation*}
\end{corollary}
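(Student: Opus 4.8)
The plan is to derive this immediately from \cref{chromatic_ind_cor} once we pin down the chromatic number in terms of $\Delta$. The only ingredient beyond what has already been proved is the elementary observation that every graph $\G$ with maximum degree $\Delta$ is $(\Delta+1)$-colorable: process the vertices in any order and greedily assign to each one a color avoided by its at most $\Delta$ previously-colored neighbors, so that $\chi(\G) \le \Delta + 1$. (One could instead invoke Brooks' theorem, as in the sentence preceding the corollary, but the weaker greedy bound suffices.)

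We may assume $\Delta \ge 1$, so that $k \defeq \Delta + 1 \ge 2$ and \cref{chromatic_ind_cor} is applicable to the $k$-colorable graph $\G$. Applying it directly gives
\[
\frac{\FCC(\G)}{\Ind(\G)} \le \frac{k}{2} = \frac{\Delta+1}{2},
\]
which is exactly the claimed bound. If $\chi(\G)$ happens to be strictly smaller than $\Delta+1$, one could instead invoke the corollary with that smaller value of $k$ and obtain an even better constant, so taking $k = \Delta+1$ is always safe.

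There is essentially no obstacle here: \cref{chromatic_ind_cor} already packages the chain $\alpha(\G) \le \Ind(\G) \le \FCC(\G) \le \alpha_{F_2}(\G)$ together with the integrality-gap estimate of \cref{chromatic_lp_ind_bound}, and all that remains is to substitute a coloring bound for $\chi(\G)$. The single point deserving a moment's care is the hypothesis $k \ge 2$ in \cref{chromatic_ind_cor}, which is why we restrict to $\Delta \ge 1$; the degenerate edgeless case is uninteresting in this context.
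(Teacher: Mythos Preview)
Your proof is correct and matches the paper's approach exactly: the paper simply notes that Brooks' theorem gives $\chi(\G)\le \Delta+1$ and then applies \cref{chromatic_ind_cor}. Your observation that the greedy coloring bound already suffices (and your attention to the $k\ge 2$ hypothesis) is fine extra care, but the argument is the same.
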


In fact, the theorem shows that $\alpha_{F_2}(\G)$ is a $\frac{k}{2}$-approximation of $\Ind(\G)$, which may be useful in the case that $\FCC(\G)$ cannot be computed efficiently (such as if the graph family contains arbitrarily large cliques). When nothing is known about the number of triangles in $\G$ or the chromatic number of $\G \rvert_{V(\G) \setminus T}$ other than the trivial bounds, then the bound in \cref{chromatic_ind_cor} is a strict improvement over the bounds in \cref{chromatic_ind_bound} and \cref{triangle_ind_bound}.

\subsection{Bounds Based on Graph Sparsity}
When the graph is known to be sparse, Tur\'an's theorem guarantees the existence of a large independent set. If such a set is large enough, the fact that $\alpha(\G) \leq \Ind(\G) \leq \FCC(\G) \leq n$ may give a nontrivial approximation.

\begin{theorem}
Let $\G$ be a graph with $n$ vertices and $e$ edges, $T$ the vertices of a maximal set of vertex disjoint triangles, and $\G' = \G \rvert_{V(\G) \setminus T}$. If $\chi(\G') = l$ and $l > 3$, then
\begin{equation*}
\frac{\FCC(\G)}{\Ind(\G)} = \frac{n - \FCP(\G)}{\Ind(\G)} \leq \frac{l-2}{2l-2} \cdot \frac{2e}{n} + 1.
\end{equation*}
\end{theorem}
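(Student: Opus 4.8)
The plan is to feed the clique-packing lower bound already extracted in the proof of \cref{chromatic_ind_bound} into the trivial inequality $\alpha(\G)\le\Ind(\G)$, using Tur\'an's theorem to lower bound $\alpha(\G)$ in terms of the edge count $e$.

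First I would invoke the estimate proved inside \cref{chromatic_ind_bound}: writing $k = n-3t-\alpha(\G')$ for the minimum vertex cover of the triangle-free graph $\G'$ and using $\FCP(\G)\ge 2t+\frac{l}{2l-2}k$ (valid since $l=\chi(\G')\ge 2$), one gets
\[
n-\FCP(\G)\;\le\;\frac{l-2}{2l-2}\,n-\frac{l-4}{2l-2}\,t+\frac{l}{2l-2}\,\alpha(\G').
\]
Now the hypothesis $l>3$ (so $l\ge 4$, as $l$ is an integer) makes the coefficient $\frac{l-4}{2l-2}$ nonnegative, so the $t$-term may be discarded; and $\alpha(\G')\le\alpha(\G)$ because $\G'$ is an induced subgraph. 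Hence $n-\FCP(\G)\le\frac{l-2}{2l-2}n+\frac{l}{2l-2}\alpha(\G)$.

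Second, I would apply Tur\'an's theorem in the averaged (Caro--Wei) form $\alpha(\G)\ge \frac{n}{2e/n+1}=\frac{n^2}{2e+n}$, together with $\Ind(\G)\ge\alpha(\G)$, to obtain both $\frac{n}{\Ind(\G)}\le\frac{2e}{n}+1$ and $\frac{\alpha(\G)}{\Ind(\G)}\le 1$. Dividing the previous display by $\Ind(\G)$ and bounding the two terms separately gives $\frac{\FCC(\G)}{\Ind(\G)}=\frac{n-\FCP(\G)}{\Ind(\G)}\le\frac{l-2}{2l-2}\big(\tfrac{2e}{n}+1\big)+\frac{l}{2l-2}$, and since $\frac{l-2}{2l-2}+\frac{l}{2l-2}=1$ this collapses to exactly $\frac{l-2}{2l-2}\cdot\frac{2e}{n}+1$, as claimed.

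I do not expect any real obstacle: the proof is just a bookkeeping recombination of an inequality already in hand with the standard Tur\'an/Caro--Wei bound. The only subtlety is that one must split the target's additive constant $1$ as $\frac{l-2}{2l-2}+\frac{l}{2l-2}$ across the two terms rather than estimating each in isolation, and that the sign condition $l>3$ is precisely what kills the triangle count $t$; for $l\in\{2,3\}$ this term survives and the bound would instead pick up a dependence on $t$ (handled by the neighbouring results).
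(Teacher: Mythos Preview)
Your proposal is correct and follows essentially the same approach as the paper: both invoke the upper bound on $n-\FCP(\G)$ established inside \cref{chromatic_ind_bound}, drop the $t$-term using $l>3$, bound $\alpha(\G')\le\alpha(\G)\le\Ind(\G)$, and apply Tur\'an's theorem to control $n/\alpha(\G)$, with the final simplification $\frac{l-2}{2l-2}+\frac{l}{2l-2}=1$. The only cosmetic difference is that the paper divides through by $\alpha(\G)$ (via $\Ind(\G)\ge\alpha(\G)$) and then uses $\alpha(\G')/\alpha(\G)\le 1$, whereas you first replace $\alpha(\G')$ by $\alpha(\G)$ and then use $\alpha(\G)/\Ind(\G)\le 1$; the arithmetic is identical.
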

\begin{proof}
Tur\'an's theorem tells us that
\begin{equation*}
\frac{n}{\frac{2e}{n}+1} \leq \alpha(\G) \leq \Ind(\G),
\end{equation*}
and recall from the proof of \cref{chromatic_ind_bound} that
\begin{equation*}
n - \FCP(\G) \leq \frac{l-2}{2l-2} \cdot n - \frac{l-4}{2l-2} \cdot t + \frac{l}{2l-2} \cdot \alpha(\G') \leq \frac{l-2}{2l-2} \cdot n + \frac{l}{2l-2} \cdot \alpha(\G'),
\end{equation*}
where $t = \frac{|T|}{3}$, assuming $l > 3$.
Combining, we have
\begin{align*}
\frac{n - \FCP(\G)}{\Ind(\G)} &\leq \frac{l-2}{2l-2} \cdot \frac{n}{\alpha(\G)} + \frac{l}{2l-2} \cdot \frac{\alpha(\G')}{\alpha(\G)} \\
&\leq \frac{l-2}{2l-2} \cdot \frac{n}{\alpha(\G)} + \frac{l}{2l-2} \\
&\leq \frac{l-2}{2l-2} \cdot \frac{n(\frac{2e}{n}+1)}{n} + \frac{l}{2l-2} \\
&= \frac{l-2}{2l-2} \cdot \left( \frac{2e}{n}+1 \right) + \frac{l}{2l-2} \\
&= \frac{l-2}{2l-2} \cdot \frac{2e}{n} + 1.
\end{align*}
\end{proof}

If instead $\chi(\G') \leq 3$, we cannot bound in exactly the same way (we can no longer upper bound the term $-\frac{l-4}{2l-2} \cdot t$ by 0), but can use essentially the same techniques to recover the bounds:
\begin{align*}
\chi(\G')=1, \chi(\G')=2 &\implies \frac{n - \FCP(\G)}{\Ind(\G)} \leq \frac{2et}{n^2} + \frac{t}{n} + 1 \leq \frac{2e}{3n} + \frac{4}{3}, \\
\chi(\G')=3 &\implies \frac{n - \FCP(\G)}{\Ind(\G)} \leq \frac{2et}{4n^2} + \frac{2e+t}{4n} + 1 \leq \frac{2e}{3n} + \frac{13}{12}.
\end{align*}

\begin{corollary}
\label{sparse_corollary}
Let $\G$ be a graph with $n$ vertices and $e$ edges. Then
\begin{equation*}
\frac{\FCC(\G)}{\Ind(\G)} = \frac{n - \FCP(\G)}{\Ind(\G)} \leq \max\left(\frac{e(n-2)}{n(n-1)} + 1, \frac{2e}{3n} + \frac{4}{3}\right).
\end{equation*}
\end{corollary}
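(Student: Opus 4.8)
The plan is to case-split on $l \defeq \chi(\G')$, where $T$ denotes the vertex set of a fixed maximal collection of vertex-disjoint triangles in $\G$, $t = |T|/3$, and $\G' \defeq \G\rvert_{V(\G)\setminus T}$. By maximality of $T$ the induced subgraph $\G'$ is triangle-free, so every bound proved earlier in this subsection applies to this $T$ and $\G'$; the point is simply that whatever value $l$ takes, one of the two expressions inside the claimed maximum already upper-bounds $\FCC(\G)/\Ind(\G)$, and hence so does the maximum itself.

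Case $l \geq 4$: the theorem immediately preceding this corollary gives
\begin{equation*}
\frac{\FCC(\G)}{\Ind(\G)} \leq \frac{l-2}{2l-2}\cdot\frac{2e}{n} + 1 .
\end{equation*}
Since $\frac{l-2}{2l-2} = \frac{1}{2}\left(1 - \frac{1}{l-1}\right)$ is increasing in $l$, and $\G'$ has at most $n$ vertices so that $l = \chi(\G') \leq n$, the coefficient is at most $\frac{n-2}{2n-2}$; substituting gives
\begin{equation*}
\frac{\FCC(\G)}{\Ind(\G)} \leq \frac{n-2}{2n-2}\cdot\frac{2e}{n} + 1 = \frac{e(n-2)}{n(n-1)} + 1 ,
\end{equation*}
the first term of the maximum.

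Case $l \leq 3$: for $l \in \{1,2,3\}$ the explicit case analysis displayed just after that theorem already records $\FCC(\G)/\Ind(\G) \leq \frac{2e}{3n}+\frac{4}{3}$ when $l \in \{1,2\}$ and $\FCC(\G)/\Ind(\G) \leq \frac{2e}{3n}+\frac{13}{12}$ when $l = 3$ (both following by substituting $t \leq n/3$, which holds since the triangles of $T$ are vertex-disjoint, into the relevant intermediate estimates); as $\frac{13}{12} \leq \frac{4}{3}$, both bounds are at most $\frac{2e}{3n}+\frac{4}{3}$, the second term. The degenerate possibility $l = 0$, i.e.\ $V(\G) = T$ and $n = 3t$, is handled directly: including the $t$ triangles in the fractional clique packing gives $\FCP(\G) \geq 2t$, hence $n - \FCP(\G) \leq t$, while Tur\'an's theorem yields $\Ind(\G) \geq \alpha(\G) \geq \frac{n}{\frac{2e}{n}+1}$, so that $\FCC(\G)/\Ind(\G) \leq \frac{t}{n}\left(\frac{2e}{n}+1\right) \leq \frac{1}{3}\left(\frac{2e}{n}+1\right) \leq \frac{2e}{3n}+\frac{4}{3}$. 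Taking the maximum over the two cases proves the corollary.

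The whole argument is bookkeeping on top of results already established, so there is no genuine obstacle; the only points deserving care are (i) the elementary monotonicity of $\frac{l-2}{2l-2}$ in $l$ combined with the crude-but-sufficient bound $l \leq n$, which is exactly what is needed to land on the sharper first term $\frac{e(n-2)}{n(n-1)}+1$ rather than the weaker $\frac{e}{n}+1$ that $\frac{l-2}{2l-2}<\frac{1}{2}$ alone would give; (ii) the numerical check $\frac{13}{12}\leq\frac{4}{3}$ folding the $l=3$ subcase into the second term; and (iii) disposing of the degenerate value $l=0$, which is absent from the displayed list but immediate.
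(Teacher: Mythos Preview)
Your proposal is correct and is essentially the argument the paper leaves implicit: case-split on $l=\chi(\G')$, invoke the preceding theorem for $l>3$ together with the monotonicity of $\tfrac{l-2}{2l-2}$ and the bound $l\le n$ to obtain the first term, and quote the displayed bounds for $l\in\{1,2,3\}$ (all dominated by $\tfrac{2e}{3n}+\tfrac{4}{3}$) for the second. Your explicit handling of the degenerate case $l=0$ is a small addition the paper omits.
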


We note that a similar result to \cref{sparse_corollary} in the more general context of directed graphs appears in \cite{DBLP:journals/corr/YiC17}, though when considering only undirected graphs our bound is slightly better.

\subsection{Bounds for Disk Graphs}

In general, the chromatic-number-based bounds proved earlier are not as useful for approximating the index coding rate of a disk graph, as such graphs can contain cliques of arbitrary size (and thus have arbitrarily large chromatic number). However, the situation for approximating storage capacity is much better, as even for general (i.e. non-unit) disk graphs, we can improve the trivial 2-approximation to a $\frac{3}{2}$-approximation. To do so, we combine a result of \cite{DBLP:journals/networks/MalesiskaPW98} showing that every triangle-free disk graph is 3-colorable with \cref{chromatic_cap_bound}, which depends only on the chromatic number of $\G \rvert_{V(\G) \setminus T}$.

\begin{corollary}
\label{cor:disk_cap}
Let $\G$ be a disk graph, $T$ be the vertices of a maximal set of $t = \frac{|T|}{3}$ vertex-disjoint triangles, and $k$ be the size of a minimum vertex cover of $\G \rvert_{V(\G) \setminus T}$. Then
\begin{equation*}
\frac{\Cap(\G)}{\FCP(\G)} \leq \frac{3t + k}{2t + 3k/4} \leq \frac{3}{2}.
\end{equation*}
\end{corollary}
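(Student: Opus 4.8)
The plan is to obtain this as a direct specialization of \cref{chromatic_cap_bound} with the parameter $l = 3$, so that the entire corollary reduces to one colorability fact about disk graphs plus an elementary inequality. First I would record that $\G' = \G\rvert_{V(\G)\setminus T}$ is again a disk graph: restricting a geometric representation of $\G$ to the circles corresponding to $V(\G)\setminus T$ is a geometric representation of $\G'$. Moreover, since $T$ was chosen as the vertex set of a \emph{maximal} collection of vertex-disjoint triangles, $\G'$ is triangle-free, for otherwise any triangle in $\G'$ could be adjoined to the collection, contradicting maximality.

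Next I would invoke the result of \cite{DBLP:journals/networks/MalesiskaPW98} that every triangle-free disk graph is $3$-colorable, which gives $\chi(\G') \le 3$. The quantities $t$ and $k$ in the statement are exactly the parameters in the hypothesis of \cref{chromatic_cap_bound}, and with $l = 3$ we have $\frac{l}{2l-2} = \frac{3}{4}$, so \cref{chromatic_cap_bound} yields directly
\begin{equation*}
\frac{\Cap(\G)}{\FCP(\G)} \le \frac{3t + k}{2t + \tfrac{3}{4}k}.
\end{equation*}
It then remains to verify $\frac{3t+k}{2t + \frac34 k} \le \frac32$, which after clearing denominators (the denominator being positive whenever $t$ or $k$ is nonzero) is $2(3t+k) \le 3\bigl(2t + \tfrac34 k\bigr)$, i.e.\ $6t + 2k \le 6t + \tfrac94 k$, i.e.\ $2k \le \tfrac94 k$, which holds for every $k \ge 0$. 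In the degenerate case $t = k = 0$ there is nothing to prove, since the argument in the proof of \cref{chromatic_cap_bound} already gives $\Cap(\G) \le 3t+k = 0$.

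There is no real obstacle here: the corollary is a plug-in application of \cref{chromatic_cap_bound}, and the only external input is the $3$-colorability theorem for triangle-free disk graphs. The single point worth stating carefully is that the hypotheses of \cref{chromatic_cap_bound} genuinely transfer — namely that $\G'$ is triangle-free (forcing the relevant chromatic number to be at most $3$) and that it inherits the disk-graph property — both of which are immediate as noted above; everything else is the short arithmetic comparison.
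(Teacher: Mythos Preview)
Your proposal is correct and matches the paper's own approach exactly: the corollary is obtained by combining the result of \cite{DBLP:journals/networks/MalesiskaPW98} that triangle-free disk graphs are $3$-colorable with \cref{chromatic_cap_bound} at $l=3$, followed by the elementary check that $\frac{3t+k}{2t+3k/4}\le \frac{3}{2}$. Your write-up is, if anything, slightly more careful than the paper's in noting that $\G'$ inherits the disk-graph property and in handling the degenerate case.
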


Note that without other assumptions on the graph, we may not be able to efficiently compute $\FCP(\G)$ if $\G$ has a superpolynomial number of cliques. We could attempt to instead use the weaker approximation from the proof of \cref{chromatic_cap_bound} which is used to prove the bound on $\FCP(\G)$ originally, but this requires finding a 3-coloring of $\G \rvert_{V(\G) \setminus T}$, which is hard even when the graph is known to be 3-colorable \cite{DBLP:journals/combinatorica/KhannaLS00}.

To approximate the index coding rate, we have a similar situation; we can show $\FCC(\G)$ is a good approximation by combining several known results, but we may not be able to efficiently compute $\FCC(\G)$ without imposing some further restrictions on $\G$. We first show $\FCC(\G)$ is a good approximation by combining the following two results, the first from \cite{peeters1991coloring} and the second from \cite{DBLP:journals/endm/ChalermsookV16}:

\begin{theorem}[Peeters 1991]
\label{peeters_thm}
If $\G$ is a unit disk graph, $\chi(\G) \leq 3 \omega(\G) - 2$.
\end{theorem}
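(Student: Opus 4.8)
The plan is to color $\G$ greedily along a sweep-line ordering of the disk centers, using a short geometric packing estimate to control how many colors are ever blocked. Fix a geometric representation of $\G$ in which every disk has radius exactly $1$, so that $v_i$ and $v_j$ are adjacent exactly when their centers are at Euclidean distance at most $2$. List the vertices $v_1,\dotsc,v_n$ in increasing order of the $x$-coordinate of their centers, breaking ties by $y$-coordinate (any fixed tie-break works), and color them one at a time in this order, always giving $v_i$ the smallest color not yet used on a neighbor. Such a greedy coloring uses at most $1 + \max_i b_i$ colors, where $b_i$ is the number of neighbors of $v_i$ that precede it in the order; so it suffices to show $b_i \le 3\omega(\G) - 3$ for every $i$.

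Fix $i$ and translate so the center of $v_i$ is the origin. Every neighbor of $v_i$ that precedes it in the order has its center in the closed left half-disk $R = \set{(x,y) : x \le 0,\ x^2 + y^2 \le 4}$ of radius $2$. The key claim is that $R$ is covered by three closed circular sectors $R_1, R_2, R_3$ of radius $2$ and angular width $\pi/3$ — for instance the angular ranges $[\tfrac{\pi}{2},\tfrac{5\pi}{6}]$, $[\tfrac{5\pi}{6},\tfrac{7\pi}{6}]$, $[\tfrac{7\pi}{6},\tfrac{3\pi}{2}]$ — with the property that any two points in a common $R_m$ are at distance at most $2$. This is a one-line estimate: for $p,q \in R_m$ written in polar form with radii $s,t \in [0,2]$ and angular difference $\theta \le \pi/3$, we have $\abs{p-q}^2 = s^2 + t^2 - 2st\cos\theta \le s^2 - st + t^2$, and the convex quadratic $s^2 - st + t^2$ attains its maximum over $[0,2]^2$ at a vertex of the square, where its value is $4$.

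Hence the set $B_i$ of preceding neighbors of $v_i$ is covered by the three sets $B_i \cap R_m$, each of which is a clique since vertices with centers in a common $R_m$ are pairwise adjacent. Each $B_i \cap R_m$ consists of neighbors of $v_i$, so $(B_i \cap R_m) \cup \set{v_i}$ is a clique and therefore $\abs{B_i \cap R_m} \le \omega(\G) - 1$. Thus $b_i = \abs{B_i} \le 3(\omega(\G) - 1) = 3\omega(\G) - 3$, and the greedy bound gives $\chi(\G) \le 1 + (3\omega(\G) - 3) = 3\omega(\G) - 2$.

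The only substantive step is the geometric covering lemma, and there the delicate point is the count: three sectors are both sufficient (three $\pi/3$-sectors cover a half-disk) and necessary (the points $(0,2),(0,-2),(-2,0) \in R$ are pairwise at distance greater than $2$, so no two of them may share a region of diameter at most $2$), which is exactly what pins the bound at $3\omega(\G) - 2$. One should also handle the mild boundary cases — centers at distance exactly $2$, and vertices sharing an $x$-coordinate — but the closed-sector formulation and the fixed tie-breaking rule take care of these without incident.
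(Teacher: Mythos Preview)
The paper does not supply its own proof of this statement: \cref{peeters_thm} is quoted from Peeters' 1991 paper and used as a black box to feed into \cref{clique_replacement_thm} and \cref{clique_udg_bound}. So there is no in-paper argument to compare against.

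That said, your proof is correct and is essentially Peeters' original argument: order the centers by $x$-coordinate, color greedily, and bound the number of already-colored neighbors by partitioning the left half-disk of radius $2$ into three $\pi/3$-sectors of diameter at most $2$. The diameter computation is right (the inequality $s^2+t^2-2st\cos\theta \le s^2-st+t^2 \le 4$ for $\theta\le\pi/3$, $s,t\in[0,2]$ is exactly what is needed), and adjoining $v_i$ to each sector's clique to get the $\omega(\G)-1$ bound per sector is the correct sharpening that produces the $-2$ rather than merely $3\omega(\G)$. Your closing remarks about tightness of the sector count and the handling of boundary cases are accurate but inessential to the proof itself.
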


\begin{theorem}[Chalermsook and Vaz 2017]
\label{clique_replacement_thm}
Let $\mathcal{F}$ be a graph family closed under clique-replacement (replacement of a vertex by a clique of arbitrary size). If there exists a constant $c$ such that for every graph $\G \in \mathcal{F}$, $\chi(\G) \leq c \cdot \omega(\G)$, then $\FCC(\G) \leq c \cdot \alpha(\G)$.
\end{theorem}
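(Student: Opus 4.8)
The plan is to deduce the fractional bound $\FCC(\G)\le c\,\alpha(\G)$ from the integral hypothesis $\chi\le c\,\omega$ by averaging over clique-blow-ups, using the closure of $\mathcal F$ to keep those blow-ups inside the family.

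First I would rewrite the target. By the identities recorded earlier, $\FCC(\G)=\chi_F(\overline\G)$ and $\alpha(\G)=\omega(\overline\G)$, so it suffices to establish $\chi_F(H)\le c\,\omega(H)$ for every graph $H$ to which the hypothesis applies (and then instantiate at $H=\overline\G$). The engine is the standard characterization of the fractional chromatic number as a limit of ordinary chromatic numbers of blow-ups: writing $H\langle t\rangle$ for the clique-blow-up of $H$ (replace each vertex by a $t$-clique, and join two blown-up cliques completely exactly when the original vertices were adjacent; equivalently $H\boxtimes K_t$), a proper colouring of $H\langle t\rangle$ with $m$ colours is precisely an assignment of a $t$-subset of $[m]$ to every vertex of $H$ such that adjacent vertices receive disjoint subsets. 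Hence $\chi(H\langle t\rangle)=\chi_t(H)$, the $t$-fold chromatic number, and by Fekete's lemma applied to $\chi_{s+t}\le\chi_s+\chi_t$ we get $\chi_F(H)=\inf_t \chi(H\langle t\rangle)/t$.

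Second, I would invoke the closure property. Each $H\langle t\rangle$ is obtained from $H$ by performing one clique-replacement at every vertex, so $H\in\mathcal F\Rightarrow H\langle t\rangle\in\mathcal F$; the hypothesis then yields $\chi(H\langle t\rangle)\le c\,\omega(H\langle t\rangle)$. Clique-blow-up multiplies the clique number exactly --- a largest clique of $H\langle t\rangle$ is the full $t$-blow-up of a largest clique of $H$, so $\omega(H\langle t\rangle)=t\,\omega(H)$. Combining, $\chi(H\langle t\rangle)/t\le c\,\omega(H)$ for every $t$, and letting $t\to\infty$ gives $\chi_F(H)\le c\,\omega(H)$; running this argument on the appropriate family and translating back through $\overline\G$ gives $\FCC(\G)=\chi_F(\overline\G)\le c\,\omega(\overline\G)=c\,\alpha(\G)$.

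The step I expect to be delicate is the passage through complementation: the blow-up argument produces a bound on $\chi_F$ of graphs that sit in $\mathcal F$, whereas $\FCC(\G)$ is $\chi_F$ of the \emph{complement} $\overline\G$. One therefore has to be careful that the family on which the averaging argument is actually run is itself closed under the clique-blow-up and carries the inequality $\chi\le c\,\omega$; pinning down exactly which family this is, and verifying that it is the \emph{clique}-blow-ups (rather than independent-set-blow-ups) whose iterates stay inside it, is the real content. Everything else --- the sub-multiplicativity behind the limit identity and the exact multiplicativity of $\omega$ under blow-ups --- is routine.
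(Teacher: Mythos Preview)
The paper does not supply its own proof of this statement; it is quoted from Chalermsook--Vaz and then applied to unit disk graphs. So there is no ``paper's proof'' to compare against, but your proposal can still be assessed on its own.

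Your argument has a genuine gap, and you have in fact located it yourself. The uniform clique-blow-up computation correctly establishes $\chi_F(H)\le c\,\omega(H)$ for every $H\in\mathcal F$, but that is \emph{not} the desired inequality: $\FCC(\G)\le c\,\alpha(\G)$ rewrites as $\chi_F(\overline{\G})\le c\,\omega(\overline{\G})$, and nothing places $\overline{\G}$ in $\mathcal F$. Under complementation, clique-blow-ups of $\overline{\G}$ become \emph{independent-set} blow-ups of $\G$, about which the closure hypothesis is silent, so the complementation route cannot be completed from the stated assumptions.

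The argument that does go through avoids complements and uses \emph{non-uniform} clique-replacements dictated by the LP dual. Take a rational optimum $x^\star_v=p_v/q$ of the dual of the $\FCC$ LP, so that $\sum_v x^\star_v=\FCC(\G)$ and $\sum_{v\in C}x^\star_v\le 1$ for every clique $C$. Let $\G'$ be obtained from $\G$ by replacing each vertex $v$ with a clique of size $p_v$; closure gives $\G'\in\mathcal F$. Then $|V(\G')|=q\cdot\FCC(\G)$, while $\alpha(\G')=\alpha(\G)$ (an independent set meets each inserted clique in at most one vertex) and $\omega(\G')=\max_C\sum_{v\in C}p_v\le q$. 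The hypothesis now gives $\chi(\G')\le c\,\omega(\G')\le cq$, and the elementary inequality $|V(\G')|\le \chi(\G')\,\alpha(\G')$ yields $q\cdot\FCC(\G)\le cq\cdot\alpha(\G)$, as required. The two ingredients your approach is missing are (i) choosing the blow-up sizes from the dual solution rather than uniformly, and (ii) using $n\le\chi\cdot\alpha$ on the blown-up graph instead of reading off $\chi_F$ via the sub-additive limit.
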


Unit disk graphs are closed under clique-replacement, as we can just replace the circle corresponding to the vertex in the geometric representation by $k$ circles in the same location, and the resulting graph will have the single vertex replaced by a $k$-clique. Then since \cref{peeters_thm} shows $\chi(\G) \leq 3 \omega(\G)$ for any UDG $\G$, we can apply \cref{clique_replacement_thm} and get that
\begin{equation*}
\alpha(\G) \leq \Ind(\G) \leq \FCC(\G) \leq 3 \alpha(\G),
\end{equation*}
yielding the following result.

\begin{theorem}
Let $\G$ be a unit disk graph. Then
\begin{equation*}
\frac{\FCC(\G)}{\Ind(\G)} \leq 3.
\end{equation*}
\end{theorem}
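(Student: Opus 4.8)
The plan is to chain together the two cited theorems exactly as foreshadowed in the discussion preceding the statement. First I would verify that the family of unit disk graphs is closed under clique-replacement: given a geometric representation of a UDG $\G$ and a vertex $v$ with associated unit disk $d_v$, replacing $v$ by a clique of size $k$ amounts to placing $k$ unit disks all centered at the center of $d_v$; these $k$ new disks pairwise intersect (forming the desired clique), and each of them intersects exactly the same set of other disks that $d_v$ did. Hence the resulting graph is again a UDG, so the hypothesis ``$\mathcal{F}$ closed under clique-replacement'' of \cref{clique_replacement_thm} is satisfied by $\mathcal{F} = $ the class of unit disk graphs.

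Next I would invoke \cref{peeters_thm}: every unit disk graph $\G$ satisfies $\chi(\G) \le 3\omega(\G) - 2 \le 3\omega(\G)$, so the relation $\chi(\G) \le c \cdot \omega(\G)$ holds uniformly over the whole family with the constant $c = 3$. Applying \cref{clique_replacement_thm} with this $c$ then gives $\FCC(\G) \le 3\,\alpha(\G)$ for every unit disk graph $\G$.

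Finally I would combine this with the universal chain $\alpha(\G) \le \MAIS(\G) \le \Ind(\G) \le \FCC(\G)$ established in \cref{sec:defs}. Since $\alpha(\G) \le \Ind(\G)$, we get
\[
\frac{\FCC(\G)}{\Ind(\G)} \le \frac{\FCC(\G)}{\alpha(\G)} \le 3,
\]
which is the claim. (One should note $\Ind(\G)>0$ whenever $\G$ has a vertex, so the ratio is well-defined.)

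The substantive content here is entirely imported from the two external theorems; the only thing to get right in the paper itself is the closure-under-clique-replacement verification, which is the step I expect to be the ``main obstacle'' only in the sense that it must be stated carefully — it is geometrically immediate once one allows coincident disk centers. No delicate calculation is required.
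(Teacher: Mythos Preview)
Your proposal is correct and follows essentially the same approach as the paper: verify closure of UDGs under clique-replacement via coincident disks, apply Peeters' bound $\chi(\G)\le 3\omega(\G)$ together with \cref{clique_replacement_thm} to obtain $\FCC(\G)\le 3\alpha(\G)$, and conclude using $\alpha(\G)\le\Ind(\G)\le\FCC(\G)$.
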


As mentioned above, if we want the runtime of the approximation to be polynomial, we need to impose some additional restrictions on $\G$. In the special case that the graph contains no large clique, we can combine the bound of \cref{chromatic_ind_bound} with \cref{peeters_thm} to get an approximation of $\Ind(\G)$ in terms of the clique number.

\begin{theorem}
\label{clique_udg_bound}
If $\G$ is a unit disk graph with clique number at most $\omega(\G)$, then
\begin{equation*}
\frac{n - \FCP(\G)}{\Ind(\G)} \leq \omega(\G) + 1.
\end{equation*}
\end{theorem}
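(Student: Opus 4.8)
\emph{Proof proposal.} The plan is to combine the general index-coding estimate of \cref{chromatic_ind_bound} with Peeters' bound $\chi(\G) \le 3\omega(\G)-2$ (\cref{peeters_thm}), using the observation that after deleting a maximal set of vertex-disjoint triangles the remainder is triangle-free and hence $3$-colorable.

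First I would fix $T$ to be the vertex set of a maximal collection of $t = |T|/3$ vertex-disjoint triangles of $\G$ and set $\G' = \G\rvert_{V(\G)\setminus T}$. By maximality of $T$, the graph $\G'$ is triangle-free, and since an induced subgraph of a unit disk graph is again a unit disk graph, $\G'$ is a triangle-free disk graph; therefore $l \defeq \chi(\G') \le 3$ by the result of \cite{DBLP:journals/networks/MalesiskaPW98} (the same fact used to prove \cref{cor:disk_cap}). On the other hand, Peeters' theorem gives $\chi(\G) \le 3\omega(\G) - 2 \eqqcolon j$, so \cref{chromatic_ind_bound} applies to $\G$ with these values of $j$, $l$, and $t$ whenever $l \ge 2$.

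Next I would run the short case analysis on $l$. The elementary facts needed are that $t/n \le 1/3$ (since $T$ occupies $3t$ vertices) and that the coefficient $\tfrac{l-4}{2l-2}$ in \cref{chromatic_ind_bound} is negative for $l \in \{2,3\}$, so the term $-\,j\cdot\tfrac{l-4}{2l-2}\cdot\tfrac{t}{n}$ is nonnegative and maximized at $t/n = 1/3$. For $l = 2$ the bound of \cref{chromatic_ind_bound} reads $j\,t/n + 1 \le j/3 + 1$, and for $l = 3$ it reads $\tfrac{j}{4} + \tfrac{j}{4}\cdot\tfrac{t}{n} + \tfrac34 \le j/3 + 3/4$. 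Substituting $j = 3\omega(\G) - 2$ yields $\omega(\G) + \tfrac13$ and $\omega(\G) + \tfrac{1}{12}$ respectively, each at most $\omega(\G)+1$. If instead $l \le 1$, then $\G'$ is edgeless and I would invoke the variant estimate recorded just after \cref{chromatic_ind_bound}, namely $\tfrac{n-\FCP(\G)}{\Ind(\G)} \le \tfrac{tj}{n}+1 \le j/3 + 1 = \omega(\G) + \tfrac13 \le \omega(\G)+1$. In all cases $\tfrac{n-\FCP(\G)}{\Ind(\G)} \le \omega(\G)+1$, which is the claim.

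I do not anticipate a serious obstacle: \cref{chromatic_ind_bound} and \cref{peeters_thm} do the real work, and what remains is bookkeeping. The only points requiring a little care are that \cref{chromatic_ind_bound} is stated only for $l \ge 2$ (so the $\chi(\G') \le 1$ estimate must be pulled in separately), that $\G'$ really does inherit the unit-disk property so that triangle-free $3$-colorability applies to it, and that after substituting $j = 3\omega(\G)-2$ the residual additive slack stays strictly below $1$ in each case.
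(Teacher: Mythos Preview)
Your proposal is correct and follows essentially the same route as the paper: both arguments invoke \cref{chromatic_ind_bound} with $j = 3\omega(\G)-2$ from Peeters' theorem, use the $3$-colorability of triangle-free disk graphs to force $l \le 3$, and then carry out the same case split on $l$ (including the separate estimate for $l \le 1$). The only cosmetic difference is in the $l=2$ case, where you bound by $j/3 + 1 = \omega(\G) + \tfrac{1}{3}$ while the paper stops at the slightly weaker $\omega(\G)+1$; either suffices.
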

\begin{proof}
As before, let $T$ be a maximal set of vertex-disjoint triangles with $|T| = t$, and $\G' = \G \rvert_{V(\G) \setminus T}$. There are several cases depending on $\chi(\G') = l$. As triangle-free disk graphs are 3-colorable, we know that $l \leq 3$. If $l = 0$ or $l=1$, then plugging in from \cref{chromatic_ind_bound} we have
\begin{equation*}
\frac{n - \FCP(\G)}{\Ind(\G)} \leq \frac{\chi(\G)}{3} + 1 \leq \frac{3 \omega(\G) - 2}{3} + 1 = \omega(\G) + \frac{1}{3}.
\end{equation*}
If $l = 2$, we have
\begin{equation*}
\frac{n - \FCP(\G)}{\Ind(\G)} \leq (3 \omega(\G) - 2) \cdot \frac{t}{n} + 1 = 3 \omega(\G) \cdot \frac{t}{n} - 2 \cdot \frac{t}{n} + 1 \leq \omega(\G) + 1.
\end{equation*}
Finally, if $l = 3$ we have
\begin{equation*}
\frac{n - \FCP(\G)}{\Ind(\G)} \leq \frac{3 \omega(\G) - 2}{4} + \frac{3 \omega(\G) - 2}{4} \cdot \frac{t}{n} + \frac{3}{4} \leq \omega(\G) + \frac{1}{12}.
\end{equation*}
\end{proof}

Recall that $\lambda$-precision unit disk graphs are unit disk graphs with the extra constraint that in the geometric representation, every pair of disk centers are distance at least $\lambda$ from one another. We can use a geometric argument to translate this constraint into a bound on the size of the largest clique, and then apply the previous theorem to obtain an approximation of $\Ind(\G)$ for this graph family.

\begin{theorem}
Let $\G$ be a $\lambda$-precision unit disk graph. Then $\omega(\G) \leq \frac{64}{\lambda^2}$.
\end{theorem}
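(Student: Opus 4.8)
The plan is a standard packing/area argument in the plane. Fix a geometric representation witnessing that $\G$ is $\lambda$-precision: each vertex $v_i$ is realized by a disk center $c_i \in \mathbb{R}^2$, with $\lVert c_i - c_j \rVert \ge \lambda$ for all $i \ne j$, and two vertices are adjacent only if their unit disks overlap, i.e. $\lVert c_i - c_j \rVert \le 2$. Let $K \subseteq V(\G)$ be a clique of size $\omega(\G)$.

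First I would localize the clique. Picking any fixed $v_0 \in K$, every other vertex of $K$ is adjacent to $v_0$, so all the centers $\{c_i : v_i \in K\}$ lie in the closed disk $D$ of radius $2$ centered at $c_0$. Next, to each $v_i \in K$ associate the open disk $B_i$ of radius $\lambda/2$ centered at $c_i$. Since $\lVert c_i - c_j \rVert \ge \lambda$ for $i \ne j$, the $B_i$ are pairwise disjoint, and since each $c_i$ lies in $D$, each $B_i$ is contained in the disk $D'$ of radius $2 + \lambda/2$ about $c_0$. Comparing areas yields $\omega(\G)\cdot\pi(\lambda/2)^2 \le \pi(2+\lambda/2)^2$, hence $\omega(\G) \le (4/\lambda + 1)^2$.

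Finally I would clean up the constant: whenever $\lambda \le 4$ — which holds for any nontrivial $\lambda$-precision unit disk graph, since two unit disks can intersect only when their centers are within distance $2$ — we have $4/\lambda + 1 \le 8/\lambda$, so $(4/\lambda+1)^2 \le 64/\lambda^2$, giving the claimed bound and hence, via \cref{clique_udg_ind}, the statement of \cref{lambda_udg_ind}. There is no real obstacle here: the only things to pin down are the disk-overlap convention (closed versus open, radius $1$ versus diameter $2$; none of these affects the bound) and the elementary arithmetic converting $(4/\lambda+1)^2$ to $64/\lambda^2$. If one wished to optimize the constant, Jung's theorem would allow replacing the radius-$2$ disk $D$ by one of radius $2/\sqrt{3}$, but that refinement is unnecessary here.
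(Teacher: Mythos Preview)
Your argument is correct and, in fact, cleaner than the paper's. Both proofs end with the same area/packing step---place disjoint disks of radius $\lambda/2$ at the clique centers and compare total area to that of a containing disk---but they differ in how they localize the clique. The paper argues, via Descartes' circle theorem and a case analysis on how a fourth mutually intersecting disk can be added to three, that all the unit disks of the clique lie in a circle of radius $4$; then (using $\lambda/2\le 1$) the $\lambda/2$-disks also lie in that radius-$4$ circle, giving $\omega(\G)\le 16\pi/(\pi\lambda^2/4)=64/\lambda^2$ directly. You instead observe the much simpler fact that all clique \emph{centers} lie within distance $2$ of any fixed clique center, so the $\lambda/2$-disks sit inside a circle of radius $2+\lambda/2$; this yields the sharper intermediate bound $(4/\lambda+1)^2$, which you then relax to $64/\lambda^2$ for $\lambda\le 4$. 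Your route avoids Descartes' theorem entirely and actually gives a strictly better constant before the final relaxation; the paper's detour through the radius-$4$ circumscribing circle is unnecessary. The only caveat, shared by both arguments, is the implicit assumption that $\lambda$ is not absurdly large (your $\lambda\le 4$, the paper's $\lambda\le 2$), which is harmless since for $\lambda>2$ the graph is edgeless.
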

\begin{proof}
Suppose $\G$ contains a $k$-clique. Then by definition, the geometric representation must contain a set of $k$ pairwise intersecting unit disks. We claim that regardless of $k$, these disks can all be inscribed in a circle of radius at most 4. Clearly if $k = 2$, a circle of radius 2 suffices. When $k = 3$, the worst case is that each pair of circles touches at a single point, in which case Descartes' circle theorem tells us that the circle inscribing them has radius $1 + \frac{2\sqrt{3}}{3} < 4$.

Now suppose we already have three pairwise intersecting circles of radius 1, and consider the possible locations for a fourth circle of radius 1 which intersects all three circles pairwise. It must be the case that any point on the fourth circle is distance at most 4 from any point on any of the first three circles, otherwise they could not intersect. To say the fourth circle intersects the first circle is equivalent to saying that if we draw a circle of radius 2 centered at the center of the first circle, it must contain the center of the fourth circle. The same is true for the second and third circles, so the fourth circle center must lie within the intersection of three circles of radius 2 drawn centered on the first three circles. Any point in this intersection is distance at most 3 from any point on any of the first three circles (as the greatest distance between any point in the circle of radius 2 and the circle of radius 1 centered at the same point is 3), so if we draw a circle of radius 1 centered within this intersection, every point on that circle will be distance at most 4 to any point on any small circle. Thus if we draw a circle of radius 4 centered at any point of any small circle, it will contain not only all three original circles, but also every possible location for every fourth circle. Adding a fourth circle only decreases the set of possible locations for a fifth circle and so on, so this circle of radius 4 will in fact contain all $k$ circles for any $k$.

Now, as the unit disks are $\lambda$-precision, we can think of a smaller disk of radius $\frac{\lambda}{2}$ around the center of each circle, and it must be the case that any two such disks are disjoint (except possibly sharing a single point), otherwise the two unit disk centers would be at distance $< \lambda$ from one another. Thus since all unit disks in the same clique lie in a circle of radius at most 4, we can bound the size of the maximum clique by counting how many disks of radius $\frac{\lambda}{2}$ can be packed within such a circle.

The large circle has area $16 \pi$, and the small circles each have area $\pi \cdot \frac{\lambda^2}{4}$, so there can be at most
\begin{equation*}
16 \pi / (\pi \cdot \frac{\lambda^2}{4}) = \frac{64}{\lambda^2}
\end{equation*}
small circles packed within the large circle, and all such small circles must lie entirely within the large circle because they each have radius $\lambda/2 \leq 1$, and are each centered on a unit disk which lies within the large circle by construction. Thus this is an upper bound on the size of the largest clique in $\G$.
\end{proof}

\begin{corollary}
Let $\G$ be a $\lambda$-precision unit disk graph. Then
\begin{equation*}
\frac{\FCC(\G)}{\Ind(\G)} = \frac{n - \FCP(\G)}{\Ind(\G)} \leq \frac{64}{\lambda^2} + 1.
\end{equation*}
\end{corollary}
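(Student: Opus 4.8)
The plan is to obtain this corollary as an immediate consequence of the preceding theorem (which bounds the clique number of a $\lambda$-precision unit disk graph) together with \cref{clique_udg_bound}. The preceding theorem gives $\omega(\G) \le 64/\lambda^2$ for any $\lambda$-precision UDG $\G$; the key observation is simply that a $\lambda$-precision UDG is in particular a unit disk graph, so the hypotheses of \cref{clique_udg_bound} are met and it yields $\frac{n - \FCP(\G)}{\Ind(\G)} \le \omega(\G) + 1$. Chaining the two bounds produces $\frac{\FCC(\G)}{\Ind(\G)} = \frac{n - \FCP(\G)}{\Ind(\G)} \le \omega(\G) + 1 \le \frac{64}{\lambda^2} + 1$, which is the claim; here I also use the identity $\FCC(\G) = n - \FCP(\G)$ recorded in \cref{sec:defs}.

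Since the deduction is a one-line chain of inequalities, the substance resides in the two ingredients, and it is worth recalling where each comes from. The clique-number bound rests on a packing argument: all unit disks of a common clique can be inscribed in a circle of radius $4$ (fix three pairwise-intersecting disks and bound the locus of centers of any additional disk), and the $\lambda$-precision hypothesis forces the pairwise-disjoint disks of radius $\lambda/2$ centered at the clique's disk centers to lie inside that circle, so an area comparison caps the clique size at $16\pi/(\pi\lambda^2/4) = 64/\lambda^2$. The inequality $\frac{n-\FCP(\G)}{\Ind(\G)} \le \omega(\G)+1$ of \cref{clique_udg_bound} is in turn proved by a short case analysis on $l = \chi\!\left(\G \rvert_{V(\G)\setminus T}\right)$, where $T$ is a maximal set of vertex-disjoint triangles: triangle-free disk graphs are $3$-colorable so $l \le 3$, and in each case $l \in \set{0,1}$, $l=2$, $l=3$ one substitutes Peeters' bound $\chi(\G) \le 3\omega(\G)-2$ (\cref{peeters_thm}) into the estimates of \cref{chromatic_ind_bound}.

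There is essentially no obstacle to the corollary itself; the only thing to check is that the hypotheses line up, namely that "$\lambda$-precision unit disk graph" qualifies both as a "unit disk graph with clique number at most $\omega(\G)$" (trivially, taking $\omega(\G)$ to be its actual clique number) and as the object of the preceding theorem. If one additionally wants the polynomial-time algorithmic conclusion stated as \cref{lambda_udg_ind}, one observes that $64/\lambda^2$ is a constant upper bound on the clique number, so every clique of $\G$ — hence every constraint of the linear program whose value is $\FCC(\G)$ — can be enumerated in polynomial time and the LP solved efficiently, which is precisely the algorithmic half of \cref{clique_udg_ind}.
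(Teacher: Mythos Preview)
Your proposal is correct and matches the paper's intended argument: the corollary is stated without proof precisely because it follows immediately by combining the preceding theorem's bound $\omega(\G)\le 64/\lambda^2$ with \cref{clique_udg_bound}, exactly as you describe. The additional commentary you provide on the provenance of the two ingredients and the algorithmic remark is accurate and consistent with the paper.
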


It seems likely that the coefficient of $\lambda^{-2}$ could be made much smaller, by showing any $k$ pairwise intersecting unit disks can be inscribed in a circle of radius $<4$. Intuitively it seems a circle of radius $1 + \frac{2\sqrt{3}}{3}$  should suffice in the case of $k > 3$ circles just as it does for 3 circles, which would reduce the constant from $64$ to about $18.6$, but a more sophisticated geometric argument is needed.

It is shown in \cite{DBLP:journals/tcs/DiazK07} that for $\lambda > 1/\sqrt{2} \approx 0.707$, every $\lambda$-precision unit disk graph is planar, in which case \cite{DBLP:journals/corr/Mazumdar0V17} gives a 2-approximation of $\Ind(G)$, a significant improvement over the previous theorem. But the previous result is relevant for $\lambda \leq 1/\sqrt{2}$, where UDGs are not known to fall into any other easy-to-approximate graph family.

\section{Index Code Constructions for General Graphs}
\label{sec:IC_constructions}

In this section we provide the index code constructions for the schemes presented in \cref{sec:algo}.

\subsection{Achievability Scheme (Proof of \cref{thm:main})}\label{sec:achiev_sch}

We first describe an index coding scheme that achieves a broadcast rate equal to the optimal solution of the integer program version of the linear program in \cref{local_partial_IP}.

Assume without loss of generality  that $\cS_1 = [n_1],\; \cS_2 = [n_1+1,n_1+n_2],\; \ldots, \cS_t = [\sum_{j\in[t-1]}n_j+1,\sum_{j\in[t]}n_j]$ are the partial cliques selected. Let $k_j \triangleq k_{\cS_j}$. Assume that the optimum value of the integer program is $m$. Then $\max_j (k_j+1) \leq m \leq \sum_j (k_j+1)$. Let $k^j = \sum_{l=1}^j (k_l+1)$. Let $\Phi$ be a $[k^t,m]$-MDS matrix, such that $\Phi_j \triangleq \Phi_{[k^j+1,k^{j+1}]}$ represent submatrices of $\Phi$, and let $G_j$, $j \in [t]$ be $t$ distinct $[n_j,k_j+1]$-MDS matrices. Let 
\begin{equation}
\label{code_construction}
	[\bfu_1 \;\bfu_2 \;\dots \; \bfu_n]_{m\times n} \triangleq [\Phi_1 G_1 \; \Phi_2 G_2 \; \ldots \; \Phi_t G_t],	
\end{equation}
so that we assign vector $\bfu_i$ to vertex $v_i, i\in [n]$.

Without loss of generality consider a vertex $v_i$ in graph $\cG$ such that $i\in \cS_1$. Let 
\begin{equation}
P_j \triangleq \br{v_i\cup N(v_i,\overline\cG)} \cap \cal{S}_j, \;j\in [t]
\end{equation}
denote the data unknown to vertex $v_i$ in each of the selected partial cliques. Note that $P_1 \leq k_1+1$ and for any $j\geq 2$ such that $\abs{P_j} \geq k_j+1$, there exists a set of $P^\prime_j \subseteq P_j$ with the property that $\abs{P^\prime_j} = k_j+1$ and $\spn\br*{\bf{u}_{P_j}} = \spn\br*{\bf{u}_{P^\prime_j}}$. Let $P^\prime_j = P_j$ for $j$ with $\abs{P_j} \leq k_j+1$ and $P^\prime = \bigcup_j P^\prime_j$. If the vectors $\bfu_{P^\prime}$ are independent, then it is easy to see that $\bfu_i \not\in \spn\br*{\set{\bfu_j}_{j \in N(v_i,\overline \cG)}}  = \spn\br*{\bfu_{P\setminus i}}= \spn\br*{\bfu_{P^\prime\setminus i}}$. \Cref{lem:full_rank} shows that this is indeed that case, i.e. there exist constructions of matrices $\Phi$ and $G_j$ such that the vectors $\bfu_i$ satisfy the interference alignment criteria, $\bfu_i \not\in \spn\br*{\set{\bfu_j}_{j \in N(v_i,\overline \cG)}}$.

{\lemma \label{lem:full_rank} For any set of interfering nodes $\set{P_j}_{j\in [t]}$, there exist constructions of matrices $\Phi$ and $G_j$ over a field of size $O(n)$, such that the vectors $\bfu_{P^\prime}$ are independent.}

\begin{proof}
Let ${\tilde{G}}_j$ be any $(k_j+1)\times (k_j+1)$ submatrix  of $G_j$. Let the element in row $p$ and column $q$ of $G_j$ be $\mbox{$\tilde{G}_j$}_{p,q} = {\alpha_{j,p}}^{q-1}$, where $\alpha_{j,p}$ are non-zero elements in a field $\F_q$. First, we show that for a large enough field $\F_q$ there exist constructions of matrices $\Phi$ and $G_j$ such that $\hat{G}=[\Phi_1 {\tilde{G}}_1\;\;\Phi_2 {\tilde{G}}_2\;\;\cdots\;\;\Phi_t {\tilde{G}}_t]$ is an MDS matrix. Since $\bfu_{P^\prime}$ are a subset of the column vectors in $\hat{G}$ and $\sum_{j\in [t]} P^\prime_j \leq m$ by construction, the vectors $\bfu_{P^\prime}$ must be independent.

Let $\Phi = [\bf{v}_1\;\bf{v}_2\;\cdots\;\bf{v}_{k^t}]$ and let $\tilde\Phi_{\geq 2,s}$ denote any $m\times s$ sub-matrix of $\Phi_{\geq2} \triangleq [\Phi_2\;\cdots\;\Phi_t]$. Since $\Phi$ is MDS, $[\bf{v}_{i_1} \;\bf{v}_{i_2}\;\bf{v}_{i_r} \;\tilde\Phi_{\geq2,m-r}]$ must be full rank for all $\set{i_1, \ldots, i_r}\subseteq [k_1+1]$. Without loss of generality let $\set{i_1, i_2, \ldots, i_r}=[r]$.

For $\bfa \in \F_q^m$, consider the vector $\bf{w}\in\F_q^m$,
\begin{equation}\label{w_a}
	\bf{w} = [\bfv_1 \;\bfv_2\;\bfv_r \;\tilde\Phi_{\geq2,m-r}] \bfa
\end{equation}
such that 
$\bfa  = \begin{bmatrix}
    \bf{a}_{[r]}\\
    \bf{a}_{[r+1,m]}
\end{bmatrix}$
for $ \bfa_{[r]} \in \F_q^r$ and $\bf{a}_{[r+1,m]}\in \F_q^{m-r}$. We show that for any $\bfa \in \F_q^m$ there exist $\alpha_{1,i} \in \F_q, i\in [n_1]$ such that $\bf{w}$ can also be represented as a linear combination of column vectors in $G^\prime \triangleq [\Phi_1 H_1 \;\;\tilde\Phi_{\geq2,m-r}]$ where $H_1$ is a $(k_1+1)\times r$ submatrix of $\tilde G_1$ such that
  \begin{equation*}
H_1 = \begin{bmatrix}
  1 & 1 & \cdots &1  \\ 
  \alpha_{1,1} & \alpha_{1,2} & \cdots &\alpha_{1,r}  \\ 
  && \vdots \\
  {\alpha_{1,1}}^{k_1} & {\alpha_{1,2}}^{k_1} & \cdots &{\alpha_{1,r}}^{k_1}  \\ 
\end{bmatrix}.
\end{equation*}
We want to prove that for any $\bfa\in \F_q^m$ there exists $\bfd = \begin{bmatrix} \bf{d}_{[r]} \\ \bf{d}_{[r+1,m]} \end{bmatrix}$ such that
\begin{equation}\label{w_as_mds}
    \bf{w} = G^\prime\; \bfd
\end{equation}
for some $\bf{d}_{[r]} \in \F_q^r$ and $\bf{d}_{[r+1,m]}\in \F_q^{m-r}$.

Since $[\bf{v}_{1} \;\bf{v}_{2}\;\bf{v}_{r} \;\tilde\Phi_{\geq2,m-r}]$ is full rank, there must exist unique matrices ${B}_{[r]} \in \F_q^{r\times\br{k_1+1-r}}$, ${B}_{[r+1,m]} \in \F_q^{(m-r)\times\br{k_1+1-r}}$ such that
  \begin{equation}\label{v_r_1}
    [\bf{v}_{r+1}  \; \bf{v}_{r+2} \; \ldots\; \bf{v}_{k_1+1} ] =  [\bf{v}_1 \;\bf{v}_2\;\bf{v}_r \;\tilde\Phi_{\geq2,m-r}] \begin{bmatrix} {{B}_{[r]}} \\ {{B}_{[r+1,m]}} \end{bmatrix}. 
  \end{equation}
Thus, combining \cref{w_a,w_as_mds,v_r_1}, we have
  \begin{equation}\label{final_cond}
      [\bf{v}_1\; \bf{v}_2\;\cdots\;\bf{v}_r\; \tilde\Phi_{\geq2,m-r}]\cdot  \br*{\begin{bmatrix}
              [I_r\; B_{[r]}] \; H_1 \bfd_{[r]} \\
              B_{[r+1,m]} \tilde H_1 \bfd_{[r]} + \bf{d}_{[r+1,m]}
            \end{bmatrix} - \begin{bmatrix}
              \bf{a}_{[r]}\\
              \bf{a}_{[r+1,m]}
            \end{bmatrix}} = \bf{0},
  \end{equation}
  where
  \begin{equation*}
    \tilde H_1 = \begin{bmatrix}
      {\alpha_{1,1}}^{r+1} & {\alpha_{1,2}}^{r+1} & \cdots &{\alpha_{1,r}}^{r+1}  \\ 
      {\alpha_{1,1}}^{r+2} & {\alpha_{1,2}}^{r+2} & \cdots &{\alpha_{1,r}}^{r+2}  \\ 
      && \vdots \\
      {\alpha_{1,1}}^{k_1} & {\alpha_{1,2}}^{k_1} & \cdots &{\alpha_{1,r}}^{k_1}  \\ 
    \end{bmatrix}
  \end{equation*}
and $I_r$ denotes the $r\times r$ identity matrix. For the solution in \cref{final_cond} to exist for all $\bf{a}_r \in \F_q^r, \bf{a}_{r+1} \in \F_q^{k-r+1}$ we must have
$
    \det \br*{[I_r\; B_r] H_1} \ne 0,
$
  or equivalently
  \begin{align} \label{field_size_condition}
  \det\left[
\begin{matrix}
     \bf{g}({\alpha_{1,1}})\bf{b}_1 & \bf{g}({\alpha_{1,2}})\bf{b}_1 &\cdots    &      \bf{g}({\alpha_{1,r}})\bf{b}_1\\
     \alpha_{1,1}+\bf{g}({\alpha_{1,1}})\bf{b}_2 & \alpha_{1,2}+\bf{g}({\alpha_{1,2}})\bf{b}_2 &\cdots   & \alpha_{1,r}+\bf{g}({\alpha_{1,r}})\bf{b}_2\\
     &\vdots&\vdots & \vdots\\
     {\alpha_{1,1}}^{r-1}+\bf{g}({\alpha_{1,1}})\bf{b}_r & {\alpha_{1,2}}^{r-1}+\bf{g}({\alpha_{1,2}})\bf{b}_r &\cdots    & {\alpha_{1,r}}^{r-1}+\bf{g}({\alpha_{1,r}})\bf{b}_r\\
   \end{matrix} \right]
    \ne0
  \end{align}
  where $\bf{g}(\alpha)=[\alpha^{r} \;\alpha^{r+1}\; \cdots\; \alpha^{k_1}]$ and $B_{[r]} = \begin{bmatrix}
    \bf{b}_1\,\,
    \bf{b}_2 \,\,
    \cdots\,\,
    \bf{b}_r
  \end{bmatrix}^T.$

If we expand out the polynomial, the determinant in the left hand side of \cref{field_size_condition} has degree at most $k_1$ in each of the variables $\alpha_{1,1}, \alpha_{1,2},\ldots, \alpha_{1,r}$. Thus, by increasing the size of the field $\F_q$ we can make sure that there exist $\alpha_{j,i}$ for all $j\in [t]$ and $i \in [n_j]$ so that \cref{field_size_condition} holds for all submatrices $\tilde G_j$ and $\tilde H_j$. 

Now, repeating the above argument $t$ times we can say that $\left[\Phi_1 H_1 \;\Phi_2 H_2\;\Phi_3 H_3\;\cdots \;\Phi_t H_t\right] $ is MDS for all sets of submatrices $H_j \in \F_{q^r}^{(k_j+1)\times (k_j+1)}$ of $G_j$.

A loose upper bound on the (sufficient) field size is
\begin{equation}\label{loose_upper_bound}
 q \leq \max_{j\in[t]} \sum_{r=1}^{k_j+1} k_j {{n-n_j}\choose {m-r}} {{n_j}\choose {r-1}} +n_j.
\end{equation}

Note that in the above proof we do not need the matrix $\hat{G}$ to be MDS. Instead, we need only $n$ different subsets of column vectors of $G$ each of size at most $m$ to be linearly independent. Thus the upper bound on the size of the alphabet in \cref{loose_upper_bound} is very loose and it can be shown that an alphabet of size $O(n)$ suffices.
\end{proof}

To find the vector linear index code corresponding to the linear program in \cref{local_partial_IP}, we can modify the solution described above as follows.

Consider the optimal solution  $\rho_\cS^\star$ for the linear program in \cref{local_partial_IP}. Since all the coefficients of the  linear program in \cref{local_partial_IP} are integers, $\rho_\cS^\star$ must be rational. Assume that in the optimal solution to \cref{local_partial_IP} the partial cliques $\cS$ for which $\rho_\cS>0$ are $\cS_1,\; \cS_2,\; \ldots, \cS_t$, and $\abs{\cS_j}=n_j$. Let $k_j \triangleq k_{\cS_j}$ and $\rho_{S_j} = N_j/N$, for $N_j, N \in \mathbb{Z}^+$. Note that $\sum_{j=1}^t n_j N_j = N n$. Assume that the linear program gives an index coding rate $m^\star$.

Let $G_j, j\in [t]$ be an $[n_j,k_j+1]$-MDS matrix, and let $G_j^\prime \triangleq \sum_{l=1}^{N_j} Q_{ll}\otimes G_{j},$ where $Q_{rs}$ is an $N_j \times N_j$ matrix with the only nonzero entry being $Q_{rs}(r,s)=1$, and $\otimes$ denotes the matrix tensor product. Let $k^j \triangleq \sum_{l=1}^j N_j (k_l+1)$ and let $\Phi$ be a $[k^t,m^\star N]$-MDS matrix. Let $\Phi_j \triangleq \Phi_{\brac*{k^{j-1}+1,k^{j}}}$. Construct $[\Phi_1 G_1^\prime\; \Phi_2 G_2^\prime\; \cdots \Phi_t G_t^\prime]_{(N m^\star) \times (N n)}$ such that
\begin{equation}\label{code_construction_frac}
  [\bf{u}_{i,1}\cdots\bf{u}_{i,(n_i N_i)}] = \Phi_i G_i^\prime
\end{equation}
and assign $N_i$ vectors from $[\bf{u}_{i,1}\cdots\bf{u}_{i,(n_i N_i)}]$ to each of the vertices in $S_i$. Note that since each vertex  $v_i$ must satisfy $\sum_{S_j : S_j \ni v_i} N_j = N$, we are assigning $N$ vectors to each vertex. The interference alignment condition corresponding to vector linear index coding is similar to \cref{interference_align_condition}. In this case, since we assign multiple vectors to each vertex, we have the extra requirement that all vectors corresponding to each vertex must be independent, and that each vector assigned to a vertex is independent of all the vectors assigned to that vertex's non-neighbors. Denote by ${\bfv_{i,j}}_{j\in [N]}$ all the vectors assigned to vertex $v_i \in V(\cG)$. Therefore we have the following condition,
$$ \bfv_{i,p} \not\in \spn\br{\set{\bfv_{j,q}}_{q\in [N],j\in N(v_i,\overline\cG)} \cup \set{\bfv_{i,q}}_{q\in [N]\setminus p} }. $$
The argument that the aforementioned vector assignment satisfies this condition is similar to the argument in \cref{lem:full_rank}.

To achieve broadcast rate equal to the solution of the recursive linear program of \cref{local_partial_rec_lin_prog}, we can recursively use the scheme proposed above. More specifically, suppose that the matrices $G_1, G_2, \ldots, G_t$ represent the vector assignment satisfying the interference alignment criteria for subgraphs $\cG\vert_{\cS_1}, \cG\vert_{\cS_2}, \ldots, \cG\vert_{\cS_t}$; that is, column vectors of $G_j \in \F^{K_j \times \Theta(n_j)}$ are assigned to vertices in $\cG\vert_{\cS_j}$ corresponding to the linear program $IC_{FLP}(\cG\vert_{\cS_j})$. Let $\cS_1, \ldots, \cS_t$ be the selected subgraphs with positive weight $\rho_{\cS_j} = N_j / N$, $m^\star$ be the optimal index coding rate corresponding to the linear program $IC_{FLP}(\cG)$, and $\Phi$ be a $[\sum_j N_j K_j, m^\star N]$-MDS matrix such that $\Phi_j \triangleq \Phi_{[\sum_{l=1}^{j-1}N_j K_j+1, \sum_{l=1}^{j}N_j K_j]}$. Then the vector assignment for the graph $\cG$ would correspond to the column vectors in $[\Phi_1 G_1^\prime\; \Phi_2 G_2^\prime\; \ldots \Phi_t G_t^\prime]_{(N m^\star) \times (N n)}$, where $G_j^\prime \triangleq \sum_{l=1}^{N_j} Q_{ll}\otimes G_{j}$.

\begin{remark*}	[Codes with small alphabet size]
We note that instead of using a $[m,k+1]$-MDS matrix, the parity check matrix of any linear code of size $m$ and minimum distance $k+2$ would work. Thus, when restricted to using a small alphabet size (say $q$), we have the following upper bound on the size of the code using the Gilbert-Varshamov bound:
\begin{equation*}
m-\log_q A_q(m,\chi_l+2) = m-\log_q \br*{\frac{q^m}{\sum_{j=0}^{\chi_l+1} \binom{m}{j}(q-1)^j}} = \log_q \br*{\sum_{j=0}^{\chi_l+1} \binom{m}{j}(q-1)^j}.
\end{equation*}
\end{remark*}

\subsection{Index Code for $(k,n_1)$-GIC}\label{sec:gic}

In this section, we describe an index coding scheme based on a covering by the above type of graph, but first we present an important property of a $(k,n_1)$-GIC that allows us to construct such a scheme.

\begin{lemma} \label{lem:subtree} If a vertex $v \in V\setminus V_\rmI$ belongs to trees $T_i$ and $T_j$, $i\ne j$, then all the non-inner nodes on the subtree of $T_i$ rooted at $v$ also belong to $T_j$.
\end{lemma}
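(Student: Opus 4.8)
The plan is to extract one structural fact about the digraph $D$ and then read the lemma off from it. Two preliminaries will be used throughout. First, in each tree $T_a$ the inner vertices are exactly the root $v_a$ and the leaves: by construction the path in $T_a$ from $v_a$ to any inner leaf avoids all other vertices of $V_\rmI$, so an inner vertex of $T_a$ distinct from $v_a$ cannot be an internal node. Second, I will assume without loss of generality that every leaf of every $T_a$ is an inner vertex; a non-inner leaf together with its incoming edge can be deleted without affecting the properties of $D$ used below or the relevant subtrees.

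The key step is the following claim: \emph{if $v\notin V_\rmI$ satisfies $v\in V(T_a)$ for some $a$, and $v_c\in V_\rmI$ is joined to $v$ by a directed path in $D$ avoiding every other inner vertex, then (i) this path is unique, call it $\pi$; (ii) $v_c\in V(T_a)$, hence $v_c$ is a leaf of $T_a$; and (iii) $\pi$ lies in the subtree of $T_a$ rooted at $v$.} To prove it I would take the unique $T_a$-path $Q_v$ from $v_a$ to $v$ (its internal vertices non-inner, by the first preliminary) and follow it by any $v\to v_c$ path $P$ avoiding other inner vertices; call the resulting walk $W$. By condition $(a)$ of \cref{Digraph_properties}, $W$ is in fact a simple path and moreover $v_c\ne v_a$: a repeated vertex other than $v$, or coincident endpoints, would create a directed cycle all of whose vertices are non-inner (the only possible inner vertices of $W$ are its endpoints $v_a,v_c$, and these cannot lie on such a cycle), which condition $(a)$ forbids. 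So $W$ is a $v_a\to v_c$ path of $D$ avoiding other inner vertices, and condition $(b)$ of \cref{Digraph_properties} says there is at most one such, which pins $P$ down and gives (i). For (ii), suppose $v_c\notin V(T_a)$: let $\beta$ be the last vertex of $W$ lying in $V(T_a)$ (so $\beta\ne v_c$), and graft onto $T_a$ the segment of $W$ running from $\beta$ to $v_c$. All grafted vertices are new, so the result is again a directed rooted tree with root $v_a$, it is still contained in $D$ (so $D$ itself is unchanged), and it keeps every inner leaf of $T_a$ while gaining the new inner leaf $v_c$ (the only vertex whose leaf status can change is the non-inner $\beta$), contradicting the maximality of $T_a$. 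Hence $v_c$ is a leaf of $T_a$; the $T_a$-path from $v_a$ to $v_c$ is then a $v_a\to v_c$ path avoiding other inner vertices, so by condition $(b)$ it equals $W$, and since $W$ passes through $v$, the stretch of it from $v$ to $v_c$ — which is $\pi$ — is precisely the subtree path of $T_a$ from $v$ to the leaf $v_c$, giving (iii).

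Granting the claim, the lemma is short. Let $w\notin V_\rmI$ be a vertex in the subtree of $T_i$ rooted at $v$. By the normalization there is an inner leaf $v_c$ of $T_i$ below $w$, and the subtree path $R$ of $T_i$ from $v$ to $v_c$ passes through $w$ and has only non-inner internal vertices, so $R$ is a $v\to v_c$ path in $D$ avoiding every other inner vertex. Apply the claim with $T_a=T_j$ (legitimate because $v\notin V_\rmI$ and $v\in V(T_j)$): part (i) forces $R=\pi$, and part (iii) places $\pi$ in the subtree of $T_j$ rooted at $v$; in particular $w\in V(T_j)$.

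I expect part (ii) of the claim to be the crux: showing that an inner vertex reachable from $v$ along an inner-avoiding path must already belong to $T_a$. This is precisely where the maximality of the trees is used, and the delicate point is verifying that the grafted object is a legitimate directed rooted tree that strictly increases the number of inner leaves (while leaving $D$ intact). The surrounding steps — simplicity of concatenations via condition $(a)$, uniqueness via condition $(b)$ — are routine.
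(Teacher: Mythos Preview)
Your proof is correct and takes a cleaner, somewhat different route than the paper. The paper breaks the argument into three sub-lemmas --- first $L(v,T_i),L(v,T_j)\subseteq V_\rmI\setminus\{v_i,v_j\}$ via condition~(a), then $L(v,T_i)=L(v,T_j)$ via condition~(b) together with maximality, and finally $N(v,T_i)=N(v,T_j)$ via condition~(b) again --- after which the full statement is obtained by recursing down the children of $v$. You instead isolate a single intrinsic fact about $D$: for any non-inner $v\in T_a$, the inner vertices reachable from $v$ along inner-avoiding paths of $D$ are exactly the leaves below $v$ in $T_a$, and those (unique) paths constitute the subtree. This characterises the subtree independently of which $T_a$ one looks at, so the lemma follows in one stroke with no induction. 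The raw ingredients are identical --- condition~(a) to force concatenated walks to be simple, condition~(b) for path uniqueness, and tree-maximality for the grafting contradiction in your part~(ii) --- but your packaging is tighter and your key claim is slightly stronger (it applies to a single tree $T_a$ containing $v$, not a pair). One small remark: your second preliminary (pruning non-inner leaves) is really a clarification of the intended definition rather than a genuine reduction; the paper's own sub-lemmas tacitly assume every leaf of each $T_a$ is inner (e.g.\ in asserting $L(v,T_i)\subseteq V_\rmI$), so you are on the same footing there.
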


\begin{proof}

Denote the leaves of the subtree of the tree $T$ rooted at vertex $v \in V(T)$ as $L(v,T)$, and the leaves of the tree $T$ as $L(T)$. We prove the above claim in the following three lemmas.
{\lemma \label{subtree_lemma_prelim} If a vertex $v\not\in V_\rmI$ is such that $v\in V(T_i) \cap V(T_j),\; i\ne j$, then  $L(v,T_j), L(v,T_i)\subseteq V_\rmI \setminus \set{v_i,v_j}$.}
\begin{proof}
  Suppose that the vertex $ v_j\in L (v,T_i)$; then there exists a path from vertex $ v $ to $ v_j $ in the tree $ T_i $. However, in the tree $ T_j $, there is a path from vertex $ v_j $ to $ v $. Thus in the sub-digraph $ D $, there is a path from vertex $ v $ to $ v_j $ (via $ T_i $) and vice versa (via $ T_j $). As a result, there is a cycle in $D$ containing only the vertex $v_j$, contradicting $a)$ in \cref{Digraph_properties}. Hence $ v_j\notin L (v,T_i) $. In other words, $ L (v,T_i) \subseteq V_\rmI \setminus \{v_i,v_j\} $. Similarly, $ L (v,T_j) \subseteq V_\rmI \setminus \{v_i,v_j\} $. 
\end{proof}

{\lemma \label{subtree_lemma1} If a vertex $v\not\in V_\rmI$ is such that $v\in V(T_i) \cap V(T_j),\; i\ne j$, then  $L(v,T_i)=L(v,T_j)$.}
\begin{proof}
  From \cref{subtree_lemma_prelim}, $ L(v,T_i) $ is a subset of $ V_\rmI \setminus \{v_i,v_j\} $. Now pick a vertex $ v_c $ belonging to $ V_\rmI \setminus \{v_i,v_j\} $ such that $ v_c \in L(v,T_i)$ but $ v_c\notin L(v,T_j) $ (such a vertex exists since we suppose that $ L(v,T_i)\neq L(v,T_{j}) $). In tree $ T_i $, there exists a directed path from the vertex $ v_i $ which includes the vertex $ v $, and ends at the leaf vertex $ v_c $. Denote this path by $ P_{{v_i}\rightarrow v_c}(T_i)$. 

Now, suppose that in tree $ T_j $ there exists a directed path from the vertex $ v_j $ to the leaf vertex $v_c$ which doesn't include the vertex $ v $ (since $ v_c\notin L(v,T_j) $); denote this path by $ P_{{v_j}\rightarrow v_c}(T_j) $. However, in the digraph $ D $ we can also obtain a directed path from the vertex $ v_j $ which passes through the vertex $ v $ (via $ T_j$), and ends at the leaf vertex $ v_c $ (via $ T_i $), which we denote by $ P_{{v_j}\rightarrow v_c}(D) $. The paths $ P_{{v_j}\rightarrow v_c}(T_j) $ and $ P_{{v_j}\rightarrow v_c}(D)$ are different, and do not contain any other inner vertices. This contradicts condition $b)$ in \cref{Digraph_properties}.

Therefore, there cannot exist a path in tree $T_j$ from vertex $v_j$ to $v_c$, i.e. $v_c \not\in L(T_j)$. But since the tree $T_j$ must have maximum number of leaves in $V_\rmI$ and there exists a tree rooted at $v_j$ that has more leaves than $T_j$, this leads to a contradiction as well.
\end{proof}

{\lemma \label{subtree_lemma2} If a vertex $v\not\in V_\rmI$ is such that $v\in V(T_i) \cap V(T_j),\; i\ne j$, then the out-neighborhood of the vertex $v$ must be the same in both the trees, i.e. $N(v,T_i)=N(v,T_j)$.}
\begin{proof}
  Now we pick a vertex $ v_b $ such that, without loss of generality, $v_b\in N (v,T_i) $ but $v_b\not\in N(v,T_j) $ (such $ v_b $ exists since we assumed that $ N(v,T_i)\neq N(v,T_j) $). There are two cases for $ v_b $, which are $ v_b\in L(v,T_i)$ (case 1), and $ v_b\notin L(v,T_i)$ (case 2). Case 1 is addressed in \cref{subtree_lemma1}. For case 2, we pick a leaf vertex $ v_d \in L(v_b,T_i)$ such that there exists a path that starts from $ v $ followed by $ v_b $, and ends at $ v_d $, i.e., $ \langle v,v_b,\dotsc,v_d \rangle $ exists in $ T_i $. A path $ \langle v_j,\dotsc,v \rangle $ must exist in $ T_j $, thus a path $ \langle v_j,\dotsc,v,v_b,\dotsc,v_d \rangle $ exists in $D$. From the first part of the proof, we have $ L(v,T_i)=L(v,T_j)$, so $ v_d\in L(v,T_j) $. Now in $ T_j $, there exists a path from $ v_j $ to $ v_d $ which includes vertex $ v $ followed by a vertex $ v_e$ such that $ v_e\in N (v,T_j)$ and $v_e\neq v_b$ (as $ v_b\notin N (v,T_j) $), and furthermore the path ends at $ v_d $. The entire path is then $ \langle v_j,\dotsc,v,v_e,\dotsc,v_d \rangle $, which is different from $ \langle v_j,\dotsc,v,v_b,\dotsc,v_d \rangle $, so there exist distinct paths from $v_j$ to $v_d$ in $D$ that do not contain any other inner vertices, violating condition $b)$ in \cref{Digraph_properties}. Consequently, $ N(v,T_i)= N(v,T_j) $.  
\end{proof}

\end{proof}

Note that although \cref{lem:subtree} is similar to \cite[lemma 3]{thapa2015generalized}, it is different in that it applies to $(k,n_1)$-GICs in contrast to \cite{thapa2015generalized} which applies only to $(0,n_1)$-GICs.

Let $[\bf{v}_1 \;\bf{v}_2\;\cdots \;\bf{v}_{n_1}]$ be a $[n_1,k+1]$-MDS matrix. Then the broadcast symbols for the index code are:
 \begin{enumerate}
\item $\displaystyle\bfw_\rmI = \sum_{i: v_i \in V_\rmI} \bfv_i x_i$ .

\item $\bf{w}_j \in \F_q^{\min\set{\abs{N(v_j,D)},k+1}}$,  $\forall v_j \in V(\cG)\setminus V_\rmI$, where
    \begin{equation}\label{w_j}
      \bf{w}_j = 
                \begin{cases} 
                     \displaystyle\sum_{v_l \in N(v_j,D) \cap V_\rmI} \bf{v}_{l} (x_j+ x_l)  +  \sum_{v_l \in N(v_j,D) \setminus V_\rmI} \bf{u}_{l} (x_j + x_l) \hfill \text{ if } \abs{N(v_j,D)}\geq k+1 \\
                     \bf{1} (x_j) + \bf{x}_{N(v_j,D)} \hfill  \text{ if }\abs{N(v_j,D)}< k+1 \\
                \end{cases}
   \end{equation}
  where $\bf1\in \F_q^{\abs{N(v_j,D)}}$ denotes the all ones vector, $\bf{x}_{N(v_j,D)} \in \F_q^{\abs{N(v_j,D)}}$ denotes the input symbols corresponding to $N(v_j,D)$, and the vector $\bf{u}_{l} \in \F_q^{\min\set{\abs{N(v_j,D)},k+1}}$ is described in \cref{algo:select_u}.
\end{enumerate}

\begin{algorithm}[t]
  \KwData{trees $T_1, T_2, \ldots, T_n$}
  \KwResult{$\bf{u}_{j} \in \F_q^{\min\set{\abs{N(v_j,D)},k+1}}$  for $v_j \in V\setminus V_\rmI$ and $\bf{u}_j \in \F^{k+1}$ for $v_j \in V_\rmI$}
  $\bf{u}_{i} = \bf{v}_i$ for all $v_i \in V_{\rmI}$\\
  $S =  V\setminus V_\rmI$\\
    \While{$\abs{S}>0$}{
      Find a vertex $v_i\in S$ such that $N(v_i,D)\subseteq \overline{S}$ \\
      $\bf{u}_i = -\displaystyle\sum_{j:v_j\in N(v_i,D)}\bf{u}_{j}$\\
      $S = S\setminus \{v_i\}$
    }
  \caption{Selecting the vectors $\bf{u_j}, j\in V\setminus V_{\rm{I}}$.}
  \label{algo:select_u}
\end{algorithm}

Let us now prove that using the index coding scheme proposed above every vertex is able to decode the input symbols requested.

It is easy to see that all the non-inner vertices $v_j\in V\setminus V_{\mathrm{I}}$ can recover their data $x_j$. We show that $v_i \in V_{\rm{I}}$ can also recover $x_i$.  Define ${\bf{w}_j}^\prime$ corresponding to the transmitted vector $\bf{w}_j$ for $v_j \in V\setminus V_\rmI$ as
\begin{equation}
  {\bf{w}_j}^\prime = 
          \begin{cases} 
               \bf{w}_j \hfill &\text{if } \abs{N(v_j,D)}\geq k+1 \\
               [\bf{u}_{c_1}\; \bf{u}_{c_2}\; \cdots \;\bf{u}_{c_r}] \bf{w}_j \hfill & \text{if }\abs{N(v_j,D)}< k+1 \\
          \end{cases},
\end{equation}
where $\set{v_{c_1},v_{c_2},\ldots,v_{c_r}} = N(v_j,D)$. Denote by $T(v)$ the subtree rooted at vertex $v$ in tree $T$. For the non-inner children $v_j$ of vertex $v_i$ compute
\begin{align}
  \bf{w}(v_j) &= \sum_{v_l \in T_i(v_j)\setminus V_{\rm{I}}} \bf{w}^\prime_l \nonumber \\
  	&= \bfu_j x_j + \sum_{v_l \in T_i(v_j)\cap V_\rmI} \bfv_l x_l,
\end{align}
where the last equality follows from the construction of vectors $\bfu_l$ in \cref{algo:select_u} and \cref{lem:subtree}. Therefore, the terms in 
\begin{equation}\label{w_v_i}
	\bfw_I + \sum_{v_j \in N(v_i,D)\setminus V_\rmI} \bf{w}(v_j)
\end{equation} contain (at most) $k$ non-neighbors of vertex $v_i$ in the inner vertex set $V_\rmI$ and the terms $\sum_{v_j \in N(v_i,D)\setminus V_\rmI} \bfu_j x_j$ which are known to vertex $v_i$. Therefore, each vertex $v_i \in V_\rmI$ can compute $x_i$  from \cref{w_v_i}.

\begin{remark*}	
The local partial clique cover scheme considers the maximum number of partial cliques in the one-hop neighborhood of any vertex. We could similarly consider the maximum number of Generalized Interlinked Cycles in the neighborhood of a vertex. Such a scheme would combine all of the schemes presented in this paper.
\end{remark*}

\section{Directions for Further Research}
\label{sec:further}

While we have been able to show improvement in the approximation factors of both the storage capacity and index coding rate for some particular graph families, there are still questions remaining. We have also observed that it seems in general to be much harder to obtain good approximations of the index coding rate of a graph than its storage capacity, despite the fact that finding optimal solutions to the two problems is equivalently hard. In the most general case, the situation for index coding seems bleak -- it is not even known how to obtain an $O(n^{1-\epsilon})$ approximation for any $\epsilon > 0$, whereas a simple 2-approximation for the storage capacity is known. Any result either improving this approximation further or showing APX-hardness for $\Cap(\G)$ would be very interesting.

One of the primary difficulties in finding good approximations for index coding rate seems to be the lack of tools for analyzing more complicated coding schemes. Almost every result in this paper that gives a provable guarantee about index coding rate works simply by using $\FCC(\G)$ or something strictly weaker as our approximation, though we use many different means to bound the quality of the approximation. As we have seen, there are many better schemes than $\FCC(\G)$ available, such as the schemes presented in \cref{sec:IC_constructions}, but the greater complexity of these schemes seems to make the analysis much more difficult.

One of our results in particular seems as if it should be improvable with a more sophisticated analysis; recall that in order to show $\FCC(\G)$ is a $\frac{\chi(\G)}{2}$-approximation of $\Ind(\G)$, we demonstrate the chain of inequalities
\begin{equation*}
\frac{2}{\chi(\G)} \cdot \alpha_{F_2}(\G) \leq \alpha(\G) \leq \alpha_{F_n}(\G) = \FCC(\G) \leq \alpha_{F_{n-1}}(\G) \leq \cdots \leq \alpha_{F_2}(\G),
\end{equation*}
effectively showing $\alpha_{F_2}(\G)$ is a $\frac{\chi(\G)}{2}$-approximation, and thus $\FCC(\G)$ must be at least as good. In general, if the graph is dense, it may not be feasible to compute $\alpha_{F_n}(\G) = \FCC(\G)$, but for any fixed constant $k \leq n$ we can efficiently compute $\alpha_{F_k}(\G)$, which must still be a better approximation than $\alpha_{F_2}(\G)$. For example, if we restrict to considering outerplanar $\G$, our result tells us the integrality gap between $\alpha(\G)$ and $\alpha_{F_2}(\G)$ is at most $\frac{3}{2}$, and this is tight, as we can take $\G$ to be a triangle which has $\alpha_{F_2}(\G) = \frac{3}{2}$. If we move instead to $\alpha_{F_3}(\G)$, we gain another constraint in the LP which says the sum of the variables on any triangle must be at most 1, so clearly then the triangle has no integrality gap for $\alpha_{F_3}$. In fact, the worst gap we are aware of for any outerplanar graph using $\alpha_{F_3}$ is $\frac{5}{4}$, by taking $\G$ to be a 5-cycle, which is triangle-free and so has $\alpha_{F_2}(\G) = \alpha_{F_n}(\G) = \FCC(\G)$. So it is clear that we will not obtain a PTAS just by moving from $\alpha_{F_2}$ to $\alpha_{F_n}$ even for outerplanar graphs, but it seems very plausible that the approximation factor could be improved beyond $\frac{3}{2}$ by a more sophisticated analysis of the integrality gap here. There is nothing particularly special about outerplanar graphs either; a similar phenomenon seems to hold for other graph families as well. With planar graphs, for instance, the only obvious example attaining integrality gap 2 seems to be a 4-clique, which would have no gap if we used $\alpha_{F_4}$ as our approximation instead of $\alpha_{F_2}$.

In general this sequence of LPs, often referred to as ``maximum independent set with clique constraints,'' is well studied, and one might hope that some of this body of work could be leveraged to help approximate the index coding rate. For instance, Lov\'asz, while trying to approximate a different parameter $\Theta(\G)$, the ``Shannon capacity'' of $\G$, demonstrated a semidefinite program with solution referred to as the ``Lov\'asz theta function'' $\theta(\G)$, with the property that
\begin{equation*}
\alpha(\G) \leq \Theta(\G) \leq \theta(\G) \leq \alpha_{F_n}(\G) \leq \cdots \leq \alpha_{F_2}(\G),
\end{equation*}
and since the semidefinite program can be solved efficiently, we can actually compute $\theta(\G)$ efficiently \cite{DBLP:journals/tit/Lovasz79}. Unfortunately, it is not true in general that $\Ind(\G) \leq \theta(\G)$, so it is not obvious how to leverage these results. Another potential technique with similar issues would be to use an established LP hierarchy for strengthening LP solutions towards integral ones, such as the Sherali-Adams hierarchy, instead of strengthening the LP by moving from $\alpha_{F_2}(\G)$ to $\alpha_{F_k}(\G)$ for $k > 2$. This has worked in the past for some similar problems, such as maximum independent set on planar graphs, where the SA hierarchy yields a PTAS \cite{DBLP:conf/approx/MagenM09}. However there is a similar issue to that with the Lov\'asz theta function, where (at least for some graphs) at a certain level of the hierarchy the strengthened LP ceases to be an upper bound on $\Ind(\G)$.

Another direction considered in this paper was to investigate whether we could obtain good approximations for disk graphs or unit disk graphs, as these are often thought to be good models of certain types of real world networks where connections are based on some notion of proximity. While we were successful in improving the approximations for $\Cap(\G)$ and $\Ind(\G)$ on these types of graphs, we resorted to using approximations which may not be computable in polynomial time. For $\Cap(\G)$ we can always resort to the efficient 2-approximation instead, but for $\Ind(\G)$ no efficient constant-factor approximation is known for UDGs.

The primary methods used to get good approximations of other graph parameters for disk graphs rely on divide-and-conquer approaches, where the geometric representation is split into some number of pieces depending on how good of an approximation is needed, and some small portions of the representation which span multiple pieces are ignored. For packing problems like maximum independent set this works well, as any feasible solution on an induced subgraph remains feasible on the whole graph. Index coding is in this sense more like a covering problem though, where adding vertices to a graph causes previously feasible solutions to become infeasible. In general, understanding exactly how $\Ind(\G)$ varies when $\G$ has a small number of vertices or edges added or removed seems like a very difficult problem, which makes approximating $\Ind(\G)$ by divide-and-conquer approaches challenging. Even if we restrict the encoding functions to be linear, only some basic results in this direction are known, and if the functions are allowed to be nonlinear it seems even more difficult \cite{DBLP:conf/isit/BerlinerL11}. If one could show some slightly stronger results about how $\Ind(\G)$ changes under small changes to $\G$, it would likely be enough to attain good approximations for certain graph classes, such as general disk graphs, or graphs with bounded tree-width.

\bibliographystyle{plain}
\bibliography{references}

\end{document}